\def \PP{\mathcal{P}}
\def \EE{\mathbb{E}}
\def \erre{\mathbb{R}}
\def \MCo{MC}
\def \SCo{SC}
\def \RM{\mathtt{ERM}^{(\pi,\vec v)}}
\newcommand{\floor}[1]{\left\lfloor #1 \right\rfloor}
\newtheorem{mechanism}{Mechanism}
\newtheorem{theorem}{Theorem}
\newtheorem{corollary}{Corollary}
\newtheorem{lemma}{Lemma}
\newtheorem{example}{Example}
\DeclareMathOperator*{\argmin}{arg\,min}
\def \PP{\mathcal{P}}
\def \erre{\mathbb{R}}
\def \enne{\mathbb{N}}
\begin{document}

\title{From Optimal Transport to Efficient Mechanisms for the $m$-Capacitated Facilities Location Problem with Bayesian Agents\thanks{A preliminary version of this paper appeared in the proceedings of the International Conference on Autonomous Agents and Multiagent Systems \cite{auricchio2023extended}. }}

\author{\name Gennaro Auricchio \email gennaro.auricchio@unipd.it \\ \addr Università degli Studi di Padova, Via Trieste 63, Padua, 35131, Italy
\AND
       \name Jie Zhang \email jz2558@bath.ac.uk \\
       \addr University of Bath, Claverton Down, Bath, BA2 7AY, United Kingdom
       \AND
       \name Mengxiao Zhang \email mengxiao.zhang@durham.ac.uk \\
       \addr Durham University, Stockton Road, Durham, DH1 3LE, United Kingdom
       %\AND
       %\name - \email laird@ptolemy.arc.nasa.gov \\
       %\addr NASA Ames Research Center,
       %AI Research Branch, Mail Stop: 269-2,\\
       %Moffett Field, CA  94035 USA
       }

% For research notes, remove the comment character in the line below.
% \researchnote

\maketitle

\begin{abstract}
In this paper, we study of the $m$-Capacitated Facility Location Problem ($m$-CFLP) on the line from a Bayesian Mechanism Design perspective and propose a novel class of mechanisms: the \textit{Extended Ranking Mechanisms} (ERMs).
We first show that an ERM is truthful if and only if it satisfies a system of inequalities that depends on the capacities of the facilities we need to place.
We then establish a connection between the $m$-CFLP and a norm minimization problem in the Wasserstein space, which enables us to show that if the number of agent goes to infinity the limit of the ratio between the expected Social Cost of an ERM and the expected optimal Social Cost is finite and characterize its value.
Noticeably, our method generalizes to encompass other truthful mechanisms and other metrics, such as the $l_p$ and Maximum costs.
We conclude our theoretical analysis by characterizing the optimal ERM tailored to a $m$-CFLP and a distribution $\mu$, that is the ERM whose limit Bayesian approximation ratio is the lowest compared to all other feasible ERMs.
We consider mainly two frameworks: \begin{enumerate*}[label=(\roman*)] 
\item in the first framework, the total facility capacity matches the number of agents,
\item in the second framework, $m=2$. When we consider the Maximum Cost, we retrieve the optimal ERM for every $\mu$, while for the Social Cost, we characterize the solution when the measure $\mu$ is symmetric.
\end{enumerate*}
Lastly, we numerically compare the performance of the ERMs against other truthful mechanisms and evaluate how quickly the Bayesian approximation ratio converges to its limit.
% 
% In summary, all our findings highlight that a well-tuned ERM consistently outperforms all other known mechanisms, making it a valid choice for solving the $m$-CFLP within a Bayesian framework.
\end{abstract}

\section{Introduction}
\label{Introduction}

Mechanism Design seeks to establish protocols for aggregating the private information of a group of strategic agents to optimize a global social objective. 
Nonetheless, optimizing a communal goal solely based on reported preferences often leads to undesirable manipulation driven by the agents' self-interested behaviour. 
Hence, a key property that a mechanism should possess is \textit{truthfulness}, which ensures that no agent can gain an advantage by reporting false information. 
Unfortunately, this stringent condition is often incompatible with the optimization the communal objective, hence implementing a truthful mechanism necessarily leads to suboptimal outcomes.
To quantify the efficiency loss entailed by a truthful mechanism, Nisan and Ronen introduced the concept of \textit{worst-case approximation ratio}, which represents the maximum ratio between the objective achieved by the truthful mechanism and the optimal objective attainable across all possible agents' reports \cite{nisan1999algorithmic}.
Therefore, the lower the worst-case approximation ratio of a mechanism is, the better.
Ever since this pioneering work, Mechanism Design has been used to approach many key problems in Social Choice Theory, such as one-sided matching \cite{10.1257/000282802762024728,manlove2013algorithmics,auricchio2023manipulability}, two-sided matching \cite{cho2022two,balinski1999tale,kojima2019new}, auction design \cite{archer2004approximate,chu2008truthful,dobzinski2006truthful}, and scheduling \cite{lavi2007truthful,christodoulou2023proof}.

A prominent problem in Mechanism Design is the $m$-Facility Location Problem ($m$-FLP) \cite{brandt2016handbook,chan2021mechanism}.
In its fundamental guise, the $m$-FLP involves locating $m$ facilities amidst $n$ self-interested agents.
Each agent requires access to a facility, making it preferable for them to have a facility located as close as possible to their position. 
Furthermore, each facility can accommodate any number of agents, thus agents can select which facility to use devoid of any concern about possible overloads.
The study of $m$-FLP from an algorithmic mechanism design viewpoint was initiated by Procaccia and Tennenholtz in \cite{procaccia2013approximate}.
When $m=1,2$ there are several truthful mechanisms, such as the median mechanism \cite{black1948rationale} and its generalizations \cite{barbera2001strategy,moulin1980strategy}, that achieve small constant worst-case approximation ratios.
When $m> 2$, however, these efficiency guarantees are much more negative.
Indeed, it is well-known that no truthful mechanism has a bounded approximation ratio while being deterministic and anonymous  \cite{fotakis2014power}.
It is worthy of notice however, that this impossibility result does not apply to randomized mechanisms \cite{fotakis2013strategyproof} and to instances where the number of agents is equal to the number of facilities plus one \cite{escoffier2011strategy}.
A natural extension of the $m$-FLP is the $m$-Capacitated Facility Location Problem ($m$-CFLP), in which every facility has a capacity limit.
The capacity limit constraints the amount of agents that the facility can serve.
In this case, the solution does not only elicit the positions of the facilities, but it also specifies to which facility every agent is assigned to, ensuring that no facility is overloaded.
To the best of our knowledge, there are very few works that analyzed the Mechanism Design aspects of the $m$-CFLP.
Moreover, all the existing results are conducted in the classic worst-case analysis, where the designer has no information on the agents and therefore aims to define mechanisms that work well on every possible input, regardless of the likelihood of the input. 
This type of analysis is too pessimistic and gives little insight into how to select a mechanism tailored to a specific task.
For example, there are currently no mechanisms capable of locating more than two capacitated facilities while achieving a finite approximation ratio.
Indeed, it seems that most of the negative results known for the $m$-FLP hold for the $m$-CFLP as well.
Furthermore, even if we restrict our attention to the case where $m=2$, the approximation ratio of all the known truthful mechanisms depends linearly on the number of agents \cite{aziz2020facility}, making these efficiency guarantees less meaningful as the scale of the problem increases.
In this paper, we show that these issues do not arise when we study the $m$-CFLP from a Bayesian Mechanism Design perspective.
In Bayesian Mechanism Design, every agent's location is a random variable whose law is known to the designer \cite{hartline2010bayesian} thus the scope of the mechanism is not to minimize the cost in the worst possible case but rather to define truthful routines that work well in expectation.
A previous version of this paper appeared in the proceedings of the International Conference on Autonomous Agents and Multiagent Systems \cite{auricchio2023extended}. 
In this improved version, we extend our results to the Maximum Cost and the $l_p$ costs and generalize the techniques to retrieve the optimal ERM to a wider class of problems.
In particular, we provide a sufficient condition that characterizes the optimal ERM with respect to the Social Cost for any measure and retrieve an algorithm to compute the optimal ERM for any probability measure with respect to the Maximum Cost.
Moreover, we run a comprehensive set of experiments that evaluates the performances of the mechanisms with respect to the Maximum and $l_p$ costs and their convergence rates.
% 
% A previous version of this paper appeared in the proceedings of International Conference on Autonomous Agents and Multiagent Systems \cite{auricchio2023extended}.
% % 
% In this improved version, we extend our results to the Maximum Cost and the $l_p$ costs and generalize the techniques to retrieve the optimal ERM to a wider class of problems.
% % 
% Moreover, we run a comprehensive set of experiments that evaluates the performances of the mechanisms with respect to the Maximum and $l_p$ costs.

\subsection{Our Contribution and Structure of the Paper}

Our contribution is as follows.
% \footnote{}
% 
In Section~\ref{sec:ERM}, we define and study the Extended Ranking Mechanisms (ERMs), a generalization of the class of Ranking Mechanisms introduced in \cite{aziz2020facility}.
Our extension allows to tackle a broader framework in which the combined capacity of the facilities surpasses the total amount of agents. 
Moreover, a Ranking Mechanism is truthful if and only if it places all the facilities at one or two different locations, while a ERM does not necessarily require such constraint to ensure truthfulness.
In Section \ref{sec:analysis}, we study the ERMs in a Bayesian framework, thus every agent's position is represented by a set of i.i.d. random variables.
Our investigation reveals that $m$-CFLP is equivalent to a norm minimization problem over a subset of the Wasserstein space.
By leveraging the properties of the Wasserstein distance, we then establish the convergence of the Bayesian approximation ratio, i.e., the ratio between the expected cost incurred by the mechanism and the expected optimal cost, as the number of agents tends toward infinity. 
We show that the limit Bayesian approximation ratio depends on the probability distribution $\mu$ describing the agents, the vector $\vec q$ determining the capacities of the facilities, and the characteristics of the mechanism.
We then characterize the set of condition under which our results extended to the Maximum and $l_p$ costs.
Finally, we show that our techniques allow to study the Bayesian approximation ratio of previously introduced mechanisms.
In Section \ref{sec:para}, we identify the optimal ERM tailored to the $m$-CFLP whose agents are distributed according to $\mu$ given the capacities determined by $\vec q$.
First, we show that an optimal ERM exists and characterize it as a solution to a minimization problem.
We then narrow our focus on two scenarios: \textit{no-spare capacity} framework in which the total capacity of the facilities matches the total number of agents, and the $2$-CFLP.
In particular, we fully characterize the optimal ERM for a uniformly distributed population.
Finally, in Section \ref{sec:experiment}, we validate our findings through extensive numerical experiments. 
In particular, we compare the performances of the ERMs with the performances of other truthful mechanisms, such as the InnerGap Mechanism \cite{ijcai2022p75} and the Extended Endpoint Mechanism \cite{aziz2020facility}.
From our experiments, we observe that a well-tuned ERM outperforms all the other mechanisms whenever $n\ge 20$.
Furthermore, the limit Bayesian approximation ratio is a reliable estimation of the Bayesian approximation ratio of the ERM when the number of agents is greater than $20$.

\subsection{Related Work}

The $m$-Facility Location Problem ($m$-FLP) and its variations are significant issues in many practical domains, such as disaster relief \cite{doi:10.1080/13675560701561789}, supply chain management \cite{MELO2009401}, healthcare \cite{ahmadi2017survey}, clustering \cite{hastie2009elements}, and public facilities accessibility \cite{barda1990multicriteria}.
Procaccia and Tennenholtz initially studied the Mechanism Design study of the $m$-FLP, laying the groundwork for this field in their pioneering work \cite{procaccia2013approximate}.
Following that, mechanisms with constant approximation ratios to place one or two facilities on trees \cite{DBLP:conf/sigecom/FeldmanW13}, circles \cite{DBLP:conf/wine/LuWZ09}, general graphs \cite{DBLP:conf/sigecom/DokowFMN12}, and metric spaces \cite{anshelevich2017randomized,DBLP:conf/sigecom/TangYZ20} were introduced.
Notice however that all these positive results are confined to the case in which we have at most two facilities to place and/or the number of agents is limited.
Moreover, different works generalize the initial framework proposed in \cite{procaccia2013approximate}, by considering different agents' preferences \cite{peters2020preferences,MEI201846}, different costs \cite{feigenbaum2017approximately}, and additional constraints \cite{DBLP:conf/aaai/Filos-RatsikasL15,feldman2016voting,xu2021two}.
In this paper, we analyze the $m$-Capacitated Facility Location Problem ($m$-CFLP), a variant of the $m$-FLP in which each facility can accommodate a finite number of agents.
The Mechanism Design aspects of the $m$-CFLP have only recently begun to attract attention.
Indeed, the game theoretical framework for the $m$-CFLP that we consider was introduced in \cite{aziz2020facility}.
This work defined and studied various truthful mechanisms, like the InnerPoint Mechanism, the Extended Endpoint Mechanism, and the Ranking Mechanisms.
A more theoretical study of the problem was then presented in \cite{ijcai2022p75}, demonstrating that no mechanism can position more than two capacitated facilities while adhering to truthfulness, anonymity, and Pareto optimality.
Additionally, another paper dealing with Mechanism Design aspects of the $m$-CFLP is \cite{aziz2020capacity}.
However, it explores a different framework where only one facility needs to be placed and is unable to serve all agents.
To the best of our knowledge, all the results on the $m$-CFLP concerned the classic Mechanism Design framework that evaluates the performances of the mechanism based on worst-case analysis. 
In this paper, we consider an alternative approach: the Bayesian Mechanism Design perspective.
Unlike traditional Mechanism Design, where the designer lacks information about agent types, in the Bayesian Mechanism Design framework each agent's type follows a known probability distribution \cite{hartline2013bayesian,chawla2014bayesian}.
In this framework, we are able to determine a probability distribution over the set of all the possible inputs of the mechanism, enabling us to consider the expected cost of a mechanism.
Bayesian Mechanism Design has been applied to investigate routing games \cite{gairing2005selfish}, facility location problems \cite{zampetakis2023bayesian}, combinatorial mechanisms using $\epsilon$-greedy mechanisms \cite{lucier2010price}, and auction mechanism design \cite{chawla2007algorithmic,yan2011mechanism}.
% \cite{chawla2009sequential,chawla2007algorithmic,hartline2009simple,yan2011mechanism,ensthaler2014bayesian}.
%

% 
Over the past few decades, Optimal Transport (OT) has found application in Theoretical Computer Science related fields, such as Computer Vision \cite{Rubner2000,Pele2009}, Computational Statistics \cite{Levina2001}, Clustering \cite{auricchio2019computing}, and Machine Learning \cite{scagliotti2023normalizing,frogner2015learning,scagliotti23,Cuturi2014}.
However, the only field related to mechanism design that has been explored using OT theory is auction design \cite{daskalakis2013mechanism}.
In their work, the authors demonstrated that the optimal auction mechanism for independently distributed items can be characterized by the Dual Formulation of an OT problem.
Moreover, they utilized this relationship to derive the optimal mechanism for various item classes, thereby establishing a fruitful application of OT theory in the context of mechanism design.
% 

%%%%%%%%%%%%%%%%%%%%%%%%%%%%%%%%%%%%%%%%%%%%%%%%%%%%%%%%%%%%%%%%%%%%%%%%

\section{Preliminaries}
\label{sec:preliminaries}

In this section, we fix the notations on the $m$-CFLP, Bayesian Mechanism Design, and Optimal Transport (OT). 
\paragraph{The $m$-Capacitated Facility Location Problem.}

Given a set of self-interested agents $[n]:=\{1,\dots,n\}$, we denote with $\vec x \in \erre^n$ the vector containing the agents' positions.
Likewise, given $m\in\mathbb{N}$, we denote with $\vec c\in \mathbb{N}^m$ the vector containing the capacities of the facilities, namely $\vec c=(c_1,\dots,c_m)$.
In this setting, a facility location is defined by three objects:
\begin{enumerate*}[label=(\roman*)]
    \item a $m$-dimensional vector $\vec y=(y_1,\dots,y_m)$ whose entries are $m$ positions on the line, 
    \item a permutation $\pi\in\mathcal{S}_m$ that decides the capacity of the facility built at $y_j$, so that $c_{\pi(j)}$ is the capacity of the facility built at $y_j$, and
    \item a matching $\Gamma\subset [n]\times [m]$ that determines how the agents are assigned to facilities, i.e. $(i,j)\in\Gamma$ if and only if the agent at $x_i$ is assigned to $y_j$. Due to the capacity constraints, the degree of vertex $j\in[m]$ according to $\Gamma$ is at most $c_{\pi(j)}$. Similarly, every agent is assigned to only one facility, thus the degree of every $i\in[n]$ according to $\Gamma$ is $1$.
\end{enumerate*} 
Given the positions of the facilities $\vec y$ and a matching $\Gamma$, we define the cost of an agent positioned in $x_i$ as $d_{i,\Gamma}(x_i,\vec y)=|x_i-y_j|$, where $(i,j)$ is the unique edge in $\Gamma$ adjacent to $i$.
Given a matching $\Gamma$ and a permutation $\pi \in \mathcal{S}_m$, a cost function is a map $C_\Gamma:\erre^n\times \erre^m\to [0,+\infty)$ that associates to $(\vec x , \vec y)$ the overall cost of placing the facilities at the positions in $\vec y$ following the permutation $\pi$ and assigning the agents positioned at $\vec x$ according to $\Gamma$.\footnote{In what follows, we omit $\Gamma$ from the indexes of $d$ and $C$ when it is clear from the context which matching we are considering.}
Given a vector $\vec x\in\erre^n$ containing the agents' positions, the $m$-\textit{Capacitated Facility Location Problem} with respect to the cost $C$, consists in finding the locations for $m$ facilities, a permutation $\pi$, and the matching $\Gamma$ that minimize the function $\vec y \to C(\vec x,\vec y)$.
The most studied cost function is the \textit{Social Cost} ($\SCo$), which is defined as the sum of all the agents' costs.
Since multiplying the cost function by a constant does not affect the approximation ratio results, throughout the paper we set
% consider the Social Cost rescaled by the total number of agents, that is 
\[
\SCo(\vec x, \vec y)=\frac{1}{n}\sum_{i\in[n]}d_{i,\Gamma}(x_i,\vec y).
\]
Albeit throughout most of the paper the results will be stated for the Social Cost, we will also show that the techniques we propose can be extended to the \textit{Maximum Cost} (MC), which is defined as $$\MCo(\vec x, \vec y):=\max_{i\in [n]}d_{i,\Gamma}(x_i,\vec y),$$
% \[
% \MCo(\vec x, \vec y):=\max_{i\in [n]}d_{i,\Gamma}(x_i,\vec y),
% \]
and the \textit{$l_p$ cost}, which are defined as 
\[
C_{l_p}(\vec x, \vec y):=\Big(\frac{1}{n}\sum_{i=1}^{n}d_{i,\Gamma}(x_i,\vec y)^p\Big)^{\frac{1}{p}}, \quad \quad p\ge 1.
\]
Notice that both the Social Cost and the Maximal Cost are a special case of the $l_p$ costs \cite{feigenbaum2017approximately}.
Indeed, the Social Cost corresponds to the $l_p$ cost when $p=1$, while the Maximum Cost correspond to the $l_p$ cost for $p=\infty$.

\paragraph{Mechanism Design and the Worst-Case Analysis.}

An \textit{$m$-facility location mechanism} is a function $f$ that takes the agents' reports $\vec x$ in input and returns a set of $m$ positions $\vec y$ on the line, a permutation $\pi\in\mathcal{S}_m$, and an agent-to-facility matching $\Gamma$.
An agent may misreport its position if it results in a facility location such that the agent's incurred cost is smaller than reporting truthfully.
A mechanism $f$ is said to be \textit{truthful} (or \textit{strategy-proof}) if, the cost of every agent is minimized when it reports its true position. That is, 
\[
d_i(x_i,f(\vec x))\le d_i(x_i,f(\vec x_{-i},x_i'))
\]
for any misreport $x_i'\in \erre$, where $\vec x_{-i}$ is the vector $\vec x$ without its $i$-th component.
Although deploying a truthful mechanism prevents agents from getting a benefit by misreporting their positions, this leads to a loss of efficiency.
To evaluate this efficiency loss, Nisan and Ronen introduced the notion of approximation ratio \cite{nisan1999algorithmic}.
Given a truthful mechanism $f$, its worst-case approximation ratio with respect to the Social Cost is defined as $ar(f):=\sup_{\vec x \in \mathbb{R}^{n}}\frac{SC_{f}(\vec x)}{SC_{opt}(\vec x)}$,
where $SC_f(\vec x)$ is the Social Cost of placing the facilities and assigning the agents to them following the output of $f$, while $SC_{opt}(\vec x)$ is the optimal Social Cost achievable when the agents' report is $\vec x$.
The approximation ratio with respect to the $l_p$ cost or the Maximum Costs are defined similarly.
The \textit{Ranking Mechanisms} are a class of mechanisms for the $m$-CFLP that work under the assumption that the total capacity of the facilities matches the number of agents \cite{aziz2020facility}.
Each Ranking Mechanism is defined by two parameters: a permutation $\pi\in\mathcal{S}_m$ and a vector $\vec t=(t_1,\dots,t_m)\in[n]^m$. 
Given $\pi$ and $\vec t$ the routine of the Ranking Mechanism is as follows:
\begin{enumerate*}[label=(\roman*)]
    \item given $\vec x$ the vector containing the agents' reports ordered non-decreasingly, then the mechanism places the facility with capacity $c_{\pi(j)}$ at $x_{t_j}$.
    \item The agents are assigned to the facility from left to right while respecting the capacity constraints.
\end{enumerate*}
It was shown in \cite{aziz2020facility} that a Ranking Mechanism is truthful if and only if every $t_j$ admits at most two different values.
Moreover, the approximation ratio of these mechanisms is bounded only when the number of agents is even, the number of facilities to places is $2$, and the two facilities have the same capacity.
In this case, the mechanism is also called InnerPoint Mechanism (IM) and its approximation ratio is $\frac{n}{2}-1$.

\paragraph{Bayesian Mechanism Design.}
In Bayesian Mechanism Design, we assume that the agents' types follow a probability distribution and study the performance of mechanisms from a probabilistic viewpoint \cite{hartline2010bayesian,hartline2013bayesian}.
Every agent's type is then described by a random variable $X_i$.
In what follows, we assume that every $X_i$ is identically distributed according to a law $\mu$ and independent from the other random variables.
A mechanism $f$ is said to be truthful if, for every agent $i$, it holds
\begin{equation}
    \EE_{\vec X_{-i}}[d_i(x_i,f(x_i,\vec X_{-i}))]\le \EE_{\vec X_{-i}}[d_i(x_i,f(x_i',\vec X_{-i}))] \quad\quad\quad \forall x_i\in\erre,
\end{equation}
where $x_i$ agent $i$'s true type, $\vec X_{-i}$ is the $(n-1)$-dimensional random vector that describes the other agents' type, and $\EE_{\vec X_{-i}}$ is the expectation with respect to the joint distribution of $\vec X_{-i}$.
Given $\beta\in\erre$, a mechanism $f$ is a $\beta$-approximation if $\EE[SC_f(\vec X_n)]\le \beta\,\EE[SC_{opt}(\vec X_n)]$ holds, so that the lower $\beta$ is, the better the mechanism is.
To unify the notation, we define the Bayesian approximation ratio for the Social Cost as the ratio between the expected Social Cost of a mechanism and the expected Social Cost of the optimal solution.
More formally, given a mechanism $f$, its Bayesian approximation ratio is defined as follows 
\begin{equation}
\label{eq:B_ar}    
    B_{ar,1}^{(n)}(f):=\frac{\EE[SC_f(\vec X_n)]}{\EE[SC_{opt}(\vec X_n)]},
\end{equation}
where the expected value is taken over the joint distribution of the vector $\vec X_n := (X_1,\dots,X_n)$.
Notice that, if $B_{ar,1}^{(n)}(f)<+\infty$, then $f$ is a $B_{ar,1}^{(n)}(f)$-approximation.
Similarly, we define the Bayesian approximation ratio of $f$ with respect to the Maximum Cost and the $l_p$ costs: we denote the Bayesian approximation ratio of $f$ with respect to the $l_p$ cost with $B_{ar,p}^{(n)}(f)$ and the Bayesian approximation ratio with respect to the Maximum Cost with $B_{ar,\infty}^{(n)}(f)$.

\paragraph{Basic Notions on Optimal Transport.}
In the following, we denote with $\PP(\erre)$ the set of probability measures over $\erre$. 
Given $\gamma\in\PP(\erre)$, we denote with $spt(\gamma)\subset \erre$ the support of $\gamma$, that is, the smallest closed set $C\subset \erre$ such that $\gamma(C)=1$.
We denote with $\PP_k(\erre)$ the set of probability measures over $\erre$ whose support consists of $k$ points. 
That is, $\nu\in\PP_k(\erre)$ if and only if $\nu=\sum_{j=1}^k \nu_j\delta_{x_j}$, where $x_j\in \erre$ for every $j\in [k]$, $\nu_j\ge 0$ are real values such that $\sum_{j=1}^k\nu_j=1$, and $\delta_{x_j}$ is the Dirac's delta centered in $x_j$.
Given two measures $\alpha,\beta\in\PP(\erre)$ and $p\in [1,+\infty]$, the $p$-th order Wasserstein distance between $\alpha$ and $\beta$ is defined as 
\begin{equation}
    \label{eq:wass_p}
    W_p^p(\alpha,\beta)=\min_{\pi\in\Pi(\alpha,\beta)}\int_{\erre\times \erre}|x-y|^pd\pi,
\end{equation}
where $\Pi(\alpha,\beta)$ is the set of probability measures over $\erre\times \erre$ whose first marginal is equal to $\alpha$ and the second marginal is equal to $\beta$ \cite{kantorovich2006translocation}.
When $p=+\infty$, we set $W_\infty(\alpha,\beta)=\min_{\pi\in\Pi(\alpha,\beta)}\max_{(x,y)\in spt(\pi)} |x-y|.$
It is well-known that, for every $p\in [1,\infty]$, $W_p$ is a metric over $\PP(\erre)$.
For a complete introduction to the Optimal Transport theory, we refer the reader to \cite{villani2009optimal} and \cite{santambrogio2015optimal}.

\paragraph{Basic Assumptions.}
Finally, we layout the basic assumptions of our framework.
In what follows, we tacitly assume that the underlying distribution $\mu$ satisfies all the following properties:
\begin{enumerate*}[label=(\roman*)]
\item $\mu\in\PP(\erre)$ is absolutely continuous. We denote with $\rho_\mu$ its density.
\item The support of $\mu$ is an interval and that $\rho_\mu$ is strictly positive on the interior of the support.
\item $\rho_\mu$ is differentiable on the support of $\mu$.
\item The probability measure $\mu$ has finite first moment, i.e. $\int_\erre |x|d\mu < +\infty$.
\end{enumerate*}
Notice that, according to this set of assumptions, the cumulative distribution function (c.d.f.) of $\mu$, namely $F_\mu$, is locally bijective.

\section{The Extended Ranking Mechanisms}
\label{sec:ERM}

Since our study aims to examine the behaviour of the mechanism as the number of agents goes to infinity, expressing the problem in terms of absolute capacities is unsuitable.
For this reason, we rephrase the specifics of the $m$-CFLP in terms of percentages.
In particular, we shift our focus to the percentage capacity vector (p.c.v.) $\vec q\in(0,1)^m$.
Each value $q_j$ corresponds to the percentage of agents that the $j$-th facility can accommodate.
Given $\vec q$ and a number of agents $n$, we recover the absolute capacity of the $j$-th facility by setting $c_j=\floor{q_j(n-1)}+1$.
Conversely, when we are given the absolute capacities $c_j$ for a given number of agents $n$, the corresponding p.c.v. is $q_j=\frac{c_j}{n}$.
Without loss of generality, we assume that the entries of $\vec q$ are ordered non-increasingly.
% 

% {\color{teal} (M: I think it might be better to briefly introduce the classic ranking mechanism, such as its parameters, $\pi$ and $\vec{p}$, which would be useful for people like me who knew nothing about RM before to understand ERM.)}

\begin{mechanism}[Extended Ranking Mechanisms (ERMs)]
    Let $\vec q=(q_1,\dots,q_m)$ be a p.c.v..
    Given a permutation $\pi\in\mathcal{S}_m$ and a non-decreasingly ordered vector $\vec v=(v_1,\dots,v_m)\in [0,1]^m$, that is $v_j\le v_{j+1}$, the routine of the ERM associated with $\pi$ and $\vec v$, namely $\RM$, is as follows:
    \begin{enumerate*}[label=(\roman*)]
        \item First, we collect the reports of the agents and order them non-decreasingly. We denote with $\vec x$ the ordered vector containing the agents' reports, thus $x_i\le x_{i+1}$.
        \item Second, we elicit $m$ positions on the line by setting $y_j=x_{\floor{v_j(n-1)}+1}$ for every $j\in [m]$ and place the facility with capacity $q_{\pi(j)}$ at $y_j$.
        \item Finally, we assign every agent to the facility closer to the position they reported (break ties arbitrarily without overloading the facilities).
        \end{enumerate*}
\end{mechanism}

Notice that, in the routine of the ERM, the vector $\vec v$ plays the role that $\vec t$ plays in the routine of a Ranking Mechanism. 
Indeed, the main difference between the Ranking Mechanisms and our generalization lies in how the mechanism matches the agents to the facilities.
While the Ranking Mechanisms assign the agents monotonically from left to right, the ERM assigns every agent to the facility that is closer to their report.
Depending on the pair $(\pi,\vec v)$ and $\vec q$, however, the matching returned by the ERM might overload some of the facilities.
We say that a couple $(\pi,\vec v)$ induces a \textit{feasible} ERM if, for every $n\in\mathbb{N}$ and every $\vec x\in\erre^n$, the output of $\RM$ is a facility location for the $m$-CFLP induced by $\vec q$.
Given a p.c.v. $\vec q$, the set of parameters $(\pi,\vec v)$ that induce a feasible ERM is characterizable through a system of inequalities.
For the sake of simplicity, we first consider the case in which $\vec v$ does not have two equal entries, that is $v_j\neq v_l$ for every $j\neq l$.

\begin{theorem}
\label{thm:RM_feasible}
    Given $\vec q$ and $\vec v$ such that $v_j\neq v_i$ for every $j\neq i\in[m]$, then $\RM$ is feasible if and only if the following system of inequalities are satisfied 
    \begin{equation}
    \label{eq:feas_system}
    \begin{cases}
        q_{\pi(1)}\ge v_2\\%-\frac{1}{n}\\
        q_{\pi(2)}\ge v_3-v_1\\%-\frac{1}{n}\\
        \quad\quad\vdots\\
        q_{\pi(m-1)}\ge v_{m}-v_{m-2}\\%-\frac{1}{n}\\
        q_{\pi(m)}\ge 1-v_{m-1}
    \end{cases}.
\end{equation}
\end{theorem}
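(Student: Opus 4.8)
The plan is to reduce the geometric feasibility requirement to a purely arithmetic condition on the quantile indices, and then to let $n\to\infty$ to read off the clean system \eqref{eq:feas_system}. Throughout, write $N=n-1$, let $k_j=\floor{v_jN}+1$ be the index at which the $j$-th facility is placed, and let $c_{\pi(j)}=\floor{q_{\pi(j)}N}+1$ be its absolute capacity. Since $\vec v$ is non-decreasing the facilities are ordered, $y_1\le\cdots\le y_m$, and since the entries of $\vec v$ are distinct the indices satisfy $k_1<\cdots<k_m$ once $N$ is large enough. The key observation is that, because both the agents and the facilities are sorted, assigning each agent to its closest facility partitions the sorted reports into contiguous blocks separated by the midpoints $\theta_j=\tfrac{1}{2}(y_j+y_{j+1})$. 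Hence $\RM$ is feasible if and only if, for every $n$ and every $\vec x$, this closest-point assignment admits a completion respecting the capacities $c_{\pi(j)}$.

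For necessity I would exhibit, for each fixed $j$, the worst-case configuration maximizing the load on $y_j$: place $y_{j-1}=x_{k_{j-1}}$ arbitrarily far to the left, $y_{j+1}=x_{k_{j+1}}$ arbitrarily far to the right, and cluster all reports with indices in $\{k_{j-1}+1,\dots,k_{j+1}-1\}$ exactly at $y_j$. Every such agent is then strictly closest to $y_j$, so these $k_{j+1}-k_{j-1}-1$ agents must be served by the $j$-th facility regardless of tie-breaking. A matching upper bound shows no configuration does better: any agent with index $\le k_{j-1}$ sits at or left of $y_{j-1}$ and is strictly closer to $y_{j-1}$ (or a facility further left), and symmetrically for indices $\ge k_{j+1}$. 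Thus the exact worst-case load of facility $j$ equals $k_{j+1}-k_{j-1}-1$ (reading $k_2-1$ for $j=1$ and $n-k_{m-1}$ for $j=m$, i.e.\ with the conventions $k_0=0$ and $k_{m+1}=n+1$), so feasibility forces $k_{j+1}-k_{j-1}-1\le c_{\pi(j)}$ for every $n$.

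For sufficiency I would argue that these per-facility bounds, holding for all $n$, already guarantee a valid completion in every instance. By the interval (Hall/Gale) criterion for this chain-structured assignment, a completion exists as soon as, for every block of consecutive facilities $[a,b]$, the number of agents forced to be served within $[a,b]$, namely $k_{b+1}-k_{a-1}-1$, is at most $\sum_{l=a}^{b}c_{\pi(l)}$. Summing the per-facility inequalities over $l\in[a,b]$ and telescoping yields $\sum_{l=a}^{b}c_{\pi(l)}\ge k_{b+1}+k_b-k_a-k_{a-1}-(b-a+1)$, and the desired interval bound then reduces to $k_b-k_a\ge b-a$, which is exactly the strict monotonicity of the indices. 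Hence the per-facility conditions are both necessary and sufficient for feasibility.

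It remains to convert the family of index inequalities $k_{j+1}-k_{j-1}-1\le c_{\pi(j)}$, required for all $n$, into the limiting system \eqref{eq:feas_system}. Substituting the floors and using $\floor{aN}-\floor{bN}\le\floor{(a-b)N}+1$ gives one direction at once: $q_{\pi(j)}\ge v_{j+1}-v_{j-1}$ implies the index inequality for every $N$. The converse holds because, if $q_{\pi(j)}<v_{j+1}-v_{j-1}$, the left side exceeds the right by a quantity growing linearly in $N$, so the inequality fails for $N$ large; the boundary terms are treated identically after rewriting $n-\floor{v_{m-1}N}=\ceil{(1-v_{m-1})N}$. I expect the main obstacle to be the sufficiency step: showing that the seemingly weaker per-facility load bounds rule out every joint overload relies on the telescoping identity together with the strict ordering $k_1<\cdots<k_m$, and some care is needed for the small values of $n$ at which distinct entries of $\vec v$ can still collapse to equal indices.
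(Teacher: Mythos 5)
Your proposal is correct, and its core is the same as the paper's: both hinge on the observation that the worst-case forced load on facility $j$ is $\floor{v_{j+1}(n-1)}-\floor{v_{j-1}(n-1)}-1$, achieved by clustering all agents with intermediate indices at $y_j$ while pushing the neighbouring quantile agents far away (this is exactly the paper's necessity counterexample, done there for $j=1$ with reports at $0$ and $1$), together with the same floor-function arithmetic to translate the index inequalities into the percentage system \eqref{eq:feas_system}. Where you genuinely diverge is in the sufficiency direction. The paper stops at the per-facility check: it asserts that if each capacity $\floor{q_{\pi(j)}(n-1)}+1$ dominates the number of agents that can be strictly closest to $y_j$, then $\RM$ never overloads, leaving implicit why tied agents can always be distributed consistently across facilities. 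Your Hall/Gale interval criterion closes precisely this gap: since an agent's set of closest facilities is always an interval in the facility ordering (no facility can lie strictly between $x_i-d$ and $x_i+d$ when $d$ is the minimum distance), it suffices to verify that every block $[a,b]$ of consecutive facilities has total capacity at least $k_{b+1}-k_{a-1}-1$, and your telescoping of the per-facility bounds, using $k_b-k_a\ge b-a$, delivers exactly that. This buys a rigorous treatment of tie-breaking that the paper's proof only gestures at. The loose end you flag yourself — strict monotonicity of the indices $k_j$ can fail at small $n$ even for distinct $v_j$ — is real but routine to repair: merge facilities placed at the same order statistic into one facility with the summed capacity (as the paper itself does for vectors $\vec v$ with repeated entries), after which the telescoping argument applies verbatim. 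One cosmetic correction: agents with index $\le k_{j-1}$ are \emph{at least as close} to $y_{j-1}$ as to $y_j$, not necessarily strictly closer (positions may coincide); the weak statement is all your argument needs.
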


\begin{proof}
% [Proof of Theorem \ref{thm:RM_feasible}]
% 
Let $\vec x$ be the vector that contains the agents' reports ordered in a non-decreasing order.
First, we show that every time the system \eqref{eq:feas_system} holds, then $\RM$ does not overload any facility.
By definition of $\RM$, we have that every agent is assigned to its closest facility.
Furthermore, every facility shares its position with an agent, depending on the vector $\vec v$. 
Let us now focus the $j$-th facility, which, by definition, it is placed at $y_j=x_{\floor{v_j (n-1)}+1}$.
Without loss of generality, let us assume that $1<j<m$, thus there are two facilities, namely $y_{j-1}$ and $y_{j+1}$, such that $y_{j-1} \le y_j\le y_{j+1}$.
The maximum number of agents whose report is closer to $y_j$ is $\floor{v_{j+1}(n-1)}-\floor{v_{j-1}(n-1)}-1$.
Since the capacity of the $j$-th facility is $\floor{q_{\pi(j)} (n-1)}+1$, we have that the facility at $y_j$ cannot be overloaded if and only if $\floor{q_{\pi(j)}(n-1)}+1\ge \floor{v_{j+1}(n-1)}-\floor{v_{j-1}(n-1)}-1$.
We now show that the latter condition is implied by $q_{\pi(j)}\ge v_{j+1}-v_{j-1}$ or, equivalently, $q_{\pi(j)}(n-1)\ge (v_{j+1}-v_{j-1})(n-1)$.
Indeed, it is easy to see that $\floor{q_{\pi(j)}(n-1)}+1\ge q_{\pi(j)}(n-1)\ge \floor{q_{\pi(j)}(n-1)}$, thus we have $\floor{q_{\pi(j)}(n-1)}+1\ge q_{\pi(j)}(n-1)\ge (v_{j+1}-v_{j-1})(n-1)\ge \floor{v_{j+1}(n-1)}-\floor{v_{j-1}(n-1)}-1$.
% \begin{align*}
%     \floor{q_{\pi(j)}(n-1)}+1&\ge q_{\pi(j)}(n-1)\ge (v_{j+1}-v_{j-1})(n-1)\\
%     &\ge \floor{v_{j+1}(n-1)}-\floor{v_{j-1}(n-1)}-1.
% \end{align*}
% 
Through a similar argument, we deal with $y_1$ and $y_m$.
We now show the inverse implication, i.e. if $(\pi,\vec v)$ satisfies system \eqref{eq:feas_system}, then $\RM$ is feasible.
Toward a contradiction, assume that one of the inequalities does not holds.
Without loss of generality, assume that $v_2>q_{\pi(1)}$, there then exists a value $N\in\mathbb{N}$ such that $v_2-\frac{2}{N-1}>q_{\pi(1)}$.
Let us then consider $n=N$, then it holds $\floor{v_2(N-1)}-1>\floor{q_{\pi(1)}(N-1)}$, hence $\floor{v_2(N-1)}>\floor{q_{\pi(1)}(N-1)}+1$.
Let us then consider the instance $\vec x\in\mathbb{R}^N$ defined as $x_i=0$ if $i=1,\dots,\floor{v_2(N-1)}$ and $x_i=1$ otherwise.
On this instance $\RM$ place a facility at $y_1=0$ and all the others at $1$, the facility at $0$ has capacity $\floor{q_{\pi(1)}(N-1)}+1$.
By definition, the number of agents whose closest facility is $y_1$ is $\floor{v_2(N-1)}>\floor{q_{\pi(1)}(N-1)}+1$, which contradicts the feasibility of the mechanism.
\end{proof}

The feasibility of the ERM ensures also its truthfulness.

\begin{theorem}
\label{thm:truthfulness}
    Given a p.c.v. $\vec q$, any feasible $\RM$  is truthful. 
    Thus it is also truthful in the Bayesian framework.
\end{theorem}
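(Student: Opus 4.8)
The plan is to establish the stronger \emph{dominant-strategy} truthfulness, namely that for every agent $i$, every true position $x_i$, every profile $\vec x_{-i}$ of the other reports, and every misreport $x_i'$, one has $d_i(x_i,\RM(x_i,\vec x_{-i}))\le d_i(x_i,\RM(x_i',\vec x_{-i}))$; Bayesian truthfulness then follows at once by taking the expectation of this pointwise inequality over the joint law of $\vec X_{-i}$. I would fix the other agents' reports and sort them as $z_1\le\cdots\le z_{n-1}$, and let $k_j=\floor{v_j(n-1)}+1$ be the rank used to place the $j$-th facility, so that, writing $a$ for agent $i$'s report, the $j$-th facility sits at the $k_j$-th order statistic $y_j(a)$ of the combined multiset $\{z_1,\dots,z_{n-1},a\}$. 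The role of feasibility is to guarantee that the nearest-facility assignment in step (iii) never overloads a facility, so that under truthful reporting agent $i$ is actually served by its closest facility and thus incurs cost exactly $d^\ast:=\min_{j}|x_i-y_j(x_i)|$, while under \emph{any} report the agent's cost is at least the distance from $x_i$ to the closest of the resulting facilities.

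The key step is a clamping lemma for a single order statistic. Holding $z_1\le\cdots\le z_{n-1}$ fixed and varying only $a$, a direct case analysis shows
\[
y_j(a)=\max\bigl(z_{k_j-1},\,\min(a,z_{k_j})\bigr),
\]
with the conventions $z_0=-\infty$ and $z_n=+\infty$ at the extremes; that is, $y_j(\cdot)$ is the clamp of $a$ to the interval $[z_{k_j-1},z_{k_j}]$ and is therefore monotone and $1$-Lipschitz, with achievable range exactly $[z_{k_j-1},z_{k_j}]$. Consequently the point of this interval closest to the true location $x_i$ is precisely $\mathrm{clamp}(x_i,z_{k_j-1},z_{k_j})=y_j(x_i)$, so truthful reporting already places facility $j$ as close to $x_i$ as any report could:
\[
|x_i-y_j(a)|\ \ge\ |x_i-y_j(x_i)|\qquad\text{for every }a\text{ and every }j.
\]

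It then remains to pass from the per-facility bound to the nearest-facility cost. For any misreport $x_i'$ and any $j$ we chain the two inequalities
\[
|x_i-y_j(x_i')|\ \ge\ |x_i-y_j(x_i)|\ \ge\ \min_{l}|x_i-y_l(x_i)|\ =\ d^\ast,
\]
so every term is at least $d^\ast$ and hence $\min_j|x_i-y_j(x_i')|\ge d^\ast$. Since the agent's cost after misreporting is the distance to whichever facility it is assigned to, which is in turn at least $\min_j|x_i-y_j(x_i')|$, the misreport cost is at least $d^\ast$, the truthful cost. This proves dominant-strategy truthfulness, and taking expectations over $\vec X_{-i}$ yields the Bayesian statement. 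I expect the main obstacle to be conceptual rather than computational: the assignment is made according to the \emph{reported} position while the cost is measured at the \emph{true} position, and several facilities move simultaneously when $a$ changes. Both difficulties are dissolved by the lower-bound trick above --- bounding the misreport cost by the distance to the nearest resulting facility and invoking feasibility to identify the truthful cost with $d^\ast$ --- after which the simultaneous monotone motion of the facilities is harmless, because the minimum of terms that are each individually bounded below by $d^\ast$ is itself bounded below by $d^\ast$. The case where $\vec v$ has repeated entries only makes some facilities coincide ($k_j=k_{j+1}$) and is covered verbatim by the same clamping identity.
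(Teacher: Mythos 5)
Your proof is correct, and its logical skeleton is the same as the paper's: a profitable misreport would have to move some facility (since the assignment is nearest-facility, and feasibility guarantees the truthful cost equals the distance to the nearest facility), and no misreport can move any facility closer to the agent's true position. The difference lies in how that last fact is established. The paper observes that the facility positions are computed by the routine of the Percentile mechanisms and simply invokes their known truthfulness \cite{sui2013analysis}, whereas you prove the fact from scratch via the clamping identity $y_j(a)=\max\bigl(z_{k_j-1},\min(a,z_{k_j})\bigr)$ for the $k_j$-th order statistic of the combined multiset --- which is exactly the standard argument underlying the cited result. What the paper's route buys is brevity; what yours buys is a self-contained argument that also makes explicit two points the paper leaves implicit: the chaining step from the per-facility bound $|x_i-y_j(x_i')|\ge|x_i-y_j(x_i)|$ to the bound on $\min_j|x_i-y_j(x_i')|$, and the precise role of feasibility, namely that it is what lets one identify the truthful cost with $\min_j|x_i-y_j(x_i)|$ (without it, the tie-breaking-without-overloading clause could not serve every agent at a nearest facility). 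Your handling of the boundary conventions $z_0=-\infty$, $z_n=+\infty$, of repeated entries of $\vec v$, and of the passage to the Bayesian statement by taking expectations of the dominant-strategy inequality, are all sound.
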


\begin{proof}
% [Proof of Theorem \ref{thm:truthfulness}]
    Let $\vec q$ be a p.c.v. and let $\RM$ be a feasible ERM.
    Toward a contradiction, let $\vec x$ be an instance in which an agent is able to manipulate.
    Without loss of generality, let us assume that the manipulative agent's real position is $x_i$.
    Since every agent is assigned to the facility that is closer to the position they report, the manipulation performed by the agent at $x_i$ must alter the position of the facilities, as otherwise the cost of the manipulative agent cannot decrease.
    We notice that the positions of the facilities are determined through the same routine of the Percentile mechanisms, \cite{sui2013analysis}.
    Owing to the truthfulness of the Percentile mechanisms, we infer that no agent can misreport in such a way that a facility gets closer to its position, which concludes the proof.
\end{proof}

% The truthfulness of any ERM is bound to the fact that every facility $y_j$ can accommodate all the agents between $y_{j-1}$ and $y_{j+1}$.
% % 
% This constraint limits the set of $\vec{p}$ for which there exists a $\pi \in \mathcal{S}_m$ such that $\RM$ is feasible.
% 

\begin{corollary}
\label{crr:sparecapacity}
    Given a p.c.v. $\vec q\in(0,1)^m$, let us fix $\vec v\in[0,1]^m$.
    Then, if $(\max_{j\in[m]}v_j-\min_{j\in[m]}v_j)>\sum_{j\in[m]}q_j-1$, the mechanism $\RM$ is not feasible for every $\pi\in\mathcal{S}_m$.
\end{corollary}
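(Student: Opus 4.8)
The plan is to deduce the statement from the feasibility characterization of Theorem~\ref{thm:RM_feasible} by a summation argument, reading the strict inequality as the negation of the summed system. Concretely, I fix an arbitrary permutation $\pi\in\mathcal{S}_m$ and suppose, toward a contradiction, that $\RM$ is feasible. First I would dispose of the generic case in which the entries of $\vec v$ are pairwise distinct, so that Theorem~\ref{thm:RM_feasible} applies verbatim and the whole system \eqref{eq:feas_system} must hold. Adding up all $m$ inequalities of \eqref{eq:feas_system} gives, on the left-hand side, $\sum_{j\in[m]} q_{\pi(j)}=\sum_{j\in[m]} q_j$, since the sum is permutation-invariant, while the right-hand side telescopes.

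The key computation is that the summed right-hand side $v_2+\sum_{j=2}^{m-1}(v_{j+1}-v_{j-1})+(1-v_{m-1})$ collapses to $v_m-v_1+1$: every interior term cancels in pairs, leaving only the two extreme indices. Since $\vec v$ is ordered non-decreasingly we have $v_1=\min_{j}v_j$ and $v_m=\max_{j}v_j$, so feasibility would force $\sum_{j\in[m]}q_j\ge (\max_j v_j-\min_j v_j)+1$, i.e. $\max_j v_j-\min_j v_j\le \sum_j q_j-1$. This directly contradicts the hypothesis, and because neither side of the telescoped inequality depends on $\pi$, the contradiction holds simultaneously for every $\pi\in\mathcal{S}_m$. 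Hence no feasible ERM exists in the distinct-entries case.

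The remaining, and main, obstacle is the case where $\vec v$ has repeated entries, which lies outside the scope of Theorem~\ref{thm:RM_feasible}. Here I would argue directly, reusing only the load-counting estimate from the proof of Theorem~\ref{thm:RM_feasible}. Grouping the equal entries of $\vec v$ into its $r\le m$ distinct values $w_1<\dots<w_r$, the facilities placed at a common location $x_{\floor{w_l(n-1)}+1}$ pool their capacities, and feasibility requires this pooled capacity to dominate the maximum possible load there, namely the number of agent indices strictly between the indices associated with $w_{l-1}$ and $w_{l+1}$ (with the obvious modification at the two endpoints). Summing these necessary inequalities over $l$, each evaluated on its own worst-case instance, makes the left-hand side collapse to the total capacity $\sum_j(\floor{q_j(n-1)}+1)$ and the right-hand side telescope to $\floor{w_r(n-1)}-\floor{w_1(n-1)}+n+1-r$. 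Bounding the floors and dividing by $n-1$ then shows that feasibility would force $\sum_j q_j\ge (\max_j v_j-\min_j v_j)+1-\tfrac{r+m-1}{n-1}$; under the strict hypothesis this fails once $n$ is large enough, exhibiting an overloaded instance for every $\pi$. The delicate point to get right is precisely that the floor error terms are $O(1)$ while the slack guaranteed by the strict inequality scales linearly in $n$, so the contradiction genuinely materialises for all sufficiently large $n$.
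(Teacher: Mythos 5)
Your proof is correct. For vectors $\vec v$ with pairwise distinct entries it coincides with the paper's argument: sum the $m$ inequalities of system \eqref{eq:feas_system}, observe that the left-hand side equals $\sum_{j\in[m]}q_j$ by permutation invariance, telescope the right-hand side to $1+v_m-v_1$, and contradict the hypothesis; since neither side depends on $\pi$, every permutation is ruled out simultaneously. Where you genuinely diverge is the repeated-entries case, which the paper's own proof silently skips: Theorem~\ref{thm:RM_feasible} is stated only for distinct entries, and the paper covers coinciding entries only in the discussion placed after the corollary, via the pooled-capacity system (to which the same telescoping applies). Your direct, quantitative alternative is sound: grouping the distinct values $w_1<\dots<w_r$, each necessary inequality $\sum_{j:\,v_j=w_l}\big(\floor{q_{\pi(j)}(n-1)}+1\big)\ge \floor{w_{l+1}(n-1)}-\floor{w_{l-1}(n-1)}-1$ (endpoints adjusted) is witnessed by an explicit clustered instance, the summed right-hand sides indeed telescope to $\floor{w_r(n-1)}-\floor{w_1(n-1)}+n+1-r$, and your bound $\sum_{j\in[m]} q_j\ge(\max_j v_j-\min_j v_j)+1-\tfrac{r+m-1}{n-1}$ checks out, so the fixed slack given by the strict hypothesis dominates the $O(1/n)$ floor error for large $n$; since feasibility is required for every $n$, infeasibility follows. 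What your route buys is a self-contained proof of the corollary in full generality; what the paper's route buys is brevity, at the price of leaning on the generalized feasibility system it states but does not prove.
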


\begin{proof}
    % [Proof of Corollary \ref{crr:sparecapacity}]
    It follows from Theorem \ref{thm:RM_feasible}, indeed if $\RM$ is feasible, then \eqref{eq:feas_system} holds.
    If we sum all the inequalities of system \eqref{eq:feas_system}, we get
    \begin{align*}
        \sum_{j\in[m]}q_{\pi(j)}=\sum_{j\in[m]}q_j&\ge v_2+(v_3-v_1)+\dots+(v_m-v_{m-2})+(1-v_{m-1})=1+v_m-v_1,
    \end{align*}
    since $\vec v$ is ordered increasingly.
    Thus, if $v_m-v_1=\big(\max_{j\in[m]}v_j-\min_{j\in[m]}v_j\big)>\sum_{j\in[m]}q_j-1$
    % \[
    % v_m-v_1=\big(\max_{j\in[m]}v_j-\min_{j\in[m]}v_j\big)>\sum_{j\in[m]}q_j-1
    % \]
    holds, system \eqref{eq:feas_system} does not, which concludes the proof.
\end{proof}

Finally, we notice that if $\vec v$ has at least two equal entries, not all the feasible ERMs necessarily satisfy system \eqref{eq:feas_system}.
This is the case for the \texttt{all-median} mechanism, which places all the facilities at the median agent.
% 
% Notice that, in this case, we do not need to specify which agent is served by which facility nor $\pi\in\mathcal{S}_m$, since all the agents and all the facilities are served/located at the same place.
% 
Indeed, depending on $\vec q$, the vector $\vec v=(0.5,\dots,0.5)$ may not satisfy system \eqref{eq:feas_system} for any $\pi\in\mathcal{S}_m$.
When $v_j=v_l=v$ for some indices $j,l\in [m]$, we need to consider all the facilities placed at $x_{\floor{v(n-1)}+1}$ as if they were a unique facility whose capacity is the total capacity of the facilities placed at $x_{\floor{v(n-1)}+1}$.
By doing so, we extend Theorem \ref{thm:RM_feasible} and \ref{thm:truthfulness} to the cases in which $\vec v$ has at least two equal entries.
In particular, given $\vec v\in[0,1]^m$, let $\vec v'\in [0,1]^{m'}$ be the vector containing all the different entries of $\vec v$, then, the mechanism $\texttt{ERM}^{(\pi,\vec v)}$ is feasible if and only if the following system is satisfied
\[
\begin{cases}
        \sum_{l\in[m];\;s.t.\;v_{\pi(l)}=v_1'}q_{\pi(l)}\ge v_2'\\%-\frac{1}{n}\\
        \sum_{l\in[m];\;s.t.\;v_{\pi(l)}=v_2'}q_{\pi(l)}\ge v_3'-v_1'\\%-\frac{1}{n}\\
        \quad\quad\vdots\\
        \sum_{l\in[m];\;s.t.\;v_{\pi(l)}=v_{m-1}'}q_{\pi(l)}\ge v_{m'}'-v_{m'-2}'\\%-\frac{1}{n}\\
        \sum_{l\in[m];\;s.t.\;v_{\pi(l)}=v_{m'}'}q_{\pi(l)}\ge 1-v_{m'-1}'
\end{cases}.
\]

Notice that Theorem \ref{thm:RM_feasible}, Theorem \ref{thm:truthfulness}, and Corollary \ref{crr:sparecapacity} hold regardless of the objective we are seeking to minimise, that is the Social Cost, the $l_p$ costs, and the Maximum Cost.
% 
% Lastly, we notice that, except in a few specific cases, the approximation ratio of the Ranking Mechanisms is unbounded.
% % 
% Consequentially, it is impossible to retrieve a bound on the approximation ratio of any ERM for generic $m$-CFLPs.
% % 
% Since the classic worst-case analysis does not give any insight into the performances of the ERMs, we move our attention to the Bayesian analysis.
% 

\section{The Bayesian Analysis of the ERMs}
\label{sec:analysis}
In this section, we present our main result, which characterizes the limit of the Bayesian approximation ratio of any feasible ERM as a function of $\vec v$, $\pi$, $\mu$, and $\vec q$ for all the relevant costs.
We start considering the Social Cost and extend all our results to the $l_p$ costs and the Maximum Cost in a dedicated section.
From now on, we tacitly assume that $q_i \, n\in\mathbb{N}$.

\subsection{The Social Cost case}

\label{SCsection}
First, we relate the $m$-CFLP to a norm minimization problem in the Wasserstein Space.

\begin{lemma}
\label{lemma:equivalenceSC_W1}
    Given a p.c.v. $\vec q$, let $\vec x\in\erre^n$ be the vector containing the agents' reports.
    Let us fix $\mu_{\vec x}:=\frac{1}{n}\sum_{i\in [n]}\delta_{x_i}$. 
    Then, it holds
    \begin{equation}
    \label{eq:min_discreteopt}
      SC_{opt}(\vec x)=\min_{\sigma\in\mathcal{S}_m}\min_{\zeta\in\PP_{\sigma,\vec q}(\erre)}W_1(\mu_n,\zeta),  
    \end{equation}
    where $\PP_{\sigma,\vec q}(\erre)$ is the set of probability measures such that $\zeta=\sum_{j\in[m]}\zeta_j\delta_{y_j}$, where $y_1\le y_2\le\dots\le y_m$ and $\zeta_j\le q_{\sigma(j)}$ for every $j\in[m]$.
    Similarly, given a permutation $\pi\in\mathcal{S}_m$ and a vector $\vec v$ that induce a feasible ERM, it holds 
    \begin{equation}
    \label{eq:discrete_proj2}
    SC_{\pi,\vec v}(\vec x)=\min_{\lambda_j\le q_{\pi(j)},\sum_{j\in[m]}\lambda_j=1}W_1(\mu_n,\lambda),
    \end{equation}
    where $SC_{\pi,\vec v}(\vec x)$ is the Social Cost attained by $\RM$ on instance $\vec x$, $\lambda=\sum_{j\in[m]}\lambda_j\delta_{x_{r_j}}$, and $r_j=\floor{v_j(n-1)}+1$.
\end{lemma}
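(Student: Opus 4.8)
The plan is to read both identities as statements about a transportation linear program and to exploit the integrality of its optimal solutions. The starting observation is that a feasible facility-location solution and a transport plan are two descriptions of the same object. Given positions $\vec y$, a permutation $\sigma$, and a capacity-respecting matching $\Gamma$, I would define the coupling $\pi_\Gamma := \frac{1}{n}\sum_{i\in[n]}\delta_{(x_i,y_{j(i)})}$, where $j(i)$ is the facility assigned to agent $i$. Its first marginal is $\mu_{\vec x}$, its second marginal is $\zeta_\Gamma := \sum_{j\in[m]}\frac{|\Gamma_j|}{n}\delta_{y_j}$ (with $|\Gamma_j|$ the number of agents sent to $y_j$), and its cost $\int|x-y|\,d\pi_\Gamma$ equals $SC(\vec x,\vec y)$ exactly. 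The capacity constraint $|\Gamma_j|\le c_{\sigma(j)}=q_{\sigma(j)}n$ becomes $\frac{|\Gamma_j|}{n}\le q_{\sigma(j)}$, so $\zeta_\Gamma\in\PP_{\sigma,\vec q}(\erre)$. This immediately gives the inequality $\le$ in \eqref{eq:min_discreteopt}: the transport minimum is no larger than the cost of any particular solution, hence no larger than $SC_{opt}(\vec x)$.

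For the reverse inequality I would show that the right-hand side of \eqref{eq:min_discreteopt} is always realised by a non-fractional plan. Fixing $\sigma$ and a candidate support $\{y_1\le\dots\le y_m\}$, jointly optimising over the masses $\zeta_j\le q_{\sigma(j)}$ and over the coupling is a transportation problem with $n$ sources of equal supply $\frac1n$ and $m$ sinks of capacity $q_{\sigma(j)}$. Because $q_jn\in\mathbb{N}$, all supplies and capacities are integer multiples of $\frac1n$, and the constraint matrix of a transportation problem is totally unimodular; hence the program admits an optimal vertex that is integral in units of $\frac1n$, that is, one in which each agent's whole mass is routed to a single facility. Such a vertex is precisely a capacity-respecting matching $\Gamma$, whose cost is $SC(\vec x,\vec y)\ge SC_{opt}(\vec x)$. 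Minimising over $\sigma$ and over the support positions then yields $\ge$, and together with the previous paragraph proves \eqref{eq:min_discreteopt}.

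Identity \eqref{eq:discrete_proj2} follows from the same two ingredients, now with the support frozen at $y_j=x_{r_j}$ and the permutation frozen at $\pi$. The integrality argument shows that $\min_{\lambda}W_1(\mu_{\vec x},\lambda)$, taken over masses $\lambda_j\le q_{\pi(j)}$ with $\sum_j\lambda_j=1$, equals the minimum of the social cost over all capacity-respecting matchings to the fixed positions $x_{r_1},\dots,x_{r_m}$. It then remains to identify this minimum with the cost produced by $\RM$. For this I would invoke feasibility: by definition $\RM$ sends each agent to its nearest facility, and feasibility (Theorem~\ref{thm:RM_feasible}) guarantees that this nearest-facility assignment never overloads a facility, so it is itself a capacity-respecting matching. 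Since in this assignment every agent pays $\min_{j}|x_i-y_j|$, which lower-bounds that agent's cost under any other matching, the nearest-facility matching minimises the social cost over all capacity-respecting matchings; hence $SC_{\pi,\vec v}(\vec x)$ attains the transportation minimum, giving \eqref{eq:discrete_proj2}.

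The main obstacle is the integrality step: one must rule out that splitting an agent's mass across several facilities could strictly lower the transport cost. I expect to handle this through total unimodularity of the transportation polytope together with the standing assumption $q_jn\in\mathbb{N}$, which aligns the integer capacities with the atomic scale $\frac1n$ of $\mu_{\vec x}$; an equivalent route is to observe that on the line $W_1$ between two atomic measures of commensurable masses reduces to a linear assignment problem whose optimum is attained at a deterministic (Monge) coupling.
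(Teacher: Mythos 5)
Your proposal is correct, and its skeleton coincides with the paper's: both directions are proved by translating back and forth between capacity-respecting matchings and transport plans, with the only delicate point being that an optimal plan need not split any agent's atom of mass $\frac{1}{n}$. Where you differ is in how that non-splitting step is justified. The paper fixes a minimizer $(\sigma,\nu_{m,n})$ of the right-hand side of \eqref{eq:min_discreteopt} and invokes the structural result of \cite{auricchio2022structure}, which guarantees an optimal plan between $\mu_n$ and $\nu_{m,n}$ in which every agent's mass is routed to a single facility; you instead fold the masses $\zeta_j$ and the coupling into one capacitated transportation LP and obtain an integral optimal vertex from total unimodularity of the bipartite constraint matrix, using the standing assumption $q_j n\in\mathbb{N}$ to put supplies and capacities on the common scale $\frac1n$. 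Your route is self-contained and rests only on standard LP facts (or, as you note, on the Monge property of $W_1$ on the line), at the cost of re-deriving what the paper imports as a citation. You are also more explicit on \eqref{eq:discrete_proj2}, which the paper dispatches with ``the same argument'': you correctly observe that the nearest-facility assignment of $\RM$ is a capacity-respecting matching by Theorem~\ref{thm:RM_feasible} and attains the pointwise lower bound $\sum_i\min_j|x_i-y_j|$, hence it realizes the matching (and therefore the transportation) minimum over the frozen support $\{x_{r_j}\}_{j\in[m]}$; this is exactly the missing glue the paper leaves implicit.
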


\begin{proof}
% [Proof of Lemma \ref{lemma:equivalenceSC_W1}]
    First, we show that \eqref{eq:min_discreteopt} holds.
    Given $\vec x$ and $\vec q$, let us consider an optimal solution to the $m$-CFLP.
    In particular, we denote with $\vec y$ be the vector containing $m$ positions on the line, $\sigma$ the permutation that determines how to distribute the facilities among the positions in $\vec y$, and $\Gamma$ a matching that assigns every agent to a facility without overloading the facilities.
    Let us now consider the following transportation plan $\pi_{x_i,y_j}= \sum_{(i,j)\in\Gamma}\frac{1}{n}\delta_{(x_i,y_j)}$.
    % \[
    % \pi_{x_i,y_j}=\begin{cases}
    %     \frac{1}{n}\quad \text{if}\quad (i,j)\in\Gamma,\\
    %     0\quad\text{otherwise.}
    % \end{cases}
    % \]
    By definition of $\Gamma$, we have that $\sum_{j\in[m]}\pi_{x_i,y_j}=\frac{1}{n}$ and $\sum_{i\in[n]}\pi_{x_i,y_j}=\frac{k_j}{n}\le\frac{c_{\sigma(j)}}{n}=q_{\sigma(j)}$,
    % \[
    % \sum_{i\in[n]}\pi_{x_i,y_j}=\frac{k_j}{n}\le\frac{c_{\sigma(j)}}{n}=q_{\sigma(j)},
    % \]
    where $k_j$ is the degree of $j\in[m]$ according to $\Gamma$.
    Thus, if we set $\nu_{n,m}=\sum_{j\in[m]}\frac{k_j}{n}\delta_{y_j}$, we have that $\nu_{n,m}\in\PP_{\sigma,m}(\erre)$.
    It is easy to see that $SC(\vec x)=\sum_{i\in[n],j\in [m]}|x_i-y_j|\pi_{x_i,y_j}\ge W_1(\mu_n,\nu_{n,m})$, hence $SC_{opt}(\vec x)\ge \min_{\sigma\in\mathcal{S}_m}\min_{\zeta\in\PP_{\sigma,m}(\erre)} W_1(\mu_n,\zeta)$.
    % \[
    % SC_{opt}(\vec x)\ge \min_{\sigma\in\mathcal{S}_m}\min_{\zeta\in\PP_{\sigma,m}(\erre)} W_1(\mu_n,\zeta).
    % \]
    To conclude, we show that the inverse inequality holds.
    Let $(\sigma,\nu_{m,n})$ be a minimizer of the right hand side of \eqref{eq:min_discreteopt}, we show that there exists a set of $m$ locations $\vec y$, a permutation $\sigma'$, and a matching $\Gamma$ whose Social Cost is $W_1(\mu_n,\nu_{n,m})$.
    First, let us set $\vec y=(y_1,\dots,y_m)$, where $\{y_j\}_{j\in [m]}$ is the support of $\nu_{m,n}$.
    Let $q_j=\frac{c_j}{n}$, then there exists an optimal transportation plan $\pi$ between $\mu_n$ and $\nu_{m,n}$ such that for every $i\in[n]$, there exists a unique $j\in[m]$ such that $\pi_{x_i,y_j}\neq 0$ \cite{auricchio2022structure}.
    Notice that, by definition of transportation plan, if $\pi_{x_i,y_j}\neq 0$ for only a couple of values $i\in[n]$, $j\in[m]$, then $\pi_{x_i,y_j}=\frac{1}{n}$.
    Thus, the set $\Gamma=\{(i,j)\in [n]\times [m];\; \pi_{x_i,y_j}=\frac{1}{n}\}$ is well defined.
    Since for every $i\in[n]$ we have that there exists a unique $j\in[m]$ for which it holds $\pi_{x_i,y_j}=\frac{1}{n}$, we have that the degree of every $i\in[n]$ is one.
    Since $\sum_{i\in[n]}\pi_{x_i,y_j}=(\nu_{m,n})_j\le \frac{c_{\sigma(j)}}{n}$, we have that the degree of every $j\in[m]$ is less than $c_{\sigma(j)}$.
    Thus the triplet $(\vec y,\sigma,\Gamma)$ is a feasible location for the $m$-CFLP.
    Since the Social Cost of $(\vec y,\sigma,\mu)$ is equal to $\sum_{i\in[n],j\in[m]}|x_i-y_j|\pi_{x_i,y_j}=W_1(\mu_n,\nu_{m,n})$, we have that $\min_{\sigma\in\mathcal{S}_m}\min_{\zeta\in\PP_{\sigma,m}(\erre)} W_1(\mu_n,\zeta)=W_1(\mu_n,\nu_{m,n})\ge SC_{opt}(\vec x)$,
    % \[
    % \min_{\sigma\in\mathcal{S}_m}\min_{\zeta\in\PP_{\sigma,m}(\erre)} W_1(\mu_n,\zeta)=W_1(\mu_n,\nu_{m,n})\ge SC_{opt}(\vec x),
    % \]
    which concludes the proof.
    Notice that, since $\min_{\sigma\in\mathcal{S}_m}\min_{\zeta\in\PP_{\sigma,m}(\erre)} W_1(\mu_n,\zeta)= SC_{opt}(\vec x)$ holds, we can refine the previous argument and show that, given an optimal facility location for the $m$-CFLP it is possible to build a solution to the minimization problem in the right hand side of \eqref{eq:min_discreteopt}.
    Vice-versa, given a solution to the minimization problem in the right hand side of \eqref{eq:min_discreteopt}, it is possible to build an optimal facility location for the $m$-CFLP.
    Using the same argument and the same constructions, it is possible to show that \eqref{eq:discrete_proj2} holds.
\end{proof}

% 
% The proof of Lemma \ref{lemma:equivalenceSC_W1} consists in showing that, for any given instance $\vec x$ and given an optimal facility location for the $m$-CFLP, it is possible to construct a measure $\nu$ such that $W_1(\mu_n,\nu)=SC_{opt}(\vec x)$ and, vice-versa, given a measure $\nu$ that minimizes \eqref{eq:min_discreteopt}, it is possible to retrieve an optimal facility location problem for the $m$-CFLP.
% % 
% Since the $m$-CFLP admits a solution, problem \eqref{eq:min_discreteopt} is well-posed and admits a solution.
% % 
% By a similar argument, we infer the same conclusions for problem \eqref{eq:discrete_proj2}.
% 
% Lemma \ref{lemma:equivalenceSC_W1} establishes a link between the $m$-CFLP and Optimal Transport theory.
% 
The connection between the $m$-CFLP and Optimal Transport theory highlighted in Lemma \ref{lemma:equivalenceSC_W1} enables us to exploit the properties of the Wasserstein distances and to characterize the limit Bayesian approximation ratio of every feasible ERM.

\begin{theorem}
\label{thm:limitBAR}
    Given the p.c.v. $\vec q$, let $\vec v\in (0,1)^m$ and $\pi\in\mathcal{S}_m$ be such that $\RM$ is feasible.
    Then, it holds
    \begin{equation}
    \label{eq:bayesianlimitRM}
        \lim_{n\to +\infty}\frac{\EE[SC_{\pi,\vec v}(\vec x)]}{\EE[SC_{opt}(\vec x)]}=\frac{W_1(\mu,\nu_{\vec v})}{W_1(\mu,\nu_{m})},
    \end{equation}
    where $\nu_{m}$ is a solution to the following minimization problem
    \begin{equation}
    \label{eq:min_proj}
      \min_{\sigma\in \mathcal{S}_m}\min_{\zeta\in\PP_{\sigma,\vec q}(\erre)}W_1(\mu,\zeta),
    \end{equation}
    and $\nu_{\vec v}$ is a solution to 
    \begin{equation}
    \label{eq:min_proj2}
      \min_{\lambda_j\le q_{\pi(j)},\sum_{j\in[m]}\lambda_j=1}W_1\Big(\mu,\sum_{j\in[m]}\lambda_j\delta_{F_\mu^{[-1]}(v_j)}\Big).
    \end{equation}
\end{theorem}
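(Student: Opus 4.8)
The plan is to reduce both numerator and denominator, via Lemma~\ref{lemma:equivalenceSC_W1}, to functionals of the empirical measure $\mu_n=\frac1n\sum_{i\in[n]}\delta_{x_i}$ and then to let $n\to+\infty$. By Lemma~\ref{lemma:equivalenceSC_W1}, the denominator satisfies $SC_{opt}(\vec x)=\Phi(\mu_n)$, where $\Phi(\eta):=\min_{\sigma\in\mathcal S_m}\min_{\zeta\in\PP_{\sigma,\vec q}(\erre)}W_1(\eta,\zeta)$, and the numerator satisfies $SC_{\pi,\vec v}(\vec x)=\Psi(\mu_n,\vec z^{(n)})$, where $\Psi(\eta,\vec z):=\min_{\lambda_j\le q_{\pi(j)},\sum_j\lambda_j=1}W_1\big(\eta,\sum_{j\in[m]}\lambda_j\delta_{z_j}\big)$ and $z^{(n)}_j=x_{\floor{v_j(n-1)}+1}$ is the $r_j$-th order statistic of the sample. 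The targets on the right-hand side are precisely $\Phi(\mu)=W_1(\mu,\nu_m)$ and $\Psi(\mu,\vec z^{*})=W_1(\mu,\nu_{\vec v})$ with $z^*_j=F_\mu^{[-1]}(v_j)$, so the theorem follows once I show $\EE[\Phi(\mu_n)]\to W_1(\mu,\nu_m)$ and $\EE[\Psi(\mu_n,\vec z^{(n)})]\to W_1(\mu,\nu_{\vec v})$, together with $W_1(\mu,\nu_m)>0$ so that the limiting ratio is well defined.

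First I would record two stability estimates. Both functionals are $1$-Lipschitz in their measure argument with respect to $W_1$: from $W_1(\eta_1,\cdot)\le W_1(\eta_1,\eta_2)+W_1(\eta_2,\cdot)$, taking minima over feasible targets gives $|\Phi(\eta_1)-\Phi(\eta_2)|\le W_1(\eta_1,\eta_2)$, and likewise for $\Psi(\cdot,\vec z)$. Moreover $\Psi$ is Lipschitz in the support points: coupling atom $j$ with atom $j$ yields $|\Psi(\eta,\vec z)-\Psi(\eta,\vec z')|\le\max_j|z_j-z'_j|$.

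Next come the two probabilistic inputs. Under the finite-first-moment assumption the empirical measure converges in the Wasserstein metric, $W_1(\mu_n,\mu)\to0$ almost surely. Since $v_j\in(0,1)$ and, by the standing assumptions, $F_\mu$ is continuous and strictly increasing on the interior of the support (so $F_\mu^{[-1]}$ is continuous at each $v_j$), the sample quantiles converge, $z^{(n)}_j=x_{\floor{v_j(n-1)}+1}\to F_\mu^{[-1]}(v_j)=z^*_j$ almost surely. Combining these with the two Lipschitz bounds gives $\Phi(\mu_n)\to\Phi(\mu)$ and $\Psi(\mu_n,\vec z^{(n)})\to\Psi(\mu,\vec z^*)$ almost surely, which identifies the pointwise limits.

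Finally I would upgrade the almost-sure convergence to convergence of the expectations, which I expect to be the main obstacle since only a finite first moment is assumed. The key domination is $\Phi(\mu_n)\le W_1(\mu,\nu_m)+W_1(\mu_n,\mu)$ and $\Psi(\mu_n,\vec z^{(n)})\le C+W_1(\mu_n,\mu)+\max_j|z^{(n)}_j|$ for a constant $C$. The family $\{\int_\erre|x|\,d\mu_n\}_n$ is uniformly integrable because it converges in $L^1$ by the strong law, which forces $\{W_1(\mu_n,\mu)\}_n$ to be uniformly integrable as well; the interior order statistics $\{z^{(n)}_j\}_n$ are uniformly integrable because, for ranks with $r_j/n\to v_j\in(0,1)$, their expected absolute values are bounded by a constant multiple of $\EE|X|$. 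Uniform integrability together with the almost-sure convergence gives $L^1$-convergence, hence $\EE[\Phi(\mu_n)]\to W_1(\mu,\nu_m)$ and $\EE[\Psi(\mu_n,\vec z^{(n)})]\to W_1(\mu,\nu_{\vec v})$. The denominator limit is strictly positive, since an absolutely continuous $\mu$ supported on an interval cannot be matched exactly by a measure with at most $m$ atoms, so the quotient of the expectations converges to $W_1(\mu,\nu_{\vec v})/W_1(\mu,\nu_m)$, as claimed.
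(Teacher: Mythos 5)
Your proposal is correct in its overall architecture, and that architecture coincides with the paper's: reduce both costs to Wasserstein quantities via Lemma~\ref{lemma:equivalenceSC_W1}, control the measure argument by the triangle inequality, control the (random) support points separately, and finish by noting that the denominator's limit is strictly positive because an absolutely continuous $\mu$ has positive $W_1$-distance from any finitely supported measure. Your two Lipschitz estimates are exactly the content of the paper's manipulations: the $1$-Lipschitz property of $\Phi$ and $\Psi(\cdot,\vec z)$ in the measure argument reproduces the paper's chain of triangle inequalities for the optimal cost, and your bound $|\Psi(\eta,\vec z)-\Psi(\eta,\vec z')|\le\max_j|z_j-z'_j|$ is what the paper achieves through the auxiliary measures $\phi$ and $\psi$. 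Where you genuinely diverge is in the probabilistic step. The paper stays in expectation throughout: it invokes the Bobkov--Ledoux result that $\EE[W_1(\mu_n,\mu)]\to 0$ under a finite first moment, and it controls $\EE[|y_{\vec x,j}-y_j|]$ quantitatively via Bahadur's representation of order statistics, obtaining an $O(n^{-1/2})$-type bound. You instead establish almost-sure convergence (Wasserstein law of large numbers plus a.s. quantile convergence) and upgrade to $L^1$ convergence by uniform integrability and Vitali's theorem. Your route is more elementary in that it avoids Bahadur's representation entirely; the paper's route buys explicit convergence rates, which your argument does not provide.

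There is one soft spot in your uniform-integrability step. Your treatment of $\{W_1(\mu_n,\mu)\}_n$ is fine (it is dominated by $\int_\erre|x|\,d\mu_n+\int_\erre|x|\,d\mu$, and the dominating sequence converges in $L^1$ by the law of large numbers, so Vitali or Pratt's lemma applies). But for the order statistics you justify uniform integrability by saying their expected absolute values are bounded by a multiple of $\EE|X|$; bounded first moments do \emph{not} imply uniform integrability (e.g.\ $Y_n=n\,\mathbb{1}_{\{U<1/n\}}$ has $\EE|Y_n|=1$ for all $n$ but is not uniformly integrable). The claim itself is true and the fix is short: writing $r_n=\floor{v_j(n-1)}+1$ and applying Markov's inequality to the binomial count of samples beyond level $t$, one gets $P\big(X_{(r_n)}>t\big)\le \frac{n\,P(X>t)}{n-r_n+1}\le \frac{P(X>t)}{1-v_j}$ and symmetrically $P\big(X_{(r_n)}<-t\big)\le \frac{2\,P(X<-t)}{v_j}$ for all $n\ge 2$, so the tail of every interior order statistic is dominated, uniformly in $n$, by a constant multiple of the tail of $|X|$; integrating the tail bound above level $K$ gives $\sup_n\EE\big[|X_{(r_n)}|\,\mathbb{1}_{\{|X_{(r_n)}|>K\}}\big]\le C_{v_j}\,\EE\big[|X|\,\mathbb{1}_{\{|X|>K\}}\big]\to 0$, which is the uniform integrability you need. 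With that repair, your proof is complete.
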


% \begin{proof}[Proof of Theorem \ref{thm:limitBAR}]
%     To complete the proof of Theorem \ref{thm:limitBAR}, we need to prove the following:
%     \begin{enumerate*}[label=(\roman*)]
%         \item First, we need to show that a solution $\nu_m$ to problem \eqref{eq:min_proj} always exists. This would complete the proof of the first step.
%         \item We need to show that there always exists a solution $\nu_{\vec v}$ to problem \eqref{eq:min_proj2}, and
%         \item we need to show that $\lim_{n\to\infty}\EE[W_1(\nu_{\vec x, \vec v},\phi)]\to 0$ when $n\to \infty$. 
%     \end{enumerate*}

% \textbf{(i) Existence of a solution $\nu_m$.} 

% \textbf{(ii) Existence of a solution $\nu_{\vec v}$.}

% \textbf{(iii) Convergence of $\EE [W_1(\nu_{\vec x, \vec v},\phi)]$.}

% \end{proof}

\begin{proof}
    The proof consists of three steps:
    \begin{enumerate*}[label=(\roman*)]
        \item First, we show that the expected optimal Social Cost for the $m$-CFLP converges to the objective value of problem 
        \eqref{eq:min_proj}.
        \item Second, we show that the expected Social Cost of $\RM$ converges to the objective value of the minimization problem in \eqref{eq:min_proj2}.
        \item We combine the two convergence results to retrieve \eqref{eq:bayesianlimitRM}.
    \end{enumerate*}

    \textbf{The limit of the expected optimal Social Cost.}
    First, we show that a solution to problem \eqref{eq:min_proj} always exists.
    For every $m$ the number of permutation is finite, thus it suffices to show that, for every $\sigma\in\mathcal{S}_m$ there exists a solution to the problem $\min_{\zeta\in\PP_{\sigma,m}(\erre)}W_1(\mu,\zeta)$.
    % \[
    % \min_{\zeta\in\PP_{\sigma,m}(\erre)}W_1(\mu,\zeta).
    % \]
    Since $\mu$ has finite first moment, i.e. $\int_\erre |x|d\mu<+\infty$, the minimization problem is well-posed and its minimum is finite.
    Moreover, the set $\PP_{\sigma,m}(\erre)$ is a closed set in $\PP(\erre)$, thus a solution to the minimization problem does exist (although it may not be unique).
    Let $\nu_{m}$ be such that $W_1(\mu,\nu_{m})=\min_{\sigma\in \mathcal{S}_m}\min_{\zeta\in\PP_{\sigma,\vec q}(\erre)}W_1(\mu,\zeta)$.
    % \[
    % W_1(\mu,\nu_{m})=\min_{\sigma\in \mathcal{S}_m}\min_{\zeta\in\PP_{\sigma,\vec q}(\erre)}W_1(\mu,\zeta).
    % \]
    % 
    From Lemma \ref{lemma:equivalenceSC_W1}, we know that, for every $n\in\mathbb{N}$ and for every  $\vec x\in\erre^n$, there exists a $\nu_{\vec x,m}$ such that $SC_{opt}(\vec x)=W_1(\mu_n,\nu_{\vec x,m})$.
    By definition of $\nu_{m}$, we have that $W_1(\mu,\nu_{m})\le W_1(\mu,\nu_{\vec x,m})$.
    Since $W_1$ is a distance, it holds $W_1(\mu,\nu_{m})\le W_1(\mu,\nu_{\vec x,m})\le W_1(\mu,\mu_n)+W_1(\mu_n,\nu_{\vec x,m})$.
    By rearranging the terms and by taking the expected value with respect to the distribution of $\vec X$, we obtain $\EE [W_1(\mu,\nu_{m})]-\EE [W_1(\mu_n,\nu_{\vec x,m})]\le \EE [W_1(\mu,\mu_n)]$. 
    % \[
    % \EE [W_1(\mu,\nu_{m})]-\EE [W_1(\mu_n,\nu_{\vec x,m})]\le \EE [W_1(\mu,\mu_n)].
    % \]
    By a similar argument, we have that $W_1(\mu_n,\nu_{\vec x,m})\le W_1(\mu_n,\nu_m)\le W_1(\mu,\mu_n)+W_1(\mu,\nu_m)$, hence $\EE [W_1(\mu_n,\nu_{\vec x,m})]-\EE[W_1(\mu,\nu_{m})]\le \EE [W_1(\mu,\mu_n)]$.
    % \[
    % \EE [W_1(\mu_n,\nu_{\vec x,m})]-\EE [W_1(\mu,\nu_{m})]\le \EE [W_1(\mu,\mu_n)].
    % \]
    We then infer that $|\EE[W_1(\mu,\nu_{m})]-\EE[W_1(\mu_n,\nu_{\vec x,m})]|\le \EE[W_1(\mu,\mu_n)]$.
    % \[
    % |\EE[W_1(\mu,\nu_{m})]-\EE[W_1(\mu_n,\nu_{\vec x,m})]|\le \EE[W_1(\mu,\mu_n)].
    % \]
    Since the right handside of this inequality converges to $0$ as $n$ goes to $+\infty$ (see \cite{bobkov2019one}), we infer that $ \lim_{n\to\infty}\EE [W_1(\mu_n,\nu_{\vec x,m})]=W_1(\mu,\nu_{m})$,
    % \[
    % \lim_{n\to\infty}\EE [W_1(\mu_n,\nu_{\vec x,m})]=W_1(\mu,\nu_{m}),
    % \]
    which concludes the first part of the proof.

    \textbf{The limit of the expected Social Cost of the Mechanism.}
    First, we retrieve a minimizer to problem \eqref{eq:min_proj2}.
    Given $\vec v$, let us set $y_j=F_\mu^{[-1]}(v_j)$.
    For every $\lambda=\sum_{j\in[m]}\lambda_j\delta_{y_j}$ such that $\sum_{j\in[m]}\lambda_j=1$, we have that
    \[
    W_1(\mu,\lambda)=\sum_{j\in[m]}\int_{F_\mu^{[-1]}\big(\sum_{l=1}^{j-1}\lambda_l\big)}^{F_\mu^{[-1]}\big(\sum_{l=1}^{j}\lambda_l\big)}|x-y_j|d\mu\ge \int_{\erre}\min_{j\in[m]}|x-y_j|d\mu.
    \]
    Let us now consider the measure $\nu_{\vec v}=\sum_{j\in[m]}\Big(F_\mu(z_j)-F_\mu(z_{j-1})\Big)\delta_{y_j}$,
    % \[
    % \nu_{\vec v}=\sum_{j\in[m]}\Big(F_\mu(z_j)-F_\mu(z_{j-1})\Big)\delta_{y_j},
    % \]
    where $z_j=\frac{y_{j+1}+y_j}{2}$ if $j=2,\dots, m-1$, $z_0=-\infty$ and $z_m=+\infty$.
    We then have
    \begin{align*}
        W_1(\mu,\nu_{\vec v})&=\sum_{j\in[m]}\int_{F_\mu^{[-1]}\big(\sum_{l=1}^{j-1}F_\mu(z_l)-F_\mu(z_{l-1})\big)}^{F_\mu^{[-1]}\big(\sum_{l=1}^{j}F_\mu(z_l)-F_\mu(z_{l-1})\big)}|x-y_j|d\mu
        % =\sum_{j\in[m]}\int_{z_{j-1}}^{z_j}|x-y_j|d\mu
        =\int_{\erre}\min_{j\in[m]}|x-y_j|d\mu.
    \end{align*}
    Thus $\nu_{\vec v}$ is a solution to problem \eqref{eq:min_proj2}.
    We now study the limit of the expected Social Cost of the mechanism.
    The argument used for this part is similar to the one used for the limit expected optimal Social Cost but more delicate.
    Indeed, in this case, the set on which we minimize the Wasserstein distance does depend on the agents' report $\vec x$.
    Moreover, the sets on which are formulated problem \eqref{eq:min_proj2} and \eqref{eq:discrete_proj2} are, in general, different.
    To overcome these issues, we need to define two auxiliary probability measures, namely $\phi$ and $\psi$.
    Given $\vec x\in\erre^n$, let $\nu_{\vec v}$ and $\nu_{\vec x, \vec v}$ be the solutions to problem \eqref{eq:min_proj2} and \eqref{eq:discrete_proj2}, respectively.
    We define the measures $\phi=\sum_{j\in[m]}(\nu_{\vec x, \vec v})_j\delta_{y_j}$, where $y_j$ is the support of the measure $\nu_{\vec v}$.
    For every $n\in\mathbb{N}$ and every $\vec x\in \erre^n$, we have that $W_1(\mu,\nu_{\vec v})\le W_1(\mu,\phi)\le W_1(\mu,\mu_n)+W_1(\mu_n,\nu_{\vec x, \vec v})+W_1(\nu_{\vec x, \vec v},\phi).$
    % \begin{align*}
    %     W_1(\mu,\nu_{\vec v})&\le W_1(\mu,\phi)\le W_1(\mu,\mu_n)+W_1(\mu_n,\nu_{\vec x, \vec v})+W_1(\nu_{\vec x, \vec v},\phi).
    % \end{align*}
    We therefore infer
    \begin{equation}
        \label{eq:inequality_1}
         W_1(\mu,\nu_{\vec v})-W_1(\mu_n,\nu_{\vec x, \vec v})\le W_1(\mu,\mu_n)+W_1(\nu_{\vec x, \vec v},\phi).
    \end{equation}
    Similarly, given $\vec x\in\erre^n$, we define $\psi=\sum_{j\in[m]}(\nu_{\vec v})_j\delta_{y_{\vec x,j}}$, where $\{y_{\vec x,j}\}_{j\in[m]}$ is the support of $\nu_{\vec x, \vec v}$.
    We then have
    \begin{equation}
        \label{eq:inequality_2}
         W_1(\mu_n,\nu_{\vec x, \vec v})-W_1(\mu,\nu_{\vec v})\le W_1(\mu,\mu_n)+W_1(\nu_{\vec v},\psi).
    \end{equation}
   
    Since the Wasserstein distance is always non negative, we can combine the estimations in \eqref{eq:inequality_1} and \eqref{eq:inequality_2}, to obtain
    \[
    |W_1(\mu_n,\nu_{\vec x, \vec v})-W_1(\mu,\nu_{\vec v})|\le W_1(\mu,\mu_n)+W_1(\nu_{\vec v},\psi)+W_1(\nu_{\vec x, \vec v},\phi).
    \]
    If we take the expectation of both sides of the inequality the inequality still holds.
    Since $\lim_{n\to\infty}\EE [W_1(\mu_n,\mu)]=0$ (see \cite{bobkov2019one}), we just need to prove that $\lim_{n\to\infty}\EE [W_1(\nu_{\vec v},\psi)]=\lim_{n\to\infty}\EE [W_1(\nu_{\vec x, \vec v},\phi)]=0$.
    Let us consider $\EE [W_1(\nu_{\vec v},\psi)]$, the convergence of $\EE [W_1(\nu_{\vec x, \vec v},\phi)]$ follows by a similar argument.
    We notice that $\psi$ and $\nu_{\vec v}$ have different supports, but $\psi_j=(\nu_{\vec v})_j$, thus it holds $\EE [W_1(\nu_{\vec v},\psi)]\le \sum_{j\in[m]}\psi_j\EE [|y_j-y_{\vec x,j}|]$, where $y_{\vec x,j}$ is the $j$-th point in the support of $\nu_{\vec x, \vec v}$.
    By definition of ERM, it holds $y_{\vec x,j}=x_{\floor{v_j(n-1)}+1}$.
    Since the $\big(\floor{v_j(n-1)}+1\big)$-th order statistics converges to the $v_j$-th quantile of $\mu$ \cite{de1979bahadur}, we have that $\EE [|y_j-y_{\vec x,j}|]\to 0$ as $n\to \infty$.
    Indeed, since $(\nu_{\vec x, \vec v})_j\le 1$ for every $j\in[m]$, we have $\EE[W_1(\nu_{\vec x, \vec v},\phi)]\le \sum_{j\in[m]}\EE[|y_{\vec x,j}-y_j|]$, where $y_{\vec x,j}=x_{\floor{v_j(n-1)}+1}$ and $y_j=F_\mu^{[-1]}(v_j)$.
    Using Bahadur's representation formula \cite{10.1214/aoms/1177699450}, we have that
    \begin{equation}
        \label{eq:bahadur}
        X_{\floor{(n-1)v_j}+1}-F_\mu^{[-1]}(v_j)=\frac{S_n(F_\mu^{[-1]}(v_j))-v_j}{\rho_\mu(F_\mu^{[-1]}(v_j))}+R_n,
    \end{equation}
    where $R_n$ is the rest of the Bahadur's formula, for which holds $R_n\le O(n^{-\frac{3}{4}})$ with probability $1$, $\rho_\mu$ is the density of $\mu$, and $S_n(t)=\frac{1}{n}\sum_{i=1}^n \mathbb{I}(X_i)_{\{X_i\le t\}}$ where $\mathbb{I}(X_i)_{\{X_i\le t\}}=1$ if $X_i\le t$ and $\mathbb{I}(X_i)_{\{X_i\le t\}}=0$ otherwise.
    % \[
    % \mathbb{I}(X_i)_{\{X_i\le t\}}=\begin{cases}
    %     1 \quad \text{if}\quad X_i\le t,\\
    %     0 \quad \text{otherwise.}    
    % \end{cases}
    % \]
    Since $\rho_\mu$ is always strictly positive on the interior of the support of $\mu$ and $v_j\in (0,1)$ for every $j=1,\dots,m$, we have
    \begin{align*}
        \EE[&|y_{\vec x,j}-y_j|]\le K \EE\big[|S_n(F_\mu^{[-1]}(v_j))-v_j|\big]+\EE[|R_n|]\\
        &\le K\EE\bigg[\frac{1}{n}|\sum_{i=1}^n(\mathbb{I}_{\{X_i\le F_\mu^{[-1]}(v_j)\}}(X_i)-v_j)|\bigg]+ Cn^{-\frac{3}{4}}\\
        &\le K\EE\Big[\Big|\sum_{i=1}^n\frac{(\mathbb{I}_{\{X_i\le F_\mu^{[-1]}(v_j)\}}(X_i))-v_j)}{n}\Big|\Big]+ Cn^{-\frac{3}{4}} \\ %\le
        % &\le \frac{K}{n}\Big(\EE[\sum_{i=1}^n|\mathbb{I}(X_i)_{\{X_i\le F_\mu^{[-1]}(v_j)\}}-v_j|^2]\Big)^{\frac{1}{2}}+ Cn^{-\frac{3}{4}}\\
        & \leq \frac{K}{\sqrt{n}}Var(\mathbb{I}_{\{X_1\le F_\mu^{[-1]}(v_j)\}}(X_1))+ Cn^{-\frac{3}{4}},
    \end{align*}
    since the $X_i$ are i.i.d. and $\mathbb{I}_{\{X_i\le F_\mu^{[-1]}(v_j)\}}(X_i)$ is a Bernoulli variable that with probability $v_j$ is equal to $1$ and equal to $0$ otherwise, thus its variance is finite.
    We then conclude that $\EE [W_1(\nu_{\vec x, \vec v},\phi)]$ converges to $0$.
    % 
    % Similarly, $\EE [W_1(\nu_{\vec v},\psi)]$ converges to $0$.

    \textbf{Characterizing the Bayesian approximation ratio.}
    To conclude, notice that the distance between an absolutely continuous measure and a discrete measure is always greater than zero, thus $\lim_{n\to\infty}\EE [SC_{opt}(\vec X)]>0$.
    For this reason, we have that the limit of the ratio is equal to the ratio of the limits, which proves \eqref{eq:bayesianlimitRM}.
    % \[
    % \lim_{n\to\infty}\frac{\EE [SC_{RM^{(\pi,\vec v)}}(\vec X)]}{\EE [SC_{opt}(\vec X)]}=\frac{\min_{\lambda_j\le c_{\pi(j)}}W_1(\mu_n,\lambda)}{\min_{\sigma\in \mathcal{S}_m}\min_{\zeta\in\PP_{\sigma,\vec q}(\erre)}W_1(\mu,\zeta)},
    % \]
    % where $\lambda=\sum_{j\in[m]}\lambda_j\delta_{F_\mu^{[-1]}(v_j)}$, which concludes the proof.
    %
\end{proof}

Notice that our result applies only to feasible $\RM$, such that $\vec v\in(0,1)^m$, since, for general measures $\mu$, the values $F_\mu^{[-1]}(0)$ and $F_\mu^{[-1]}(1)$ might not be finite.
% 
%{\color{red} In App. \ref{app:C}, we use Theorem \ref{thm:limitBAR} to study the limit Bayesian approximation ratio of other mechanisms.}
% 
% {\color{red} Moreover, in Appendix \ref{app:C}, we show that Theorem \ref{thm:limitBAR} allows to study the limit Bayesian approximation ratio of other mechanisms, such as the InnerPoint Mechanism or the Ranking Mechanisms.}
% 
% 
Finally, we notice that the Bayesian approximation ratio is invariant to positive affine transformation of $\mu$.
% 
% In particular, the limit of the Bayesian approximation ratio remains the same across all the Gaussian-distributed populations.
% 

\begin{corollary}
\label{crr:scale_invariance}
    Let $\vec q$ be a p.c.v. and let $X$ be the random variable associated with $\mu$.
    Given $\alpha>0$ and $\beta\in\erre$, let $\mu_{\alpha,\beta}$ be the probability distribution associated with $\alpha X +\beta$.
    Then, the asymptotical Bayesian approximation ratio of any feasible ERM is the same regardless of whether the agent type is distributed according to $\mu$ or $\mu_{\alpha,\beta}$.
\end{corollary}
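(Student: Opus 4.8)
The plan is to work directly from the characterization of the asymptotic ratio furnished by Theorem~\ref{thm:limitBAR}, namely
\[
\lim_{n\to\infty}\frac{\EE[SC_{\pi,\vec v}(\vec x)]}{\EE[SC_{opt}(\vec x)]}=\frac{W_1(\mu,\nu_{\vec v})}{W_1(\mu,\nu_{m})},
\]
and to show that passing from $\mu$ to $\mu_{\alpha,\beta}$ multiplies both the numerator and the denominator by the same factor $\alpha$, so that the ratio is unchanged. Write $T\colon\erre\to\erre$ for the increasing affine map $T(x)=\alpha x+\beta$, so that $\mu_{\alpha,\beta}=T_{\#}\mu$. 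The whole argument rests on two elementary facts about $T$. First, the scaling identity $W_1(T_{\#}\eta,T_{\#}\zeta)=\alpha\,W_1(\eta,\zeta)$ for all $\eta,\zeta\in\PP(\erre)$: the map $(x,y)\mapsto(T(x),T(y))$ is a bijection between $\Pi(\eta,\zeta)$ and $\Pi(T_{\#}\eta,T_{\#}\zeta)$ that rescales the cost $|x-y|$ by the constant $\alpha$, so it carries optimal plans to optimal plans. Second, since $F_{\mu_{\alpha,\beta}}(y)=F_\mu\big((y-\beta)/\alpha\big)$, the quantile functions satisfy $F_{\mu_{\alpha,\beta}}^{[-1]}(v)=T\big(F_\mu^{[-1]}(v)\big)$ for every $v\in(0,1)$.

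For the denominator, I would observe that the constraint set $\PP_{\sigma,\vec q}(\erre)$ appearing in problem~\eqref{eq:min_proj} depends only on $\sigma$ and $\vec q$, not on $\mu$. Because $T$ is an increasing bijection of $\erre$, the pushforward $T_{\#}$ preserves both the ordering constraint $y_1\le\dots\le y_m$ and the weight constraints $\zeta_j\le q_{\sigma(j)}$, hence $T_{\#}$ is a bijection of $\PP_{\sigma,\vec q}(\erre)$ onto itself. Combining this with the scaling identity gives, for each $\sigma$,
\[
\min_{\zeta\in\PP_{\sigma,\vec q}(\erre)}W_1(\mu_{\alpha,\beta},\zeta)
=\min_{\zeta\in\PP_{\sigma,\vec q}(\erre)}W_1\big(T_{\#}\mu,T_{\#}\zeta\big)
=\alpha\min_{\zeta\in\PP_{\sigma,\vec q}(\erre)}W_1(\mu,\zeta),
\]
and taking the minimum over $\sigma\in\mathcal{S}_m$ yields that the optimal value for $\mu_{\alpha,\beta}$ equals $\alpha\,W_1(\mu,\nu_m)$, attained at $T_{\#}\nu_m$.

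For the numerator, I would first note that feasibility of $\RM$ is governed by system~\eqref{eq:feas_system}, which involves only $\vec q$ and $\vec v$; thus $(\pi,\vec v)$ remains feasible after the change of distribution and the ratio in \eqref{eq:bayesianlimitRM} is still the right object. In problem~\eqref{eq:min_proj2} the support points for $\mu_{\alpha,\beta}$ are $F_{\mu_{\alpha,\beta}}^{[-1]}(v_j)=T\big(F_\mu^{[-1]}(v_j)\big)$, so every feasible measure is of the form $T_{\#}\big(\sum_{j}\lambda_j\delta_{F_\mu^{[-1]}(v_j)}\big)$ with the same weight feasibility $\lambda_j\le q_{\pi(j)}$, $\sum_j\lambda_j=1$; applying the scaling identity shows the minimization over $\lambda$ is identical up to the constant factor $\alpha$, so the optimal value is $\alpha\,W_1(\mu,\nu_{\vec v})$, attained at $T_{\#}\nu_{\vec v}$. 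Forming the ratio, the factor $\alpha$ cancels and the limit equals $W_1(\mu,\nu_{\vec v})/W_1(\mu,\nu_m)$, exactly the value for $\mu$, which is the claim. The only non-mechanical point, and the one I would be most careful about, is verifying that $T_{\#}$ maps the two constraint/feasible sets bijectively onto their counterparts for $\mu_{\alpha,\beta}$ (in particular that the quantile support points transform as $T$-images and that the ordering is preserved because $\alpha>0$); the scaling identity for $W_1$ itself is routine.
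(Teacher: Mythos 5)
Your proposal is correct and takes essentially the same route as the paper: both pass to the pushforward under the increasing affine map, use the scaling identity $W_1(T_\#\eta,T_\#\zeta)=\alpha\,W_1(\eta,\zeta)$ together with the quantile transformation $F^{[-1]}_{\mu_{\alpha,\beta}}(v)=T\big(F_\mu^{[-1]}(v)\big)$, note that the constraint sets in \eqref{eq:min_proj} and \eqref{eq:min_proj2} are carried onto themselves, and cancel the common factor $\alpha$ in the ratio. The only cosmetic difference is that you establish optimality of the pushed-forward minimizer directly via the bijection $T_\#$ of $\PP_{\sigma,\vec q}(\erre)$ onto itself, whereas the paper reaches the same conclusion by a contradiction argument (pulling back a hypothetically better competitor with the inverse map); these are the same idea, and yours is stated slightly more cleanly.
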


\begin{proof}
% [Proof of Corollary \ref{crr:scale_invariance}] 
    Let $\nu^{(\alpha,\beta)}_m$ be the solution to $\min_{\sigma\in\mathcal{S}_m}\min_{\zeta\in\PP_{\sigma,m}(\erre)}W_1(\mu_{\alpha,\beta},\zeta)$ and let $\nu_m$ be the solution to \eqref{eq:min_proj}.
    % \[
    % \min_{\sigma\in\mathcal{S}_m}\min_{\zeta\in\PP_{\sigma,m}(\erre)}W_1(\mu_{\alpha,\beta},\zeta)
    % \]
    Let us fix $L(x)=\alpha x +\beta$.
    Since $\alpha>0$, $L$ is a bijective and monotone-increasing function.
    We denote the inverse function of $L$ with $H$.
    It is well-known that if $X$ is a random variable whose law is $\mu$, then the law of $L(X)=\alpha X+\beta$ is $L_\#\mu$.
    We recall that $L_\#\mu$ is the pushforward of the measure $\mu$, defined as $L_\#\mu(A)=\mu(H(A))$, see \cite{villani2009optimal}.
    By definition, we have that $L_\#\delta_{x}=\delta_{\alpha x + \beta}$, thus $L_\#(\sum_{j\in [m]}\lambda_j\delta_{x_j})=\sum_{j\in [m]}\lambda_j\delta_{\alpha x_j + \beta}$.
    We now show that $\nu^{(\alpha,\beta)}_m=L_\#\nu_m$.
    First, notice that, since $\alpha>0$ the function $L$ is monotone increasing, thus if $\nu_m\in\PP_{\sigma,m}(\erre)$, then $L_\#\nu_m\in\PP_{\sigma,m}(\erre)$.
    Toward a contradiction, let us now assume that $\gamma$ is such that $W_1(\gamma,L_\#\mu)<W_1(L_\# \nu_m,L_\#\mu)$.
    % \[
    %     W_1(\gamma,L_\#\mu)<W_1(L_\# \nu_m,L_\#\mu).
    % \]
    Let us now define $\eta=H_\#\gamma\in\PP_k(\erre)$.
    Since $H$ is also monotone increasing, we have $\gamma\in\PP_{\sigma,m}(\erre)$ and $\gamma=L_\#\eta$.
    Furthermore, by the properties of the Wasserstein distance, we have $W_1(L_\#\eta,L_\#\mu)=\alpha W_1(\eta,\mu)$
    % \begin{equation}
    % % \label{eq:scalingWopt}
    %     W_1(L_\#\eta,L_\#\mu)=\alpha W_1(\eta,\mu)
    % \end{equation}
    and $W_1(L_\#\nu_m,L_\#\mu)=\alpha W_1(\nu_m,\mu)$.
    % \[
    %     W_1(L_\#\nu_m,L_\#\mu)=\alpha W_1(\nu_m,\mu).
    % \]
    In particular, we get $W_1(\eta,\mu)<W_1(\nu_m,\mu)$,
    % \[
    %     W_1(\eta,\mu)<W_1(\nu_m,\mu),
    % \]
    which contradicts the optimality of $\nu_m$.
    We therefore conclude that $L_\#\nu_m=\nu^{(\alpha,\beta)}_m$ and that $W_1(\nu^{(\alpha,\beta)}_m\mu_{\alpha,\beta})=\alpha W_1(\nu_m,\mu)$.
    % \[
    %     W_1(\nu^{(\alpha,\beta)}_m\mu_{\alpha,\beta})=\alpha W_1(\nu_m,\mu).
    % \]
    To conclude, we observe that
    \begin{align*}
            \min_{\substack{\lambda_j\le q_{\sigma(j)},\\\sum_{j\in [m]}\lambda_j=1}}&W_1\Big(\mu_{\alpha,\beta},\sum_{j\in[m]}\lambda_j\delta_{\alpha y_j+\beta}\Big) = \alpha\min_{\substack{\lambda_j\le q_{\sigma(j)},\\\sum_{j\in [m]}\lambda_j=1}}W_1\Big(\mu,\sum_{j\in[m]}\lambda_j\delta_{y_j}\Big),
    \end{align*}
    for any set of $m$ values $\{y_j\}_{j\in[m]}$.
    We then observe that $F^{[-1]}_{\mu_{\alpha,\beta}}(v_j)=\alpha F^{[-1]}_{\mu}(v_j) +\beta$, thus
    \begin{align*}
        &\min_{\substack{\lambda_j\le q_{\sigma(j)},\\\sum_{j\in [m]}\lambda_j=1}}W_1\Big(\mu_{\alpha,\beta},\sum_{j\in[m]}\lambda_j\delta_{\alpha F^{[-1]}_{\mu}(v_j)+\beta}\Big)=\min_{\substack{\lambda_j\le q_{\sigma(j)},\\\sum_{j\in [m]}\lambda_j=1}}W_1\Big(\mu_{\alpha,\beta},\sum_{j\in[m]}\lambda_j\delta_{F^{[-1]}_{\mu_{\alpha,\beta}}(v_j)}\Big).
    \end{align*}

    By taking the ratio of the two expectations, the factor $\alpha$ evens out, hence the limit of the Bayesian approximation ratio of any ERM does not depend on $\alpha$ nor $\beta$.
\end{proof}

\subsection{Extending to the $l_p$ costs and the Maximum Cost}
\label{sect:extlpMC}

In this section, we extend the findings of Section \ref{SCsection} to encompass the $l_p$ costs and the Maximum Cost.
First, we extend Lemma \ref{lemma:equivalenceSC_W1}.
% 
% show that the optimal $l_p$ and Maximum costs can be phrased as a projection problems with respect to a suitable Wasserstein distance.

\begin{lemma}
\label{lemma:equivalenceSC_lpmc}
    Given a p.c.v. $\vec q$, let $\vec x\in\erre^n$ be the vector containing the agents' reports.
    Let us fix $\mu_{\vec x}:=\frac{1}{n}\sum_{i\in [n]}\delta_{x_i}$. 
    Then, it holds
    \begin{equation}
    \label{eq:min_discreteoptlp}
      (C_{l_p})_{opt}(\vec x)=\min_{\sigma\in\mathcal{S}_m,\zeta\in\PP_{\sigma,\vec q}(\erre)}W_p(\mu_n,\zeta)\quad\text{and}\quad \MCo_{opt}(\vec x)=\min_{\sigma\in\mathcal{S}_m,\zeta\in\PP_{\sigma,\vec q}(\erre)}W_\infty(\mu_n,\zeta),  
    \end{equation}
    where $\PP_{\sigma,\vec q}(\erre)$ is the set of probability measures such that $\zeta=\sum_{j\in[m]}\zeta_j\delta_{y_j}$, where $y_1\le y_2\le\dots\le y_m$ and $\zeta_j\le q_{\sigma(j)}$ for every $j\in[m]$.
    Similarly, given a permutation $\pi\in\mathcal{S}_m$ and a vector $\vec v$ that induce a feasible ERM, it holds 
    \begin{equation}
    \label{eq:discrete_proj2lp}
    (C_{l_p})_{\pi,\vec v}(\vec x)=\min_{\substack{\lambda_j\le q_{\pi(j)},\\\sum_{j\in[m]}\lambda_j=1}}W_p(\mu_n,\lambda)\quad\text{and}\quad \MCo_{\pi,\vec v}(\vec x)=\min_{\substack{\lambda_j\le q_{\pi(j)},\\\sum_{j\in[m]}\lambda_j=1}}W_\infty(\mu_n,\lambda), 
    \end{equation}
     where $(C_{l_p})_{\pi,\vec v}(\vec x)$ and $MC_{\pi,\vec v}(\vec x)$ are the $l_p$ and Maximum Cost attained by $\RM$ on instance $\vec x$, respectively, $\lambda=\sum_{j\in[m]}\lambda_j\delta_{x_{r_j}}$, and $r_j=\floor{v_j(n-1)}+1$.
\end{lemma}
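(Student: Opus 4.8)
The plan is to replay, almost verbatim, the two-directional argument of Lemma~\ref{lemma:equivalenceSC_W1}, replacing the summation functional that underlies $W_1$ with the functional appropriate to each cost. Concretely, for the $l_p$ cost I would measure the transport with the ground cost $|x-y|^p$ and take a $p$-th root at the end, while for the Maximum Cost I would use the bottleneck functional $\max_{(x,y)\in spt(\pi)}|x-y|$ that defines $W_\infty$. The crucial point is that the correspondence between feasible facility locations $(\vec y,\sigma,\Gamma)$ and admissible transportation plans is exactly the one constructed in Lemma~\ref{lemma:equivalenceSC_W1}: a matching $\Gamma$ induces the deterministic plan $\pi=\sum_{(i,j)\in\Gamma}\frac1n\delta_{(x_i,y_j)}$, and conversely. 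Only the way edge costs are aggregated changes, so the geometry of the construction carries over unaltered.

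For the inequality ``$\ge$'' in \eqref{eq:min_discreteoptlp}, I would take an optimal facility location $(\vec y,\sigma,\Gamma)$ and form the plan $\pi=\sum_{(i,j)\in\Gamma}\frac1n\delta_{(x_i,y_j)}\in\Pi(\mu_n,\zeta)$ with $\zeta\in\PP_{\sigma,\vec q}(\erre)$, exactly as in Lemma~\ref{lemma:equivalenceSC_W1}. For the $l_p$ cost this gives $(C_{l_p})_{opt}(\vec x)^p=\frac1n\sum_i d_i^p=\sum_{i,j}|x_i-y_j|^p\,\pi_{x_i,y_j}\ge W_p^p(\mu_n,\zeta)\ge \min_{\sigma,\zeta}W_p^p(\mu_n,\zeta)$, and taking $p$-th roots yields the claim; for the Maximum Cost the same plan gives $\MCo_{opt}(\vec x)=\max_i d_i=\max_{(x,y)\in spt(\pi)}|x-y|\ge W_\infty(\mu_n,\zeta)\ge\min_{\sigma,\zeta}W_\infty(\mu_n,\zeta)$.

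For the reverse inequality ``$\le$'' I would start from a minimizing pair $(\sigma,\zeta)$ of the right-hand side and extract a \emph{deterministic} optimal plan, i.e.\ one sending each $x_i$ to a single $y_j$; this plan defines a feasible matching $\Gamma$ whose resulting location has cost no larger than the optimal transport value, closing the loop. The existence of such a deterministic plan is where the argument genuinely departs from the $W_1$ case and is the main obstacle. For $p<\infty$ the transport objective is still linear in the plan, so the same structure result \cite{auricchio2022structure} used in Lemma~\ref{lemma:equivalenceSC_W1} applies (equivalently, the transportation polytope has integral vertices because the source masses are uniformly $\frac1n$, so a linear functional is minimized at a deterministic assignment). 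For $W_\infty$ the functional is a max rather than a sum and the vertex/structure argument no longer applies directly; instead I would fix the optimal bottleneck value $t^*=W_\infty(\mu_n,\zeta)$, build the bipartite flow network with unit-capacity sources (one per agent), sink capacities $c_{\sigma(j)}$, and an edge $(i,j)$ only when $|x_i-y_j|\le t^*$, observe that the fractional optimal plan certifies a feasible flow saturating all agents, and invoke integrality of max-flow with integer capacities to obtain a deterministic assignment supported on those edges. Such an assignment respects the capacities and has Maximum Cost at most $t^*$, giving $\MCo_{opt}(\vec x)\le t^*$.

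Combining the two inequalities yields \eqref{eq:min_discreteoptlp}. Finally, the mechanism identities \eqref{eq:discrete_proj2lp} follow from the identical constructions: since $\RM$ fixes the facility positions at $x_{r_j}$ with $r_j=\floor{v_j(n-1)}+1$ and then sends each agent to its nearest (non-overloaded) facility, the reachable target measures are precisely those of the form $\lambda=\sum_{j\in[m]}\lambda_j\delta_{x_{r_j}}$ with $\lambda_j\le q_{\pi(j)}$ and $\sum_{j\in[m]}\lambda_j=1$, and minimizing $W_p$ (resp.\ $W_\infty$) over this family reproduces the $l_p$ (resp.\ Maximum) cost of the mechanism by the same forward/reverse matching argument, with the bottleneck integrality step again supplying the deterministic plan in the $W_\infty$ case.
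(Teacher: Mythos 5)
Your proposal is correct, and it follows the same route as the paper: the paper's own proof of this lemma is literally a two-sentence remark stating that the construction of Lemma~\ref{lemma:equivalenceSC_W1} (matching $\Gamma \leftrightarrow$ deterministic plan $\sum_{(i,j)\in\Gamma}\frac{1}{n}\delta_{(x_i,y_j)}$) carries over to any $l_p$ cost. Where you go beyond the paper is in the reverse inequality for $W_\infty$: you correctly observe that the structure result invoked in Lemma~\ref{lemma:equivalenceSC_W1} (existence of an optimal plan sending each $x_i$ to a single $y_j$) is a statement about \emph{linear} transport objectives, so it covers $W_p^p$ for $p<\infty$ but not the bottleneck functional, and you repair this by fixing $t^*=W_\infty(\mu_n,\zeta)$, restricting to edges with $|x_i-y_j|\le t^*$, and extracting an integral (deterministic) assignment from max-flow integrality (equivalently, total unimodularity of the restricted transportation polytope, using that the capacities are multiples of $\frac{1}{n}$). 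The paper glosses over this point entirely, so your version is strictly more careful; the only cost is the extra flow argument, which is standard. One minor remark: in \eqref{eq:discrete_proj2lp} you should still note (as for the Social Cost) that feasibility of $\RM$ guarantees the nearest-facility assignment respects the capacities, so the measure induced by the mechanism's matching indeed lies in the admissible family $\{\lambda:\lambda_j\le q_{\pi(j)}\}$; with that, your forward/reverse argument closes both identities.
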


\begin{proof}
    The proof of the Lemma follows the same argument used to prove Lemma \ref{lemma:equivalenceSC_W1}.
    Indeed, the construction used to build a solution to problem \eqref{eq:min_discreteoptlp} starting from a solution to the $m$-CFLP (and \textit{vice-versa}) can be implemented to handle any $l_p$ cost function. 
\end{proof}

Owing to Lemma \ref{lemma:equivalenceSC_lpmc}, we can adapt the argument used to prove Theorem \ref{thm:limitBAR} to encompass the $l_p$ and Maximum costs.
Notice, however, that the set of conditions under which the Bayesian approximation ratio converges depends on the value of $p$ we are considering.

\begin{theorem}
\label{thm:limitBAR_p_MC}
    Given the p.c.v. $\vec q$, let $\vec v\in (0,1)^m$ and $\pi\in\mathcal{S}_m$ be such that $\RM$ is feasible.
    Given $p\in(1,+\infty)$, then if $\int_{\erre}|x|^p d\mu<+\infty$ holds, we have
    \begin{equation}
    \label{eq:bayesianlimitRMlp}
        \lim_{n\to +\infty}\frac{\EE[(C_{l_p})_{\pi,\vec v}(\vec x)]}{\EE[(C_{l_p})_{opt}(\vec x)]}=\frac{W_p(\mu,\nu_{\vec v})}{W_p(\mu,\nu_{m})},
    \end{equation}
    where $\nu_{m}$ is a solution to 
    \begin{equation}
    \label{eq:lp_1}
        \min_{\sigma\in \mathcal{S}_m}\min_{\zeta\in\PP_{\sigma,\vec q}(\erre)}W_p(\mu,\zeta),
    \end{equation}
    while $\nu_{\vec v}$ is a solution to 
    \begin{equation}
    \label{eq:lp_2}
        \min_{\lambda_j\le q_{\pi(j)},\sum_{j\in[m]}\lambda_j=1}W_p\Big(\mu,\sum_{j\in[m]}\lambda_j\delta_{F_\mu^{[-1]}(v_j)}\Big).
    \end{equation}
    If $\mu$ has a compact support, then we have that 
    \begin{equation}
    \label{eq:bayesianlimitRMmc}
        \lim_{n\to +\infty}\frac{\EE[\MCo_{\pi,\vec v}(\vec x)]}{\EE[\MCo_{opt}(\vec x)]}=\frac{W_\infty(\mu,\nu_{\vec v})}{W_\infty(\mu,\nu_{m})},
    \end{equation}
    where $\nu_{m}$ is a solution to \eqref{eq:lp_1} for $p=\infty$ and $\nu_{\vec v}$ is a solution to \eqref{eq:lp_2} for $p=\infty$.
    % \[
    % \min_{\sigma\in \mathcal{S}_m}\min_{\zeta\in\PP_{\sigma,\vec q}(\erre)}W_\infty(\mu,\zeta),
    % \]
    % while $\nu_{\vec v}$ is a solution to 
    % \[
    % \min_{\lambda_j\le q_{\pi(j)},\sum_{j\in[m]}\lambda_j=1}W_\infty\Big(\mu,\sum_{j\in[m]}\lambda_j\delta_{F_\mu^{[-1]}(v_j)}\Big).
    % \]
\end{theorem}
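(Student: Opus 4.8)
The plan is to replicate the three-step structure of the proof of Theorem~\ref{thm:limitBAR}, replacing $W_1$ with $W_p$ in the finite case and with $W_\infty$ in the Maximum Cost case, and invoking Lemma~\ref{lemma:equivalenceSC_lpmc} in place of Lemma~\ref{lemma:equivalenceSC_W1}. Since $W_p$ is a metric on $\PP(\erre)$ for every $p\in[1,+\infty]$, all the triangle-inequality sandwich estimates used before transfer verbatim; the only genuinely new ingredients are the empirical-measure convergence $\EE[W_p(\mu_n,\mu)]\to 0$, which is where the moment/compactness hypotheses enter, and an $L^p$ (respectively $L^\infty$) upgrade of the order-statistics convergence used in Step~2.

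First I would establish the limit of the expected optimal cost. Existence of $\nu_{m}$ follows as before: the admissible set $\PP_{\sigma,\vec q}(\erre)$ is closed, the finite $p$-th moment (respectively compact support) makes $\min_\zeta W_p(\mu,\zeta)$ well posed, and the minimizing support can be confined to a compact interval around $spt(\mu)$. For the construction of $\nu_{\vec v}$ I would note that, for fixed support points $y_j=F_\mu^{[-1]}(v_j)$, the assignment minimizing $W_p^p(\mu,\lambda)=\int_\erre\min_j|x-y_j|^p\,d\mu$ (respectively $W_\infty(\mu,\lambda)=\operatorname{ess\,sup}_x\min_j|x-y_j|$) is again the nearest-point assignment, because $t\mapsto t^p$ and the $\max$ are monotone, so the midpoints $z_j=\tfrac{y_{j+1}+y_j}{2}$ still cut out the optimal weights; feasibility of $\RM$ guarantees that these weights respect the constraints $\lambda_j\le q_{\pi(j)}$. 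With $\nu_{m}$ in hand, the bound $|W_p(\mu,\nu_{m})-W_p(\mu_n,\nu_{\vec x,m})|\le W_p(\mu,\mu_n)$ and its expectation reduce everything to $\EE[W_p(\mu_n,\mu)]\to 0$, which holds under $\int_\erre|x|^p\,d\mu<+\infty$ for finite $p$ and under compact support for $p=\infty$ (see \cite{bobkov2019one}).

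Second, and this is the delicate step, I would treat the expected cost of the mechanism exactly as in Theorem~\ref{thm:limitBAR}: introduce the auxiliary measures $\phi$ and $\psi$ that swap the supports of $\nu_{\vec x,\vec v}$ and $\nu_{\vec v}$ while keeping their weights, obtaining $|W_p(\mu_n,\nu_{\vec x,\vec v})-W_p(\mu,\nu_{\vec v})|\le W_p(\mu,\mu_n)+W_p(\nu_{\vec v},\psi)+W_p(\nu_{\vec x,\vec v},\phi)$. Since $\psi$ and $\nu_{\vec v}$ share the same weights, the diagonal coupling gives $W_p^p(\nu_{\vec v},\psi)\le\sum_{j\in[m]}\psi_j|y_j-x_{r_j}|^p$ with $r_j=\floor{v_j(n-1)}+1$, and symmetrically for $\phi$. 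The crux is therefore to show $\EE[|x_{r_j}-F_\mu^{[-1]}(v_j)|^p]\to 0$ for every $j$. This is the main obstacle and the reason the hypotheses differ across $p$: Bahadur's representation~\eqref{eq:bahadur} controls the leading Bernoulli term, but passing from convergence in probability to convergence of the $p$-th moment requires uniform integrability of the central order statistics, which the assumption $\int_\erre|x|^p\,d\mu<+\infty$ supplies; for $p=\infty$ I would instead use that compact support makes every $x_{r_j}$ uniformly bounded, so the almost-sure convergence $x_{r_j}\to F_\mu^{[-1]}(v_j)$ upgrades to $\EE[\max_j|x_{r_j}-F_\mu^{[-1]}(v_j)|]\to 0$ by dominated convergence.

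Finally I would combine the two limits. Because $\mu$ is absolutely continuous while $\nu_{m}$ is discrete, $W_p(\mu,\nu_{m})>0$, so the denominator stays bounded away from zero and the limit of the ratio equals the ratio of the limits, yielding \eqref{eq:bayesianlimitRMlp} and \eqref{eq:bayesianlimitRMmc}. I expect the only place where the finite-$p$ and the $p=\infty$ arguments genuinely diverge to be the $L^p$-versus-$L^\infty$ control of the order statistics described above.
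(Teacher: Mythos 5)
Your skeleton coincides with the paper's proof --- the same three steps, the same auxiliary measures $\phi$ and $\psi$, the same triangle-inequality sandwich --- but at the one step the paper treats as the genuinely new content you take a different, heavier route. After the diagonal coupling, you aim to show $\EE[|x_{r_j}-F_\mu^{[-1]}(v_j)|^p]\to 0$, i.e.\ $L^p$ convergence of the central order statistics, via uniform integrability (finite $p$) or dominated convergence ($p=\infty$). The paper never needs any moment of the order statistics beyond the first: since $(\nu_{\vec v})_j\le 1$, it bounds
\begin{equation*}
W_p(\nu_{\vec v},\psi)\le\Big(\sum_{j\in[m]}(\nu_{\vec v})_j|y_j-y_{\vec x,j}|^p\Big)^{\frac{1}{p}}\le\Big(\sum_{j\in[m]}|y_j-y_{\vec x,j}|^p\Big)^{\frac{1}{p}}\le\sum_{j\in[m]}|y_j-y_{\vec x,j}|,
\end{equation*}
using the elementary inequality $\big(\sum_j a_j^p\big)^{1/p}\le\sum_j a_j$ for nonnegative $a_j$ (and $\max_j\le\sum_j$ when $p=\infty$), so the whole estimate collapses to the first-moment convergence $\EE[|y_j-y_{\vec x,j}|]\to 0$ already established in Theorem \ref{thm:limitBAR} through Bahadur's representation. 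In the paper, the hypotheses ($p$-th moment, resp.\ compact support) enter only through $\EE[W_p(\mu,\mu_n)]\to 0$ and the well-posedness of the limit problems, exactly as you note. What the paper's trick buys is that no new probabilistic estimate is required; what your route would buy is a stronger intermediate statement ($L^p$ consistency of sample quantiles), but at a real cost.

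The cost is that your uniform-integrability claim is asserted rather than proven, and it is not an immediate consequence of $\int_\erre|x|^p\,d\mu<+\infty$: that hypothesis controls the tails of $\mu$, not of the random variables $x_{r_j}$. To make the step rigorous you would need something like the binomial tail bound $P(x_{r_j}>t)\le\binom{n}{n-r_j+1}\big(1-F_\mu(t)\big)^{n-r_j+1}$ combined with Markov's inequality $1-F_\mu(t)\le Ct^{-p}$; since $n-r_j+1$ grows linearly in $n$ for $v_j\in(0,1)$, this shows the central order statistics have uniformly bounded moments of every order for $n$ large, whence uniform integrability of $\{|x_{r_j}-F_\mu^{[-1]}(v_j)|^p\}_n$. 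This is true and fixes your argument, but it is a genuine extra lemma --- precisely the work the paper's $\ell^p$-versus-$\ell^1$ inequality is designed to sidestep. Your $p=\infty$ argument (boundedness from compact support plus dominated convergence) is fine as stated and is essentially equivalent to the paper's bound $\max_j\le\sum_j$.
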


\begin{proof}
    First, we notice that condition $\int_{\erre}|x|^pd\mu<+\infty$, ensures that $\min_{\nu\in\PP_k(\erre)}W_p(\mu,\nu)<+\infty$.
    Similarly, if $\mu$ has compact support, then $\min_{\nu\in\PP_k(\erre)}W_\infty(\mu,\nu)$ is finite.
    In particular, the right-handside of the limits in \eqref{eq:bayesianlimitRMlp} and \eqref{eq:bayesianlimitRMmc} are well-defined.
    % 

    % 
    % First, we show that both the numerator and the denominator of the Bayesian approximation ratio do converge as the number of agents goes to infinity.
    % 
    Since every $W_p$ is a metric, we can adapt the argument used in the proof of Theorem \ref{thm:limitBAR} to infer that the denominator of the Bayesian approximation ratio converges to $\min_{\nu\in\PP_k(\erre)}W_p(\mu,\nu)$ as $n$ goes to $\infty$.
    The argument used to prove the convergence of the numerator is, however, different.
    Let $W_p$ be the Wasserstein distance associated with $p\in(1,+\infty)$.
    Using the metric properties of $W_p$, we retrieve 
    \[
        |W_p(\mu_n,\nu_{\vec x, \vec v})-W_p(\mu,\nu_{\vec v})|\le W_p(\mu,\mu_n)+W_p(\nu_{\vec v},\psi)+W_p(\nu_{\vec x, \vec v},\phi),
    \]
    where  $\nu_{\vec v}=\sum_{j\in[m]}(\nu_{\vec v})_j\delta_{y_{j}}$ is the solution to $\min_{\substack{\lambda_j\le q_{\pi(j)},\\\sum_{j\in[m]}\lambda_j=1}}W_p\Big(\mu,\sum_{j\in[m]}\lambda_j\delta_{F_\mu^{[-1]}(v_j)}\Big)$,
    % \[
    % \min_{\substack{\lambda_j\le q_{\pi(j)},\\\sum_{j\in[m]}\lambda_j=1}}W_p\Big(\mu,\sum_{j\in[m]}\lambda_j\delta_{F_\mu^{[-1]}(v_j)}\Big),
    % \]
        % \item 
        $\nu_{\vec x, \vec v}=\sum_{j\in[m]}(\nu_{\vec v})_{\vec x,j}\delta_{y_{\vec x,j}}$ is the solution to problem \eqref{eq:discrete_proj2lp},
        % \item 
        $\psi=\sum_{j\in[m]}(\nu_{\vec v})_j\delta_{y_{\vec x,j}}$, where $\{y_{\vec x,j}\}_{j\in[m]}$ is the support $\nu_{\vec x, \vec v}$, and
        $\phi=\sum_{j\in[m]}(\nu_{\vec x, \vec v})_j\delta_{y_j}$.
    % \end{enumerate*}
    % 

    % 
    Owing to the assumption on $\mu$ and to the results in \cite{bobkov2019one}, we have that $\lim_{n\to\infty}W_p(\mu,\mu_n)=0$.
    We now handle $W_p(\nu_{\vec v},\psi)$.
    If $p<\infty$, we have that
    \begin{align*}
        \EE[W_p(\nu_{\vec v},\psi)]&\le\EE\Big[\Big(\sum_{j\in[m]}(\nu_{\vec v})_j|y_j-y_{\vec x,j}|^p\Big)^{\frac{1}{p}}\Big]\le \EE\Big[\Big(\sum_{j\in[m]}|y_j-y_{\vec x,j}|^p\Big)^{\frac{1}{p}}\Big]\\
        & \le \EE\Big[\sum_{j\in[m]}\big(|y_j-y_{\vec x,j}|^p\big)^{\frac{1}{p}}\Big]=\EE\Big[\sum_{j\in[m]}|y_j-y_{\vec x,j}|\Big] = \sum_{j\in[m]}\EE\Big[|y_j-y_{\vec x,j}|\Big].
    \end{align*}
    By the same argument used to prove Theorem \ref{thm:limitBAR}, we infer that $\lim_{n\to\infty}\EE[W_p(\nu_{\vec v},\psi)]=0$.
    Finally, we notice that if $p=\infty$, we have that $\EE[W_\infty(\nu_{\vec v},\psi)]=\EE[\max_{j\in[m]}|y_j-y_{\vec x,j}|]\le \EE\Big[\sum_{j\in[m]}|y_j-y_{\vec x,j}|\Big] \le \sum_{j\in[m]}\EE[|y_j-y_{\vec x,j}|]$,
    % \[
    %    \EE[W_\infty(\nu_{\vec v},\psi)]=\EE[\max_{j\in[m]}|y_j-y_{\vec x,j}|]\le \EE\Big[\sum_{j\in[m]}|y_j-y_{\vec x,j}|\Big] \le \sum_{j\in[m]}\EE[|y_j-y_{\vec x,j}|],
    % \]
    which allows us to conclude the proof for the case $p=\infty$.
    Similarly, we have that $W_\infty(\nu_{\vec x, \vec v},\phi)$ goes to $0$ as the number of agent increases, which concludes the proof.
\end{proof}

% 
% To close the section we show that Theorem \ref{thm:limitBAR} and \ref{thm:limitBAR_p_MC} are applicable to the study of the Innerpoint and the Extended Endpoint Mechanisms.
% % 
% For the sake of simplicity, we limit our study to the Bayesian approximation ratio of the Social Cost, since the other costs are handled through a similar argument.
% % 
% Since the InnerGap Mechanism is a specific instance of the ERM, the study of this mechanism directly follow from Theorem \ref{thm:limitBAR}, thus we omit it.

\subsection{The Bayesian Study of previously introduced menchanisms}
\label{ex:im}
    We now use Theorem \ref{thm:limitBAR} and \ref{thm:limitBAR_p_MC} to the study of the Innerpoint and the Extended Endpoint Mechanisms.
    For the sake of simplicity, we limit our study to the Bayesian approximation ratio of the Social Cost, since the other costs are handled through a similar argument.

    \paragraph{The Innerpoint Mechanism.}
    Consider now a $2$-CFLP in which we have an even number of agents, thus $n=2k$, and the two facilities have the same capacity, so that $\vec q=(0.5,0.5)$.
    The Innerpoint Mechanism (IM) places the two facilities at $y_1$ and $y_2$, where $y_1$ is the $k$-th agent report from the left and $y_2$ the $(k+1)$-th agent report from the left.
    Every agent whose report is on the left of $y_1$ is assigned to the facility at $y_1$, while the others to $y_2$. 
    Since the IM works only for instances with an even number of agents, the limit will taken for $k\to \infty$.
    It is easy to see that, for every given $k\in \mathbb{N}$, the IM can be written as a Ranking Mechanism.
    Indeed, if we set $\vec v=(0.5-\frac{1}{2k},0.5)$ and $\pi=Id$ (where $Id(i)=i$ for every $i\in [2]$), the output of the IM and the output of the ERM associated to $\vec v$ and $\pi$ coincide on every instance.
    However, the vector $\vec v$ determining the output of the ERM depends on $k$, thus we cannot directly apply Theorem \ref{thm:limitBAR} to infer the limit Bayesian approximation ratio of the IM.
    To overcome this issue, we consider the Ranking Mechanism induced by the vector $\vec v=(0.5,0.5)$.
    Since the two facilities have the same capacity the mechanism does not depend on the permutation we choose, thus we will omit it for the sake of simplicity.
    Indeed, for every $k\in\mathbb{N}$ and $\vec x\in\erre^{2k}$, it holds $SC_{IM}(\vec x)\le SC_{RM^{(Id,\vec v)}}(\vec x)$, hence
    \begin{align}
    \label{eq:inequalitybound}
        \lim_{k\to\infty}\frac{\EE [SC_{IM}(\vec X)]}{\EE [SC_{opt}(\vec X)]}&\le\lim_{n\to\infty}\frac{\EE [SC_{Id,\vec v}(\vec X)]}{\EE [SC_{opt}(\vec X)]}=\frac{W_1(\mu,\delta_{med(\mu)})}{W_1(\mu,\nu_{m})},
    \end{align}
    where $\nu_{m}$ is a solution to problem \eqref{eq:min_proj}.
    Notice the inequality in \eqref{eq:inequalitybound} allows us to conclude that the Bayesian approximation ratio of the IM is finite.
    We now prove the other inequality.
    Notice that if $k>10$ and $\epsilon\in(0,0.1)$, it holds true the following $SC_{IM}(\vec x)\ge SC_{\texttt{ERM}^{(Id,\vec v_\epsilon)}}(\vec x)$,
    % \[
    % SC_{IM}(\vec x)\ge SC_{\texttt{ERM}^{(Id,\vec v_\epsilon)}}(\vec x),
    % \]
    where $\vec v_{\epsilon}=(0.5-\epsilon,0.5)$.
    Notice that the mechanism $\texttt{ERM}^{(Id,\vec v_\epsilon)}$ is not feasible for $\vec q=(0.5,0.5)$, but it is feasible if we consider $\vec q_\epsilon=(0.5+\epsilon,0.5+\epsilon)$.
    Indeed, for $\vec q_\epsilon$ the routine of the IM is still well-defined and the cost of the output of IM does not depend on which $\vec q$ we consider, thus the inequality holds.
    If we take the average and the limits for $k\to \infty$ on both sides, we infer from Theorem
    \ref{thm:limitBAR} that
    \begin{align*}
        \nonumber\lim_{k\to\infty}\frac{\EE [SC_{IM}(\vec X)]}{\EE [SC_{opt}(\vec X)]}&\ge\lim_{n\to\infty}\frac{\EE [SC_{Id,\vec v_\epsilon}(\vec X)]}{\EE [SC_{opt}(\vec X)]}=\frac{W_1(\mu,\eta_\epsilon)}{W_1(\mu,\nu_{m})},
    \end{align*}
    where $\eta_\epsilon=\lambda_\epsilon\delta_{F^{[-1]}_\mu(0.5-\epsilon)}+(1-\lambda_\epsilon)\delta_{med(\mu)}$ and $\lambda_\epsilon=F_\mu\Big(\frac{F^{[-1]}_\mu(0.5-\epsilon)+F^{[-1]}_\mu(0.5)}{2}\Big)$.
    % \[
    % \lambda_\epsilon=F_\mu\Big(\frac{F^{[-1]}_\mu(0.5-\epsilon)+F^{[-1]}_\mu(0.5)}{2}\Big).
    % \]
    % 
    Since $\eta_\epsilon$ converges in probability to $\delta_{med(\mu)}$ when $\epsilon\to 0$, then $\lim_{k\to\infty}\frac{\EE [SC_{IM}(\vec X)]}{\EE [SC_{opt}(\vec X)]}\ge  \frac{W_1\big(\mu,\delta_{med(\mu)}\big)}{W_1(\mu,\nu_{m})}$,
    % \[
    % \lim_{k\to\infty}\frac{\EE [SC_{IM}(\vec X)]}{\EE [SC_{opt}(\vec X)]}\ge  \frac{W_1\big(\mu,\delta_{med(\mu)}\big)}{W_1(\mu,\nu_{m})},
    % \]
    hence the limt of $B_{ar,1}(IM)$
    is $\frac{W_1(\mu,\delta_{med(\mu)})}{W_1(\mu,\nu_{m})}$, where $\nu_{m}$ is the solution to problem \eqref{eq:min_proj}.
    % 
    % Similarly, we can study the limit Bayesian Approximation ratio of the mechanism with respect to the $l_p$ costs and the Maximum Costs.
    % 

    % 
    Finally, the same argument allows to study the limit Bayesian approximation ratio of the InnerChoice Mechanism proposed in \cite{ijcai2022p75} and any truthful Ranking Mechanism.
    In particular, the InnerChoice Mechanism has the same limit Bayesian approximation ratio of the IM, with respect to all the cost functions.
% \end{example}
% 

% \subsection{The Extended Endpoint Mechanism }
% \label{sec:EEM}

\paragraph{The Extended Endpoint Mechanism.}
Given a p.c.v. $\vec q$ , the Extended Endpoint Mechanism (EEM) is a mechanism able to handle any $2$-CFLP \cite{aziz2020capacity}.
In our formalism, the routine of the EEM is as follows.
Let $\vec x$ be a vector containing the agents' report, without loss of generality, we assume that $\vec x$ is ordered non decreasingly, i.e. $x_i\le x_{i+1}$.
We define $A_1=\{x_i\;\textrm{s.t.}\; |x_i-x_1|\le\frac{1}{2}|x_1-x_n|\}$ and $A_2=\{x_i\;\textrm{s.t.}\; x_i\notin A_1\}$.
If $|A_1|\ge |A_2|$, the EEM determines the position of two facilities as follows:
\begin{enumerate*}[label=(\roman*)]
    \item If $|A_1|\le \floor{q_1(n-1)}+1$ and $|A_2|\le \floor{q_2(n-1)}+1$, we set $y_1=x_1$ and $y_2=x_n$, we place the facility with capacity $q_1$ at $y_1$ and the facility with capacity $q_2$ at $y_2$.

    \item If $|A_1|> \floor{q_1(n-1)}+1$ and $|A_2|\le \floor{q_2(n-1)}+1$, we set $y_1=2x_{\floor{q_1(n-1)}+2}-x_n$ and $y_2=x_n$, we place the facility with capacity $q_1$ at $y_1$ and the facility with capacity $q_2$ at $y_2$.

    \item If $|A_1|\le \floor{q_1(n-1)}+1$ and $|A_2|> \floor{q_2(n-1)}+1$, we set $y_1=x_1$ and $y_2=2x_{n-(\floor{q_2(n-1)}+1)}-x_1$, we place the facility with capacity $q_1$ at $y_1$ and the facility with capacity $q_2$ at $y_2$.
\end{enumerate*}
Every agent is assigned to the facility that is closer to the position (break ties arbitrarily without  overloading any facility).
Finally, if $|A_1|> |A_2|$, we switch the roles of the two facilities in the cases described above.

We now study the limit Bayesian approximation ratio of the EEM when the agents are distributed according to a uniform distribution and $\vec q=(q_1,q_2)$ with $q_1>0.5>q_2$.
For the sake of simplicity, we consider only a odd number of agents $n=2k+1$.
First, we notice that we can restrict our attention to the class of instances in which $|A_1|>|A_2|$.
Indeed, due to the symmetry of the uniform distribution, we have that the every instance for which $|A_1|>|A_2|$ holds, uniquely identifies an instance for which it holds $|A_1|<|A_2|$\footnote{Since $n$ is odd we have that we cannot have $|A_1|=|A_2|$.}.
Let us consider the expected Social Cost of the instances for which it holds $|A_1|>|A_2|$, which we denote with $\mathcal{A}$.
Using the properties of the expected values, we can compute the expected Social Cost of the EEM as the sum of three conditioned expected values
\begin{align*}
    \EE[SC_{\pi,\vec v}(\vec X)]&=\EE[SC_{\pi,\vec v}(\vec X)|\vec X\in C_1]P(\vec X\in C_1)+\EE[SC_{\pi,\vec v}(\vec X)|\vec X\in C_2]P(\vec X\in C_2)\\
    &\quad+\EE[SC_{\pi,\vec v}(\vec X)|\vec X\in C_3]P(\vec X\in C_3),
\end{align*}
where \begin{enumerate*}[label=(\roman*)]
    \item $C_1$ contains the instances for which $|A_1|\le \floor{q_1(n-1)}+1$ and $|A_2|\le\floor{q_2(n-1)}+1$;
    \item $C_2$ contains the instances for which $|A_1|> \floor{q_1(n-1)}+1$ and $|A_2|\le \floor{q_2(n-1)}+1$; and
    \item $C_3$ contains the instances for which $|A_1|\le \floor{q_1(n-1)}+1$ and $|A_2|>\floor{q_2(n-1)}+1$.
\end{enumerate*}
Notice that $\mathcal{A}=C_1\cup C_2 \cup C_3$.
Moreover, every $C_i$ is disjoint from the other, hence, since we are restricting our attention to only the instances in $\mathcal{A}$, we have $P(\vec X\in C_3)=1-P(\vec X \in C_1)-P(\vec X \in C_2)$.
First, we show that both $P(\vec X \in C_1)$ and $P(\vec X \in C_2)$ go to $0$ as the number of agents go to infinity, so that $\lim_{n\to \infty}P(\vec X\in C_3)=1$. 
We then show that $\lim_{n\to \infty}\EE[SC_{\pi,\vec v}(\vec X)|\vec X\in C_3]=W_1(\mu,(1-q_2)\delta_{0}+q_2\delta_{2(1-q_2)})$.
% \[
% \lim_{n\to \infty}\EE[SC_{\pi,\vec v}(\vec X)|\vec X\in C_3]=W_1(\mu,(1-q_2)\delta_{0}+q_2\delta_{1.2}).
% \]
% % 
First, let us consider $P(\vec X \in C_1)$.
Let us denote with $\lambda=1-\frac{0.5+q_2}{2}$, then we have
\begin{align*}
    P(\vec X \in C_1)&=P(|A_1|\le \floor{q_1(n-1)}+1,|A_2|\le\floor{q_2(n-1)}+1)\\
    &\le P(|A_2|\le\floor{q_2(n-1)}+1)\le P\Big(X_{\floor{\lambda(n-1)}+1}\le \frac{X_1+X_n}{2}\Big),
\end{align*}
where the last inequality comes from the fact that if $|A_2|\le \floor{q_2(n-1)}+1$ then it must be that $X_{\floor{\lambda(n-1)}+1}\le \frac{X_1+X_n}{2}$, since $q_2<0.5$.
Let us now set $\epsilon>0$ and $r=\floor{\lambda(n-1)}+1$, denote with $A^c$ the complement of the set $A$, we have
\begin{align*}
    P(&\vec X \in C_1) \le P\Big(X_{r}\le \frac{X_1+X_n}{2}\Big)\\
    &=P\Big(X_{r}\le \frac{X_1+X_n}{2}\Big|X_1\le \epsilon, X_n\ge 1-\epsilon\Big)P(X_1\le \epsilon, X_n\ge 1-\epsilon)\\
    &\quad +P\Big(X_{r}\le \frac{X_1+X_n}{2}\Big|\Big\{X_1\ge \epsilon, X_n\le 1-\epsilon\Big\}^c\Big)P\left(\Big\{X_1\ge \epsilon, X_n\le 1-\epsilon\Big\}^c\right)\\
    &\le P\Big(X_{r}\le \frac{1+\epsilon}{2}\Big)\big(P(X_1\le \epsilon)+P( X_n\ge 1-\epsilon)\big) + P(X_1\ge \epsilon)+P( X_n\le 1-\epsilon),
\end{align*}
$P\Big(X_{r}\le \frac{X_1+X_n}{2}\Big|\Big\{X_1\ge \epsilon, X_n\le 1-\epsilon\Big\}^c\Big)\le 1$.
Notice that the cumulative distribution function of the $n$-th order statistic of a uniform distributed random variable is $t\to \Big(F_\mu(t)\Big)^n=t^n$ hence $P( X_n\le 1-\epsilon)=(1-\epsilon)^n$ and $\lim_{n\to \infty}P( X_n\le 1-\epsilon)=0$.
Likewise, we have that $\lim_{n\to \infty} P(X_1\ge \epsilon)=0$.
Lastly, we need to show that $P\Big(X_{r}\le \frac{1+\epsilon}{2}\Big)$ goes to $0$ as $n$ goes to infinity.
It is well-known that, for a suitable constant $K$, it holds $K\sqrt{n}(X_{\floor{\lambda(n-1)}+1}-\lambda)$ converges in distribution to the standard Gaussian $Z\sim\mathcal{N}(0,1)$.
Given $\epsilon<2\lambda-1$, we have that $P\Big(X_{r}\le \frac{1+\epsilon}{2}\Big)=P\Big(K\sqrt{n}(X_{r}-\lambda)\le K\sqrt{n}\Big(\frac{1+\epsilon}{2}-\lambda\Big)\Big)$.
% \begin{align*}
%     P\Big(X_{r}&\le \frac{1+\epsilon}{2}\Big)=P\Big(K\sqrt{n}(X_{\floor{\lambda(n-1)}+1}-\lambda)\le K\sqrt{n}\Big(\frac{1+\epsilon}{2}-\lambda\Big)\Big).
% \end{align*}
Thus
\begin{align*}
    P\Big(X_{r}\le \frac{1+\epsilon}{2}\Big)&\le \Big|P\Big(X_{r}\le \frac{1+\epsilon}{2}\Big)-P\Big(Z\le K\sqrt{n}\Big(\frac{1+\epsilon}{2}-\lambda\Big)\Big)\Big|\\
    &\quad + P\Big(Z\le K\sqrt{n}\Big(\frac{1+\epsilon}{2}-\lambda\Big)\Big).
\end{align*}
Since $\Big(\frac{1+\epsilon}{2}-\lambda\Big)<0$, we have that $\lim_{n\to\infty}P\Big(Z\le K\sqrt{n}\Big(\frac{1+\epsilon}{2}-\lambda\Big)\Big)=0$.
Finally, by the convergence in distribution of $X_r$, see \cite{wilks2012mathematical}, we have $\lim_{n\to\infty}\Big|P\Big(X_{r}\le \frac{1+\epsilon}{2}\Big)-P\Big(Z\le K\sqrt{n}\Big(\frac{1+\epsilon}{2}-\lambda\Big)\Big)\Big|=0$,
% \[
% \lim_{n\to\infty}\Bigg|P\Big(X_{r}\le \frac{1+\epsilon}{2}\Big)\Big)-P\Big(Z\le K\sqrt{n}\Big(\frac{1+\epsilon}{2}-\lambda\Big)\Big)\Bigg|=0,
% \]
which allows us to conclude that $\lim_{n\to\infty}P(\vec X\in C_1)=0$.

Similarly, we have that $ \lim_{n\to\infty}P(\vec X \in C_2)=0$, hence $\lim_{n\to\infty}P(\vec X\in C_3)=1$.
Lastly, we need to show that $\lim_{n\to \infty}\EE[SC_{\pi,\vec v}(\vec X)|\vec X\in C_3]=W_1(\mu,(1-q_2)\delta_{0}+q_2\delta_{2(1-q_2)})$.
% \[
% \lim_{n\to \infty}\EE[SC_{\pi,\vec v}(\vec X)|\vec X\in C_3]=W_1(\mu,(1-q_2)\delta_{0}+q_2\delta_{1.2}).
% \]
% 
Notice that if $\vec x\in C_3$, the EEM places a facility at $x_1$ and the other one at $2x_{n-(\floor{q_2(n-1)}+2)}-x_1$, the first $n-(\floor{q_2(n-1)}+1)$ agents are assigned to the facility placed at $x_1$, while the others to the facility at $2x_{n-\floor{q_2(n-1)}+2}-x_1$.
Since every agent is assigned to its closest facility, we can adapt the argument used to prove Lemma \ref{lemma:equivalenceSC_W1}, and infer that
\[
SC_{EEM}(\vec x)=W_1(\mu_n,q_1\delta_{x_1}+(1-q_1)\delta_{2x_{n-\floor{q_2(n-1)}+2}-x_1}).
\]
By the same argument used in the proof of Theorem \ref{thm:limitBAR}, it is possible to show that 
\begin{align*}
    \lim_{n\to \infty}W_1(\mu_n,q_1\delta_{x_1}&+(1-q_1)\delta_{2x_{\floor{q_1(n-1)}+2}-x_1})=W_1(\mu,(1-q_2)\delta_{0}+q_2\delta_{2(1-q_2)}),
\end{align*}
which concludes the proof.
% 

% % 
% It is important to emphasize that the study of the limit of the Bayesian approximation ratio hinges upon the fact that the support of the uniform distribution is compact.
% % 
% Indeed, following the same argument, it is possible to study the limit Bayesian approximation ratio of any compactly supported measure and any p.c.v. $\vec q$.
% % 

\section{The Optimal Extended Ranking Mechanism}
\label{sec:para}

Owing to Theorem \ref{thm:limitBAR}, the limit Bayesian approximation ratio of any ERM depends on $\mu$, $\vec q$, $\pi$, and $\vec v$.
While $\mu$ and $\vec q$ are specifics of the problem, $\pi$ and $\vec v$ are parameters that the mechanism designer can tune.
In this section, we study how to determine an optimal ERM tailored to $\mu$ and $\vec q$, i.e. a mechanism $\RM$ such that 
\[
\lim_{n\to\infty}B_{ar,1}\big(\RM\big)\le\lim_{n\to\infty}B_{ar,1}\big(\texttt{ERM}^{(\pi',\vec v')}\big)
\]
for any other feasible $\texttt{ERM}^{(\pi',\vec v')}$.
Similarly, we define the optimal ERM with respect to the other objective functions.
% 
% The main result of the section assesses that an optimal ERM exists and characterizes its parameters $(\pi,\vec v)$ as the solution to a minimization problem.
% 

\begin{theorem}
\label{thm:existance_problem}
Given $\mu$, a p.c.v. $\vec q$ and an objective function, there always exist a tuple $(\pi,\vec v)$ whose associated ERM, that is $\RM$.
Moreover, the couple $(\pi,\vec v)$ is a solution to the following minimization problem
\begin{align}
\label{eq:opt:ERM}
    &\min_{\pi\in\mathcal{S}_m}\min_{\vec v\in (0,1)^m} W_p\left(\mu,\sum_{j\in[m]}\eta_j\delta_{F_\mu^{[-1]}(v_j)}\right)\\
    \nonumber\text{s.t.}\;\; & v_{j+1}-v_{j-1}\le q_{\pi(j)}\;\;\text{for} \;\; j=2,\dots,m-1,\;\; v_1\le q_{\pi(1)}, \;\text{and}\;\; v_m\le 1-q_{\pi(m)},\\
    \label{eq:opt:coeff}& \eta_j=F_\mu\left(\frac{F^{[-1]}_\mu(v_{j+1})+F^{[-1]}_\mu(v_{j})}{2}\right)-F_\mu\left(\frac{F^{[-1]}_\mu(v_{j})+F^{[-1]}_\mu(v_{j-1})}{2}\right),
\end{align}
for $ j=2,\dots,m-1$, $\eta_1=F_\mu\Big(\frac{F^{[-1]}_\mu(v_{2})+F^{[-1]}_\mu(v_{1})}{2}\Big)$, $\eta_m=1-\eta_{m-1}$, and
where $p$ depends on the objective to minimize.
% , $\eta_1=F_\mu\Big(\frac{F^{[-1]}_\mu(v_{2})+F^{[-1]}_\mu(v_{1})}{2}\Big)$, $\eta_j=F_\mu\Big(\frac{F^{[-1]}_\mu(v_{j+1})+F^{[-1]}_\mu(v_{j})}{2}\Big)-F_\mu\Big(\frac{F^{[-1]}_\mu(v_{j})+F^{[-1]}_\mu(v_{j-1})}{2}\Big)$ for $j=2,\dots,m-1$, and $\eta_m=1-\eta_{m-1}$.
\end{theorem}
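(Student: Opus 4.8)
The plan is to prove existence and the characterization together, leaning on the fact that Theorem~\ref{thm:limitBAR} (and Theorem~\ref{thm:limitBAR_p_MC} for $p\in(1,\infty]$) has already reduced the limit Bayesian approximation ratio to a ratio of Wasserstein distances. First I would observe that, for fixed $\mu$ and $\vec q$, the denominator $W_p(\mu,\nu_m)$ appearing in \eqref{eq:bayesianlimitRM} (resp. \eqref{eq:bayesianlimitRMlp}, \eqref{eq:bayesianlimitRMmc}) is a positive constant: $\nu_m$ solves \eqref{eq:min_proj}, which depends only on $\mu$ and $\vec q$ and not on the mechanism parameters $(\pi,\vec v)$, and it is strictly positive because $\mu$ is absolutely continuous while $\nu_m$ is finitely supported. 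Hence minimizing $\lim_{n\to\infty}B_{ar,p}(\RM)$ over all feasible $(\pi,\vec v)$ is equivalent to minimizing the numerator $W_p(\mu,\nu_{\vec v})$ alone, where $\nu_{\vec v}$ solves \eqref{eq:min_proj2}.

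Next I would make this numerator explicit. In the proof of Theorem~\ref{thm:limitBAR} it is shown that the minimizer of \eqref{eq:min_proj2} places its mass at the points $F_\mu^{[-1]}(v_j)$ with the nearest-point (Voronoi) weights, and that feasibility of $\RM$ is exactly what guarantees these unconstrained weights respect the capacities $q_{\pi(j)}$; these weights are precisely the $\eta_j$ of \eqref{eq:opt:coeff}, and the attained value equals $W_p(\mu,\sum_{j\in[m]}\eta_j\delta_{F_\mu^{[-1]}(v_j)})$, which is the objective of \eqref{eq:opt:ERM}. Since nearest-point assignment minimizes $\int|x-y_{T(x)}|^p\,d\mu$ pointwise for every $p\in[1,\infty]$, the same $\eta_j$ are optimal for all the costs considered. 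On the feasibility side, Theorem~\ref{thm:RM_feasible} characterizes the admissible pairs $(\pi,\vec v)$ through the system \eqref{eq:feas_system}, which is exactly the constraint set written beneath \eqref{eq:opt:ERM}. Combining these two facts shows that an optimal ERM corresponds to a solution of \eqref{eq:opt:ERM}, which is the claimed characterization.

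It then remains to show the minimum in \eqref{eq:opt:ERM} is attained. I would fix the permutation $\pi$ (there are finitely many, so it suffices to treat each and take the best among those with nonempty feasible region) and study $G_\pi(\vec v):=W_p(\mu,\sum_{j\in[m]}\eta_j(\vec v)\delta_{F_\mu^{[-1]}(v_j)})$ on the polytope $\mathcal{F}_\pi\subset(0,1)^m$ cut out by \eqref{eq:feas_system}. Using continuity of $F_\mu^{[-1]}$ and of the $\eta_j$ on $(0,1)$, together with dominated convergence justified by the finite $p$-th moment of $\mu$ (for $p=\infty$ the compact-support hypothesis of Theorem~\ref{thm:limitBAR_p_MC} makes $F_\mu^{[-1]}$ bounded, simplifying this step), I would argue that $G_\pi$ is continuous and extends continuously to the compact closure $\overline{\mathcal{F}_\pi}\subset[0,1]^m$: at a boundary point where some $v_j\to 0$ or $v_j\to 1$ the corresponding facility is driven to $\pm\infty$ while its weight $\eta_j$ vanishes, so $G_\pi$ converges to the value of the configuration that employs one fewer facility. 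Compactness then yields a minimizer on $\overline{\mathcal{F}_\pi}$, and minimizing over the finitely many $\pi$ produces a global minimizer.

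The main obstacle is showing this global minimizer actually lies in the open feasible region $(0,1)^m$, so that it induces a genuine feasible ERM rather than a degenerate limit. Here I would exploit that every boundary configuration effectively uses strictly fewer than $m$ distinct finite atoms, either because a coordinate reached $0$ or $1$ (a facility escaped to $\pm\infty$) or because two coordinates coincided, whereas $\rho_\mu$ is strictly positive on an interval. Since covering $\mu$ with $m$ distinct, well-placed finite atoms strictly lowers $W_p(\mu,\cdot)$ relative to any configuration using fewer atoms, no boundary point can be optimal, and the minimum is attained at an interior feasible $\vec v$. This interior minimizer, paired with its optimal $\pi$, is the sought optimal ERM.
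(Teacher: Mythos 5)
Your overall strategy mirrors the paper's: reduce the problem to minimizing the Voronoi-weighted numerator $W_p\big(\mu,\sum_{j\in[m]}\eta_j(\vec v)\delta_{F_\mu^{[-1]}(v_j)}\big)$ over the feasibility polytope of Theorem~\ref{thm:RM_feasible}, treat each of the finitely many permutations separately, and obtain existence from continuity plus compactness of the closure. Those steps are sound and essentially identical to the paper's. The genuine gap is in your final interiority paragraph, and it appears in the exact cases the theorem must cover. First, your description of the boundary is wrong for compactly supported $\mu$ (the mandatory setting when $p=\infty$): if $spt(\mu)=[a,b]$ and $v_1=0$, the facility sits at the \emph{finite} point $a=F_\mu^{[-1]}(0)$ and its Voronoi weight is $F_\mu\big(\tfrac{a+y_2}{2}\big)>0$; nothing escapes to $\pm\infty$ and no weight vanishes, so this boundary configuration still uses $m$ distinct finite atoms. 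Your dichotomy ``boundary $=$ fewer atoms'' therefore fails, and you are left with no argument excluding such a minimizer. The paper excludes it by a local perturbation: a facility at the endpoint $a$ serves only mass to its right, so replacing $v_1=0$ by $F_\mu(a+\epsilon)$ — which stays feasible, since raising $v_1$ only slackens the constraints — strictly decreases the cost.

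Second, the claim that ``covering $\mu$ with $m$ distinct, well-placed finite atoms strictly lowers $W_p(\mu,\cdot)$ relative to any configuration using fewer atoms'' ignores the capacity constraints and is false within this constrained problem. In the no-spare-capacity case $\sum_{j\in[m]}q_j=1$, Corollary~\ref{crr:sparecapacity} forces every feasible ERM to have all entries of $\vec v$ equal, so no feasible configuration with $m$ distinct atoms exists, and the actual optimum (the \texttt{all-median} mechanism of Theorem~\ref{crr:nsc_best_RM}) uses exactly one atom; taken literally, your argument would ``prove'' that this optimum cannot be optimal. The same confusion shows in listing ``two coordinates coincided'' as a boundary case to exclude: vectors with coinciding entries in $(0,1)^m$ are \emph{interior} feasible points of \eqref{eq:opt:ERM} and may well be the minimizer, and the theorem never claims otherwise. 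What the proof actually requires — and what the paper does — is to perturb only the offending coordinates $v_j\in\{0,1\}$ slightly inward, verify that this preserves feasibility, and show strict improvement using $\rho_\mu>0$ (together with, for unbounded support, the moment condition guaranteeing that the escaped facility's contribution converges to the cost of the remaining atoms). No comparison with unconstrained $m$-atom configurations is available, and none is needed.
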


\begin{proof}
% [Proof of Theorem \ref{thm:existance_problem}]
Without loss of generality, we prove the theorem for $p=1$, which correspond to the Social Cost.
First, we show that a solution to problem \eqref{eq:opt:ERM} always exists.
Since for every $m$ there are a finite number of $\pi\in\mathcal{S}_m$, it suffices to show that for every $\sigma\in\mathcal{S}_m$ there exists a solution to problem
\begin{align}
\label{eq:opt:ERM2}
    \min_{\vec v\in (0,1)^m} &W_1\left(\mu,\sum_{j\in[m]}\eta_j(\vec v)\delta_{F_\mu^{[-1]}(v_j)}\right)\\
    \nonumber\text{s.t.}\; &v_{j+1}-v_{j-1}\le q_{\pi(j)}\;\;\text{for} \;\; j=2,\dots,m-1,\;v_1\le v_{\pi(q)}, \;\text{and}\;\; v_m\le 1-v_{\pi(m)},
\end{align}
where $\eta_j(\vec v)$ are the coefficient associated with the permutation $\pi$ defined as in \eqref{eq:opt:coeff}.
Indeed, following the argument used to prove Theorem \ref{thm:limitBAR}, we have that, given $\vec v$, the probability measure $\eta(\vec v)=\sum_{j\in[m]}\eta_j(\vec v)\delta_{F_\mu^{[-1]}(v_j)}$ is a solution to 
\[
W_1(\mu,\eta(\vec v))=\min_{\lambda_j\le q_{\sigma(j)},\sum_{j\in[m]}\lambda_j=1}W_1\left(\mu,\sum_{j\in[m]}\lambda_j\delta_{F_\mu^{[-1]}(v_j)}\right).
\]
Therefore, problem \eqref{eq:opt:ERM} can be reduced to problem \eqref{eq:opt:ERM2}.
First, we consider the case in which $\mu$ has a compact support, hence the minimization problem is well defined over the closed set $[0,1]^m$, since the values $F_\mu^{[-1]}(0)$ and $F_\mu^{[-1]}(1)$ are well-defined.
Notice that $[0,1]^m$ is compact, thus if we show that the objective function of the problem is continuous with respect to $\vec v$ we conclude the proof for this case.
Let $\vec v^{(n)}$ be a succession of feasible vectors that converges to $\vec v$.
From our basic assumptions, we have that both $F_\mu$ and $F_\mu^{[-1]}$ are continuous, hence $\vec v\to \eta_j(\vec v)$ is continuous for every $j\in [m]$.
Likewise, every function $\vec v\to y_j(\vec v):=F_\mu^{[-1]}(v_j)$ is continuous, thus the sequence $\eta_n=\eta(\vec v^{(n)})$ weakly converges to $\eta(\vec v)$ in $\PP(\erre)$.
Since $W_1$ induces the weak topology over $\PP(\erre)$ \cite{villani2009optimal}, we infer $\lim_{n\to \infty}W_1(\mu,\eta(\vec v^{(n)}))=W_1(\mu,\eta(\vec v))$, thus a minimizer always exists.
To conclude the proof, we need to show that the minimizer, namely $\vec v$, is such that $v_j\neq 0,1$ for every $j\in [m]$.
Toward a contradiction, let us assume that $v_1=0$.
Without loss of generality, let us assume that $v_2>0$, thus $y_1< y_2$.
By definition, we have that $F_\mu^{[-1]}(v_1)=a$ and $F_\mu(F_\mu^{[-1]}(v_1))=v_1=0$.
Let us denote with $[a,b]$ the support of $\mu$, then there exists a $\epsilon>0$ such that $a+\epsilon<F_\mu^{[-1]}(v_2)=:y_2$, thus
\begin{align*}
    \mathcal{W}(\vec v)&=\int_{a}^{b}\min_{j\in[m]}|x-F_\mu^{[-1]}(v_j)|d\mu=\int^{\frac{a+y_2}{2}}_a(x-a)d\mu+\sum_{j>1}^m\int_{\frac{a+y_2}{2}}^b\min_{j>1}|x-F_\mu^{[-1]}(v_j)|d\mu\\
    &>\int_{a}^{\frac{a+y_2}{2}}|x-(a+\epsilon)|d\mu+\sum_{j>1}^m\int_{\frac{a+y_2}{2}}^b\min_{j>1}|x-F_\mu^{[-1]}(v_j)|d\mu\\
    &>\int_{a}^{b}\min_{j\in[m]}|x-F_\mu^{[-1]}(v^{(\epsilon)}_j)|d\mu=\mathcal{W}(\vec v^{(\epsilon)}),
\end{align*}
where $\vec v^{(\epsilon)}=(F_\mu(a+\epsilon),v_2,\dots,v_m)$, which contradicts the optimality of $\vec v$.
Let us now consider the case in which $\mu$ has unbounded support.
Let $\vec v^{(n)}$ be a minimizing sequence.
Up to a sub-sequence, we can assume that the $\vec v^{(n)}$ converges to $\vec v$ in $[0,1]^m$.
We now show that $\vec v\in (0,1)^m$.
First, let us assume that there exists a $j\in[m]$ for which $v_j\in (0,1)$.
Without loss of generality, let us assume that $v_1=0$ and that $0<v_2<1$ (if there are multiple entries of $\vec v$ that are equal to $0$ the study is the same).
Let us now consider $\vec w^{(n)}$ be such that $w^{(n)}_1=v_1^{(n)}$ and $w_j^{(n)}=v_j$ for $j>1$.
Since $\vec w^{(n)}$ and $\vec v^{(n)}$ have the same limit and $\mathcal{W}$ is continuous, we have that $\vec w^{(n)}$ is a minimizing sequence.
Since $w^{(n)}_1\to 0$, we have that $F_\mu^{[-1]}(w_1^{(n)})\to -\infty$, thus
\begin{align*}
    \mathcal{W}(\vec w^{(n)})=\int_{-\infty}^{\frac{y_1^{(n)}+y_2}{2}}|x-F_\mu^{[-1]}(w_1^{(n)})|d\mu+\int_{\frac{y_1^{(n)}+y_2}{2}}^{+\infty}\min_{j>1}|x-F_\mu^{[-1]}(v_j)|d\mu.
\end{align*}
Since $F_\mu^{[-1]}(w_1^{(n)})\to -\infty$, we have that $\mathcal{W}(\vec v)=\int_{-\infty}^{+\infty}\min_{j>1}|x-F_\mu^{[-1]}(v_j)|d\mu$.
Let us consider the vector $\vec v^{(\epsilon)}=(\epsilon,v_2,\dots,v_m)$, where $v_2>\epsilon>0$, then it holds
\begin{align*}
    \mathcal{W}(\vec v)&=\int_{-\infty}^{+\infty}\min_{j>1}|x-F_\mu^{[-1]}(v_j)|d\mu > \int_{-\infty}^{+\infty}\min_{j\in[m]}|x-F_\mu^{[-1]}(v^{(\epsilon)}_j)|d\mu=\mathcal{W}(\vec v^{(\epsilon)}),
\end{align*}
which contradicts the optimality of $\vec v$.
Similarly, we deal with the case $\vec v\in\{0,1\}^m$ and the case in which the support of $\mu$ is unbounded only from one side.
\end{proof}

Notice that the limit Bayesian approximation ratio of the optimal ERM does not necessarily converge to $1$, as the next example shows.

\begin{example}
\label{ex:impox_aoerm}
    Let $\mu$ be the uniform distribution $\mathcal{U}[0,1]$ and let $\vec q=(0.8,0.4)$ be a p.c.v..
    Since $\mu$ is symmetric, one of the solutions to problem \eqref{eq:min_proj} is $\nu_2=(0.4)\delta_{0.2}+(0.6)\delta_{0.7}$ (the other one is $\nu_2'=(0.6)\delta_{0.3}+(0.4)\delta_{0.8}$).
    However, by Corollary \ref{crr:sparecapacity}, there does not exist a permutation $\pi\in\mathcal{S}_m$ such that $\mathtt{ERM}^{((0.2,0.7), \pi)}$ is feasible.
    Thus, by Theorem \ref{thm:limitBAR}, no feasible ERM is such that $\lim_{n\to\infty}B_{ar,1}(\RM)=1$.
\end{example}

We now characterize the optimal ERM in two specific cases.
In the first one, the total capacity of the facilities is the same as the number of agents.
In the second one, we need to place two capacitated facilities and $\mu$ is symmetric.

\subsection{The No-Spare Capacity Case}
\label{sec:case2}

In the no-spare capacity case the total capacity of the facilities matches the number of agents, thus $\sum_{j\in[m]}q_j=1$.
Due to Corollary \ref{crr:sparecapacity}, feasible ERMs are induced by vectors $\vec v$ such that $v_j=p\in[0,1]$ for every $j\in[m]$.
This property allows to characterize the optimal ERM with respect to the Social, Maximum, and $l_2$ costs.
% 
% Thus, owing to Lemma \ref{lemma:equivalenceSC_W1}, the optimal ERM is the $\texttt{all-median}$ mechanism, i.e. $\texttt{ERM}^{(Id,(0.5,\dots,0.5))}$.
% , which places all the facilities at the median agent, i.e. $\vec v=(0.5,\dots,0.5)$.
% 
% Since all the facilities are placed at the same point, any permutation $\pi\in\mathcal{S}_m$ induces the same mechanism.

\begin{theorem}
\label{crr:nsc_best_RM}
    In the no-spare capacity case, the optimal ERM is unique.
    Moreover, we have that \begin{enumerate*}[label=(\roman*)]
        \item the $\texttt{all-median}$ mechanism, which places all the facilities at the median agent, is optimal with respect to the Social Cost,
        \item the $\RM$ is optimal with respect to the $l_2$ cost if $\vec v=(F_\mu(m),\dots,F_\mu(m))$ where $m$ is the mean of $\mu$, and
        \item if $\mu$ has compact support, then $\RM$ is optimal with respect to the Maximum Cost if $\vec v=(F_\mu(\frac{a+b}{2}),\dots,F_\mu(\frac{a+b}{2}))$ where $a=\inf\{x\in spt(\mu)\}$ and $b=\sup\{x\in spt(\mu)\}$.
        \end{enumerate*}
\end{theorem}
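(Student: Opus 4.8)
The plan is to use Corollary~\ref{crr:sparecapacity} to collapse the space of admissible mechanisms to a single scalar parameter, and then solve three elementary one-dimensional minimization problems, one per cost. First I would note that in the no-spare capacity case $\sum_{j\in[m]}q_j=1$, so the threshold in Corollary~\ref{crr:sparecapacity} is $\sum_{j\in[m]}q_j-1=0$; hence feasibility forces $\max_{j}v_j-\min_{j}v_j\le 0$, i.e. every feasible ERM is induced by a constant vector $\vec v=(p,\dots,p)$ for some $p\in(0,1)$. Consequently all $m$ facilities are placed at the single report $x_{\floor{p(n-1)}+1}$, and the limiting target measure $\nu_{\vec v}$ of Theorem~\ref{thm:limitBAR} degenerates to a single Dirac mass $\delta_y$ with $y=F_\mu^{[-1]}(p)$. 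By Theorem~\ref{thm:limitBAR} (and Theorem~\ref{thm:limitBAR_p_MC} for the $l_2$ and Maximum costs), the limit Bayesian approximation ratio equals $W_p(\mu,\delta_y)/W_p(\mu,\nu_m)$, whose denominator is a fixed positive constant independent of the chosen mechanism. Thus, for every cost, selecting the optimal ERM reduces to minimizing $y\mapsto W_p(\mu,\delta_y)$ over $y$ in the support of $\mu$, equivalently over $p\in(0,1)$ through $y=F_\mu^{[-1]}(p)$.

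Next I would solve each scalar problem. For the Social Cost ($p=1$), $W_1(\mu,\delta_y)=\EE[|X-y|]$ is the mean absolute deviation, minimized at the median of $\mu$, giving $p=1/2$ and the \texttt{all-median} mechanism. For the $l_2$ cost the bias–variance identity $W_2^2(\mu,\delta_y)=\mathrm{Var}(X)+(\EE[X]-y)^2$ is strictly convex in $y$ and uniquely minimized at the mean $\EE[X]$ (denoted $m$ in the statement), yielding $\vec v=(F_\mu(\EE[X]),\dots,F_\mu(\EE[X]))$. For the Maximum Cost with compact support $[a,b]$, the unique coupling with $\delta_y$ gives $W_\infty(\mu,\delta_y)=\max\{|y-a|,|y-b|\}$, a piecewise-linear function uniquely minimized at the midpoint $\tfrac{a+b}{2}$. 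In each case I would record the corresponding constant vector $\vec v=(F_\mu(y^*),\dots,F_\mu(y^*))$.

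The main work lies in the uniqueness claim, and the key ingredient is the standing assumption that $\rho_\mu$ is strictly positive on the interior of the support. This makes $F_\mu$ a strict bijection there, so that each optimizer $y^*$ corresponds to a unique admissible percentile $p^*=F_\mu(y^*)$; it also rules out flat stretches of $F_\mu$ that could otherwise produce a non-unique median in the $p=1$ case. For $p=2$ and $p=\infty$ strict convexity and the strict piecewise-linear shape already force uniqueness of $y^*$. I would then verify that each optimizer is interior—the median and mean of a non-degenerate distribution lie strictly between the support endpoints, and $\tfrac{a+b}{2}\in(a,b)$—so that $p^*\in(0,1)$ and Theorem~\ref{thm:limitBAR} genuinely applies. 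Finally, since all facilities coincide at $y^*$, the permutation $\pi$ has no effect on the outcome, so uniqueness of the optimal ERM is precisely uniqueness of $p^*$, which completes the argument.
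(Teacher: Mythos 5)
Your proposal is correct and follows essentially the same route as the paper: both invoke Corollary~\ref{crr:sparecapacity} to force $\vec v=(p,\dots,p)$ in the no-spare capacity case, and then reduce the problem to the one-dimensional minimization of $y\mapsto W_p(\mu,\delta_y)$, solved by the median, the mean, and the midpoint of the support respectively. If anything, your treatment is more complete than the paper's, which only carries out the derivative computation for the Social Cost and dismisses the $l_2$ and Maximum Cost cases with ``a similar argument,'' whereas you give explicit closed-form identities (the bias--variance decomposition and $W_\infty(\mu,\delta_y)=\max\{|y-a|,|y-b|\}$) and attend to uniqueness and interiority of the optimizer.
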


\begin{proof}
    Owing to Corollary \ref{crr:sparecapacity}, we have that any truthful ERM, namely $\RM$, must be such that $\vec v=(p,p,\dots,p)$ where $p\in(0,1)$.
    Thus, to complete the proof it is sufficient to prove that the function $M:y\to\int_{\erre}|x-y|d\mu$ is decreasing on $(-\infty,med(\mu))$ and increasing on $(med(\mu),+\infty)$.
    Indeed, let us assume that $y\in(-\infty,med(\mu))$ and let us consider $\lim_{h\to 0^+}\frac{\int_{\erre}|x-(y+h)|d\mu-\int_\erre |x-y|d\mu}{h}$.
    % \[
    % \lim_{h\to 0^+}\frac{\int_{\erre}|x-(y+h)|d\mu-\int_\erre |x-y|d\mu}{h}.
    % \]
    By the linearity of the integral, we have that $\int_\erre |x-y|d\mu=\int_{-\infty}^y (y-x)d\mu+\int_y^{+\infty}(x-y)d\mu$, similarly $\int_\erre |x-(y+h)|d\mu=\int_{-\infty}^{y+h} (y+h-x)d\mu+\int_{y+h}^{+\infty}(x-y-h)d\mu$, so that
    \begin{align*}
        \int_{\erre}|x-(y+h)|d\mu-\int_\erre |x-y|d\mu &=2\int_{y}^{y+h}(y-x)d\mu+\int_{-\infty}^yhd\mu-\int_{y}^{+\infty}hd\mu\\
        &\le h^2+\int_{-\infty}^yhd\mu -\int_{y}^{+\infty}hd\mu,
    \end{align*}
    thus $\partial_y M(y)=F_\mu(y)-(1-F_\mu(y))$.
    Since $y<med(\mu)$, we infer that $\partial_y M(y)<0$.
    Similarly, it holds $\partial_y M(y)>0$ if $y>med(\mu)$, which concludes the proof for the Social Cost.
    The results for the $l_2$ costs and the Maximum Cost follow by a similar argument.
\end{proof}

% 
% Since the best possible RM or ERM places all the facilities at the median agent, we infer that in the no-spare capacity framework no ERM or Ranking Mechanism can be be asymptotically optimal.
% 
% 
% \begin{corollary}
%     In the no-spare capacity framework, no Ranking Mechanism can be Asymptotically optimal as long as $m>1$.
%     % 
%     Furthermore, no mechanism can achieve a better asymptotic approximation ratio than a mechanism that places all the facilities at the median of the probability distribution.
% \end{corollary}
% 
% \begin{proof}
%     The fact that no RM can achieve a limit Social Cost that is lower than the one achieved by a mechanism that places all the facility at the median follows directly by Theorem \ref{thm:no_spare_capacy} and by the fact that the median minimizes the function $M$ defined as in \eqref{definition:Mfunction}.
%     % 
%     To conclude, notice that the solution of problem \eqref{eq:min_proj} when $m>1$ is always supported over $m$ different points, thus the asymptotic optimal Social Cost is always strictly lower than the limit cost any RM.
% \end{proof}
% 
% Moreover, the Bayesian approximation ratio of the median mechanism gives a lower bound on what the limit of the Bayesian approximation ratio of the mechanism can be in the no-spare capacity framework. 
% 

% 
% In the no-spare capacity case, it is possible to analytically study the limit approximation ratio of the best ERM when the agents are supported according to any distribution.
% % 
% In what follows, we showcase this procedure with the uniform distribution.
% 

% 
When the agents are distributed following a uniform distribution, that is $\mu=\mathcal{U}[0,1]$, we can express the limit Bayesian approximation ratio of the $\texttt{all-median}$ mechanism as a function of $\vec q$.
Indeed, since $\sum_{j\in[m]}q_j=1$, any solution to problem \eqref{eq:min_proj} separates the interval $[0,1]$ into $m$ intervals whose length is $q_j$.
Notice that the order in which $[0,1]$ is divided is irrelevant.
Furthermore, the facility is placed at the median of such interval, thus the objective value of \eqref{eq:min_proj} is $\frac{1}{4}\sum_{j\in[m]}q_j^2$.
By Theorem \ref{thm:limitBAR}, the limit Bayesian approximation ratio of the \texttt{all-median} mechanism is $\lim_{n\to\infty}B_{ar,1}(\texttt{all-median})=(\sum_{j\in[m]} q_j^2)^{-1}$, since the asymptotic cost of placing all the facilities at the median point is $\int_0^1|x-0.5|dx=\frac{1}{4}$.
The limit Bayesian approximation ratio gets closer to $1$ as the values of $\vec q$ become concentrated at one index.
Conversely, if all the facilities have the same capacity $\frac{1}{m}$, the Bayesian approximation ratio of the \texttt{all-median} becomes the largest possible, that is $\lim_{n\to\infty}B_{ar,1}(\RM)=m$.

\subsection{The optimal ERM for two facilities and symmetric distributions -- The Social Cost case}
\label{section:tobeextended}

In this section, we retrieve the optimal ERMs with respect to the Social Cost when $m=2$ and $\mu$ is symmetric.
Since $\mu$ is symmetric, the asymptotic expected Social Cost of any feasible $\RM$ does not depend on $\pi\in\mathcal{S}_2$.
For this reason, we fix $\pi=Id$ and search for the vector $\vec v=(v_1,v_2)$ that induces the optimal ERM.
% 
% For the sake of simplicity, we consider the Social Cost case.
% 
First, we rewrite the objective value of problem \eqref{eq:opt:ERM} in terms of $y_i=F_\mu^{[-1]}(v_i)$ for $i=1,2$.
Since $F_\mu^{[-1]}$ is monotone, we have that $y_1\le y_2$.
We then set
\begin{equation}
    % \label{eq:Wchanged}
    \mathcal{W}(y_1,y_2)=W_1\bigg(\mu,F_\mu\Big(\frac{y_1+y_2}{2}\Big)\delta_{y_1}+\Big(1-F_\mu\Big(\frac{y_1+y_2}{2}\Big)\Big)\delta_{y_2}\bigg).    
\end{equation}
Since both $F_\mu$ and $F_\mu^{[-1]}$ are bijective, any $\vec y=(y_1,y_2)$ identifies a unique $\vec v$ and vice-versa.
Using the properties of the Wasserstein distances on the line \cite{santambrogio2015optimal}, we have $\mathcal{W}(y_1,y_2)=\int_{-\infty}^{\frac{y_1+y_2}{2}}|x-y_1|\rho_\mu(x)dx+\int_{\frac{y_1+y_2}{2}}^{+\infty}|x-y_2|\rho_\mu(x)dx$.  
% \begin{equation}
%     % \label{eq:Wchanged}
%     \mathcal{W}(y_1,y_2)=\int_{-\infty}^{\frac{y_1+y_2}{2}}|x-y_1|\rho_\mu(x)dx+\int_{\frac{y_1+y_2}{2}}^{+\infty}|x-y_2|\rho_\mu(x)dx.    
% \end{equation}

% We now show that $\mathcal{W}$ is differentiable and compute its gradient.

\begin{theorem}
\label{thm:bestpercsym}
    Let $\mu$ be a symmetric probability measure.
    Then, $\mathcal{W}$ is differentiable and its gradient is defined by the formula
    \[
    \nabla \mathcal{W}(y_1,y_2)=\Big(2F_\mu(y_1)-F_\mu\Big(\frac{y_1+y_2}{2}\Big),2F_\mu(y_2)-1-F_\mu\Big(\frac{y_1+y_2}{2}\Big)\Big).
    \]
    Furthermore, we have $\nabla \mathcal{W}(F_\mu^{[-1]}(0.25),F_\mu^{[-1]}(0.75))=(0,0)$.
\end{theorem}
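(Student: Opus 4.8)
The plan is to work directly with the explicit representation
$$\mathcal{W}(y_1,y_2)=\int_{-\infty}^{\frac{y_1+y_2}{2}}|x-y_1|\rho_\mu(x)\,dx+\int_{\frac{y_1+y_2}{2}}^{+\infty}|x-y_2|\rho_\mu(x)\,dx$$
established just before the statement, and to differentiate it in each variable. The first step is to remove the absolute values by splitting each integral at the point where the integrand is non-smooth: since $y_1\le\frac{y_1+y_2}{2}\le y_2$, I would write the first integral as $\int_{-\infty}^{y_1}(y_1-x)\rho_\mu\,dx+\int_{y_1}^{m}(x-y_1)\rho_\mu\,dx$ and the second as $\int_{m}^{y_2}(y_2-x)\rho_\mu\,dx+\int_{y_2}^{+\infty}(x-y_2)\rho_\mu\,dx$, where $m=\frac{y_1+y_2}{2}$. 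Differentiability of each piece then follows from the Leibniz rule under the standing assumptions on $\mu$ (absolute continuity with a differentiable, integrable density and finite first moment), which guarantee that the integrals and their derivatives are finite.

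Second, I would apply the Leibniz rule, carefully collecting the boundary terms produced by the variable endpoint $m$, whose derivatives are $\partial_{y_1}m=\partial_{y_2}m=\tfrac12$. Differentiating the interior of each piece reproduces partial cumulative masses such as $F_\mu(y_1)$ and $F_\mu(m)-F_\mu(y_1)$, exactly as in the computation of $\partial_y M$ in the proof of Theorem~\ref{crr:nsc_best_RM}. The key observation is that the boundary contributions at $x=m$ carry the factors $(m-y_1)\rho_\mu(m)$ and $(y_2-m)\rho_\mu(m)$, and that these \emph{cancel} because $m-y_1=y_2-m=\frac{y_2-y_1}{2}$; equivalently, the two distance integrands agree at the split point ($|m-y_1|=|m-y_2|$), so moving the midpoint contributes nothing to first order. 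After this cancellation I expect to be left precisely with $\partial_{y_1}\mathcal{W}=2F_\mu(y_1)-F_\mu(m)$ and $\partial_{y_2}\mathcal{W}=2F_\mu(y_2)-1-F_\mu(m)$, which is the claimed gradient. I would stress that symmetry of $\mu$ is \emph{not} used here: the formula holds for every $\mu$ satisfying the basic assumptions.

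Finally, to evaluate the gradient at $\big(F_\mu^{[-1]}(0.25),F_\mu^{[-1]}(0.75)\big)$ I would invoke symmetry. Writing $c$ for the centre of symmetry, the relation $F_\mu(c+t)=1-F_\mu(c-t)$ gives $F_\mu(c)=\tfrac12$ and shows that the $0.25$- and $0.75$-quantiles are reflections of one another about $c$, so that their average is $c$. Hence at this point $y_1=c-t$, $y_2=c+t$, $m=c$, $F_\mu(m)=\tfrac12$, $F_\mu(y_1)=0.25$, and $F_\mu(y_2)=0.75$; substituting into the gradient formula yields $\big(2\cdot0.25-0.5,\,2\cdot0.75-1-0.5\big)=(0,0)$. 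The only genuinely delicate points are the rigorous justification of differentiation under the integral sign and the verification that the $\rho_\mu(m)$ boundary terms cancel; the symmetry step and the final substitution are then immediate.
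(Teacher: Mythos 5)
Your proposal is correct and follows essentially the same route as the paper's proof: the same four-piece decomposition of $\mathcal{W}$ at $y_1$, $m=\frac{y_1+y_2}{2}$, and $y_2$, the same differentiation with the crucial cancellation of the $\rho_\mu(m)$ boundary terms (which the paper obtains via explicit difference quotients and Lebesgue's theorem rather than by citing the Leibniz rule), and the same symmetry argument for the evaluation at $\big(F_\mu^{[-1]}(0.25),F_\mu^{[-1]}(0.75)\big)$, which you actually spell out in more detail than the paper does. Your side remark that the gradient formula holds without symmetry is accurate and is implicit in the paper's computation as well, since symmetry enters only in the final substitution.
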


% In general, however, determining the optimal ERM is less trivial.

% \begin{theorem}
% % \label{thm:bestpercsym}
%     Let $\vec q=(q_1,q_2)$ be such that $q_2\le q_1$ and let $\mu$ be a symmetric probability measure.
%     % 
%     \begin{enumerate*}[label=(\roman*)]
%         \item If $0.75\le q_2$, then $\vec v=(0.25,0.75)$ induces the optimal ERM. Moreover, the induced mechanism is asymptotically optimal.
%         \item If $q_2\in(0.5,0.75)$ and $q_1+q_2\ge \frac{3}{2}$, then $\vec v=(1-q_1,q_2)$ induces the optimal ERM.
%         \item If $q_2\le 0.5$, then we have that second entry of the best $\vec v$ is $v_2=q_2$. The value of $v_1$ can assume two values: $v_1=F_\mu(\bar y_1)$ if there exists $\bar y_1$ such that $2F_\mu(\bar y_1)=F_\mu(\frac{\bar y_1+F_\mu^{[-1]}(q_2)}{2})$, and $v_1=q_1$ otherwise.
%     \end{enumerate*}
% \end{theorem}

\begin{proof}
% [Proof of Theorem \ref{thm:bestpercsym}]
    % 
    Let $\vec q$ be a p.c.v., due to the symmetry of $\mu$, we assume that $y_1\le y_2$ are the positions of the facility and that $y_1$ has capacity $q_1$ and $y_2$ has capacity $q_2$.
    We then define
    \begin{align*}
        \mathcal{W}(y_1,y_2)=&\int_{-\infty}^{y_1}(y_1-x)\rho_\mu(x)dx+\int_{y_1}^{\frac{y_1+y_2}{2}}(x-y_1)\rho_\mu(x)dx\\
        &+\int_{\frac{y_1+y_2}{2}}^{y_2}(y_2-x)\rho_\mu(x)dx+\int_{y_2}^{+\infty}(x-y_2)\rho_\mu(x)dx,
        % =&A(y_1,y_2)+B(y_1,y_2)+C(y_1,y_2)+D(y_1,y_2),
    \end{align*}
    thus we have $\mathcal{W}(y_1,y_2)=A(y_1,y_2)+B(y_1,y_2)+C(y_1,y_2)+D(y_1,y_2)$, where
    \begin{align*}
        A(y_1,y_2)&=\int_{-\infty}^{y_1}(y_1-x)\rho_\mu(x)dx,\quad
        &B(y_1,y_2)=\int_{y_1}^{\frac{y_1+y_2}{2}}(x-y_1)\rho_\mu(x)dx,\\
        C(y_1,y_2)&=\int_{\frac{y_1+y_2}{2}}^{y_2}(y_2-x)\rho_\mu(x)dx,\quad
        &D(y_1,y_2)=\int_{y_2}^{+\infty}(x-y_2)\rho_\mu(x)dx.
    \end{align*}
    We now compute the derivative of $\mathcal{W}$ with respect to $y_1$ and $y_2$.
    Let us consider the derivative $\partial_{y_1}\mathcal{W}(y_1,y_2)$.
    Since $\partial_{y_1}D(y_1,y_2)=0$, we have $\partial_{y_1}\mathcal{W}(y_1,y_2)=\partial_{y_1}A(y_1,y_2)+\partial_{y_1}B(y_1,y_2)+\partial_{y_1}C(y_1,y_2)$.
    First, let us consider $\partial_{y_1}A(y_1,y_2)$.
    We have that
    \begin{align*}
        \frac{1}{h}\Big(\int_{-\infty}^{y_1+h}(y_1+h-x)\rho_\mu(x)dx&-\int_{-\infty}^{y_1}(y_1-x)\rho_\mu(x)dx\Big)\\
        &=\frac{1}{h}\Big(h\int_{-\infty}^{y_1+h}\rho_\mu(x)dx-\int_{y_1}^{y_1+h}(y_1-x)\rho_\mu(x)dx\Big)\\
        &=\int_{-\infty}^{y_1+h}\rho_\mu(x)dx+\frac{1}{h}\int_{y_1}^{y_1+h}(y_1-x)\rho_\mu(x)dx.
    \end{align*}
    By the Lebesgue's theorem, $\lim_{h\to 0}\frac{1}{h}\int_{y_1}^{y_1+h}(y_1-x)\rho_\mu(x)dx=0$, thus $\partial_{y_1}A(y_1,y_2)=F_\mu(y_1)$.
    % \[
    % \lim_{h\to 0}\frac{1}{h}\int_{y_1}^{y_1+h}(y_1-x)\rho_\mu(x)dx=0,
    % \]
    
    % 

    % 
    Second, let us compute $\partial_{y_1}B(y_1,y_2)$.
    We have that
    \begin{align*}
        \frac{1}{h}\Big(&\int_{y_1+h}^{\frac{y_1+y_2}{2}+\frac{h}{2}}(x-y_1-h)\rho_\mu(x)dx-\int_{y_1}^{\frac{y_1+y_2}{2}}(x-y_1)\rho_\mu(x)dx\Big)\\
        =&\frac{1}{h}\Big(-h\int_{y_1+h}^{\frac{y_1+y_2}{2}+\frac{h}{2}}\rho_\mu(x)dx+\int_{\frac{y_1+y_2}{2}}^{\frac{y_1+y_2}{2}+\frac{h}{2}}(x-y_1)\rho_\mu(x)dx-\int_{y_1}^{y_1+h}(x-y_1)\rho_\mu(x)dx \Big).
    \end{align*}
    Owing again to Lebesgue's theorem, we have $\partial_{y_1}B(y_1,y_2)=-\Big(F_\mu\big(\frac{y_1+y_2}{2}\big)-F_\mu(y_1)\Big)+\frac{(y_2-y_1)}{4}\rho_\mu\Big(\frac{y_1+y_2}{2}\Big)$.
    % \[
    % \partial_{y_1}B(y_1,y_2)=-\Big(F_\mu\big(\frac{y_1+y_2}{2}\big)-F_\mu(y_1)\Big)+\frac{(y_2-y_1)}{4}\rho_\mu\Big(\frac{y_1+y_2}{2}\Big).
    % \]
    % Finally, we compute $\partial_{y_1}C(y_1,y_2)$.
    % % 
    % We have that
    % \begin{align*}
    %     \frac{1}{h}\Big(&\int_{\frac{y_1+y_2}{2}+\frac{h}{2}}^{y_2}(y_2-x)\rho_\mu(x)dx-\int_{\frac{y_1+y_2}{2}}^{y_2}(y_2-x)\rho_\mu(x)dx\Big)=-\frac{1}{h}\int_{\frac{y_1+y_2}{2}}^{\frac{y_1+y_2}{2}+\frac{h}{2}}(y_2-x)\rho_\mu(x)dx.
    % \end{align*}
    By a similar computation, we have that $\partial_{y_1}C(y_1,y_2)=-\frac{y_2-y_1}{4}\rho_\mu\Big(\frac{y_1+y_2}{2}\Big)$.
    Putting all together, we infer $\partial_{y_1}\mathcal{W}(y_1,y_2)=2F_\mu(y_1)-F_\mu\Big(\frac{y_1+y_2}{2}\Big)$.
    % \[
    % \partial_{y_1}\mathcal{W}(y_1,y_2)=2F_\mu(y_1)-F_\mu\Big(\frac{y_1+y_2}{2}\Big).
    % \]
    Similarly, we have that $\partial_{y_2}\mathcal{W}(y_1,y_2)=2F_\mu(y_2)-1-F_\mu\Big(\frac{y_1+y_2}{2}\Big)$,
    % \[
    % \partial_{y_2}\mathcal{W}(y_1,y_2)=2F_\mu(y_2)-1-F_\mu\bigg(\frac{y_1+y_2}{2}\bigg),
    % \]
    % 
    which concludes the first part of the proof.
    % 
    % Since $\mu$ is absolutely continuous, we have that $F_\mu$ is differentiable, thus $\mathcal{W}$ is twice differentiable.
    % 
    Finally, due to the properties of cumulative distribution functions of symmetric measures, we infer that $\nabla \mathcal{W}\big(F_\mu^{[-1]}(0.25),F_\mu^{[-1]}(0.75)\big)=(0,0)$.
    % \[
    % \nabla \mathcal{W}\Big(F_\mu^{[-1]}(0.25),F_\mu^{[-1]}(0.75)\Big)=(0,0).
    % \]
\end{proof}

Now that we have an explicit formula for the gradient of $\mathcal{W}$, we need to express the set of feasible $\vec y$.
From Theorem \ref{thm:RM_feasible}, we have that $v_1\le q_1$ and $1-v_2\le q_2$, i.e. $1-q_2\le v_2$.
Since $F_\mu^{[-1]}$ is monotone, we have that $y_2=F_\mu^{[-1]}(v_2)\le F_\mu^{[-1]}(q_1)$ and that $F_\mu^{[-1]}(1-q_2)\le y_1$.
Therefore we have that the set of feasible $\vec y$ lays in a triangle, namely $T(\vec q)$, whose vertexes are $(F_\mu^{[-1]}(1-q_2),F_\mu^{[-1]}(q_1))$, $(F_\mu^{[-1]}(q_1),F_\mu^{[-1]}(q_1))$, and $(F_\mu^{[-1]}(1-q_2),F_\mu^{[-1]}(1-q_2))$.
To exemplify this procedure, we carry out the computation for the Uniform Distribution.

\paragraph{The uniform distribution.}
Theorem \ref{thm:bestpercsym} allows us to fully characterize the optimal ERM with respect to the Social Cost when $\mu$ is a uniform distribution.

\begin{theorem}
\label{thm:bestuniform}
    Let $\vec q$ be a p.c.v. such that $q_2\le q_1$, then the optimal ERM for a uniform distribution is unique and induced by $(Id,\vec v)$ where
    \begin{enumerate*}[label=(\roman*)]
        \item $\vec v=(0.25,0.75)$ if $q_2\ge 0.75$, 
        \item $\vec v=(1-q_2,q_1)$ if $3q_2-2-q_1\le 0$, and
        \item $\vec v=(1-q_2,1-\frac{q_2}{3})$ otherwise.
    \end{enumerate*}
    % Alternatively, the optimal ERM is induced by $(\theta,\vec v')$ in the equi-capacitated case, the best $\vec v$ is $()$.
\end{theorem}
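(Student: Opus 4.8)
The plan is to specialise the variational characterisation of the earlier results to $\mu=\mathcal U[0,1]$ and then solve a two–variable strictly convex program by an active–set argument. For the uniform law $F_\mu(x)=x$, $F_\mu^{[-1]}(v)=v$ and $\rho_\mu\equiv 1$ on $[0,1]$, so $v_i=y_i$ and, by Theorem~\ref{thm:existance_problem}, finding the optimal ERM (with $\pi=\mathrm{Id}$ fixed, since $\mu$ is symmetric and the cost is then $\pi$–independent) reduces to minimising the explicit function $\mathcal W(y_1,y_2)$ of Theorem~\ref{thm:bestpercsym} over the feasible triangle
\[
T(\vec q)=\{(y_1,y_2):\,1-q_2\le y_1\le y_2\le q_1\}
\]
obtained from the feasibility inequalities $v_2\le q_1$, $v_1\ge 1-q_2$ of Theorem~\ref{thm:RM_feasible}. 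Specialising the gradient of Theorem~\ref{thm:bestpercsym} gives $\nabla\mathcal W(y_1,y_2)=\big(\tfrac{3y_1-y_2}{2},\tfrac{3y_2-y_1-2}{2}\big)$, whose Hessian is the constant positive–definite matrix $\tfrac12\left(\begin{smallmatrix}3&-1\\-1&3\end{smallmatrix}\right)$. Hence $\mathcal W$ is strictly convex, which immediately yields the uniqueness claim and reduces the task to locating the unique minimiser of a strictly convex quadratic over a triangle.

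First I would solve $\nabla\mathcal W=0$, obtaining the unconstrained minimiser $(y_1,y_2)=(\tfrac14,\tfrac34)$, consistent with the critical point $(F_\mu^{[-1]}(0.25),F_\mu^{[-1]}(0.75))$ already isolated in Theorem~\ref{thm:bestpercsym}. This point lies in $T(\vec q)$ exactly when $1-q_2\le\tfrac14$ and $\tfrac34\le q_1$; since $q_1\ge q_2$, both reduce to $q_2\ge\tfrac34$, giving case (i) with $\vec v=(0.25,0.75)$.

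For $q_2<\tfrac34$ the unconstrained minimiser is infeasible (its first coordinate drops below $1-q_2$), so the minimiser lies on $\partial T(\vec q)$ and the heart of the argument is to show that the left edge $y_1=1-q_2$ is the active face. I would minimise the one–variable strictly convex map $y_2\mapsto\mathcal W(1-q_2,y_2)$ on $[1-q_2,q_1]$, whose interior stationary point is $y_2^\star=1-\tfrac{q_2}{3}$. The decisive computation is the transversal derivative there, $\partial_{y_1}\mathcal W\big(1-q_2,\,1-\tfrac{q_2}{3}\big)=1-\tfrac{4q_2}{3}$, which is nonnegative precisely for $q_2\le\tfrac34$; thus for $q_2<\tfrac34$ the lower bound on $y_1$ is genuinely binding. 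Clamping $y_2^\star$ against the upper bound $q_1$ then splits the two remaining regimes: if $1-\tfrac{q_2}{3}\le q_1$ the stationary point is feasible and the minimiser is $(1-q_2,\,1-\tfrac{q_2}{3})$ (case (iii)), whereas if $1-\tfrac{q_2}{3}>q_1$ it is clamped to the corner $(1-q_2,q_1)$ (case (ii)), where the corner conditions $\partial_{y_1}\mathcal W\ge 0$ and $\partial_{y_2}\mathcal W\le 0$ are checked using $q_1\ge q_2$.

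The main obstacle is the active–set bookkeeping: one must rule out that the optimum sits in the relative interior of the top edge $y_2=q_1$ or of the diagonal $y_1=y_2$, or at the remaining two vertices. This is delicate because $\partial_{y_1}\mathcal W$ changes sign across the line $y_1=y_2/3$ through the unconstrained optimum, so a naive monotonicity argument does not by itself identify the binding face. The clean route is to verify the KKT (normal–cone) conditions at each of the three candidate points above and then invoke strict convexity to certify uniqueness of the verified point; the identity $\partial_{y_1}\mathcal W(1-q_2,1-\tfrac{q_2}{3})=1-\tfrac{4q_2}{3}$ does most of the work, simultaneously producing the $q_2=\tfrac34$ threshold of case (i) and confirming that the left constraint binds throughout $q_2<\tfrac34$. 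Finally I would translate $\vec y$ back into $\vec v$ via $v_i=F_\mu(y_i)=y_i$ to read off the stated vectors.
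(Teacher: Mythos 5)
Your approach is essentially the same as the paper's: specialise to the uniform law so that $\mathcal{W}$ becomes the explicit quadratic $\frac{v_1^2}{2}+\frac{(1-v_2)^2}{2}+\frac{(v_2-v_1)^2}{4}$, compute $\nabla\mathcal{W}=\frac12\big(3v_1-v_2,\,3v_2-2-v_1\big)$ and the constant positive-definite Hessian, settle case (i) via the interior critical point $(1/4,3/4)$, and then run a boundary analysis over the triangle $T(\vec q)$. Your KKT/active-set organisation is in fact cleaner than the paper's edge-by-edge gradient-sign discussion, and your key computations are correct: the left-edge stationary point is $(1-q_2,1-\frac{q_2}{3})$, and the transversal derivative there is $\partial_{y_1}\mathcal{W}(1-q_2,1-\frac{q_2}{3})=1-\frac{4q_2}{3}$, nonnegative exactly when $q_2\le\frac34$; strict convexity then certifies that the single verified KKT point is the unique global minimiser, which disposes of the diagonal, the top edge and the remaining vertices in one stroke.

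There is, however, one substantive point you never address: the regime boundary you derive is $1-\frac{q_2}{3}\lessgtr q_1$, i.e.\ $3q_1+q_2\lessgtr 3$, whereas the theorem states case (ii) under the condition $3q_2-2-q_1\le 0$. These are not equivalent, and you simply relabel your two regimes ``(ii)'' and ``(iii)'' as if they were. In fact, under the standing assumption $q_2\le q_1\le 1$ the printed condition always holds ($3q_2-2-q_1\le 3q_1-2-q_1=2q_1-2\le 0$), so the statement as written makes case (iii) vacuous and assigns the corner $(1-q_2,q_1)$ to every $\vec q$ with $q_2<\frac34$; this is false, e.g.\ for $\vec q=(0.8,0.7)$ the corner $(0.3,0.8)$ gives $\mathcal{W}=0.1275$ while the feasible point $(0.3,1-\frac{q_2}{3})=(0.3,\frac{23}{30})$ gives $\mathcal{W}=\frac{19}{150}\approx 0.1267$. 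Your split is the correct one — it is also what the paper's own proof implicitly uses when it intersects $\{\partial_{y_2}\mathcal{W}=0\}$ with the left edge, which is possible exactly when $1-\frac{q_2}{3}\le q_1$ — so the printed condition must be read as a typo for $3q_1+q_2-3\le 0$. A complete answer has to flag this explicitly: as written, your proof establishes a corrected statement rather than the literal one, and the unexplained discrepancy between your conditions and the theorem's is a gap in the write-up.
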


\begin{proof}
% [Proof of Theorem \ref{thm:bestuniform}] 
    % 
    Notice that $\mathcal{W}(\vec v)=\int_0^1 \min\{|x-v_1|,|x-v_2|\}dx$, that is $\mathcal{W}(\vec v)=\frac{v_1^2}{2}+\frac{(1-v_2)^2}{2}+\frac{(v_2-v_1)^2}{4}$, hence $\nabla \mathcal{W}(\vec v)=\frac{1}{2}\big(3v_1-v_2,3v_2-2-v_1\big)$.
    % 
    % If $q_2\ge 0.75$, we have that $\vec v=(0.25,0.75)$ is a feasible vector and it holds $\nabla \mathcal{W}(0.25,0.75)=(0,0)$.
    % 
    Moreover, a simple computation allows us to compute the Hessian of $\mathcal{W}$, that is
    \[
    H\mathcal{W}(\vec v)=\frac{1}{2}\begin{pmatrix}
    3 & -1 \\
    -1 & 3 
    \end{pmatrix}.
    \]
    Since $H\mathcal{W}$ is definite positive, we have that if $q_2\ge 0.75$, then $(0.25,0.75)$ is the minimum.
    We now consider the other cases, i.e. $q_2\le 0.75$.
    % 
    % From a simple computation, we have that, when $\mu$ is a uniform distribution, we have $\nabla \mathcal{W}(\vec v)=\frac{1}{2}\big(3v_1-v_2,3v_2-2-v_1\big)$.
    % 
    Notice that the set of points on which the first derivative of $\mathcal{W}$ nullifies is the line $v_2=3v_1$, while the set of points on which the second derivative of $\mathcal{W}$ nullifies is $v_2=\frac{1}{3}(v_1+2)$, thus $\nabla \mathcal{W}(\vec v)=(0,0)$ if and only if $\vec v=(0.25,0.75)$.
    Therefore, the solution must lay on the border of the feasible $\vec v$, which is the triangle, namely $T(\vec q)$, whose vertexes are $(q_1,q_1)$, $(1-q_2,1-q_2)$, and $(1-q_2,q_1)$.
    If $3q_2-2-q_1\le 0$, we have that $\partial_{y_1} \mathcal{W}(\vec v)>0$ and $\partial_{y_2} \mathcal{W}(\vec v)<0$.
    Thus the gradient cannot be perpendicular to either $\{(1-q_2,t)\;\text{s.t.}\;t\in[1-q_2,q_1]\}$ and $\{(t,q_1)\;\text{s.t.}\;t\in[1-q_2,q_1]\}$, hence the minimum is at the vertex $(1-q_2,q_1)$, which concludes the proof for this case.
    In the remaining cases, we have that the boundary of $T(\vec q)$ intersect the line on which $\partial_{y_2}\mathcal{W}=0$.
    In particular, $\{\partial_{y_2}\mathcal{W}=0\}$ intersects with the edge $\{(1-q_2,t)\;\text{s.t.}\;t\in[1-q_2,q_1]\}$ at the point $(1-q_2,1-\frac{q_2}{3})$.
    On such point, we have that the gradient is perpendicular to the boundary of $T(\vec q)$ and points inward, thus is a relative minimum.
    To conclude, we show that this is the only possible relative minimum.
    Indeed, it is easy to see that if the boundary of $T(\vec q)$ intersect the line $v_2=3v_1$, the gradient points outward due to the sign of $\nabla \mathcal{W}$.
    % 
    % We then conclude the proof.
\end{proof}

\begin{example}[Continuing Example \ref{ex:impox_aoerm}]
Owing to Theorem \ref{thm:bestuniform}, we have that the best possible ERM in the instance described in Example \ref{ex:impox_aoerm} is induced by $Id$ and $\vec v=(0.6,0.8)$.
The limit Bayesian approximation ratio of the optimal mechanism is therefore $\sim 1.62$.
This result is extremely interesting if compared to the expected performances of the EEM.
Indeed, owing to the argument presented in Section \ref{ex:im}, we have that the EEM will either 
\begin{enumerate*}[label=(\roman*)]
    \item place the facility with capacity $q_2$ at $y_1=x_1$ and the other facility at $y_2=2x_{n-(\floor{q_1(n-1)}+1)}-x_1$ with probability $0.5$ or
    \item place the facility with capacity $q_2$ at $y_2=x_n$ and the other facility at $y_1=2x_{\floor{q_1(n-1)}+2}-x_n$ with probability $0.5$.
\end{enumerate*} 
Since $\mu=\mathcal{U}[0,1]$, we have that $x_1\to 0$, $x_n\to 1$, $x_{n-\floor{q_2(n-1)}+2}\to F_\mu^{[-1]}(1-q_2)=1-q_2$, and $x_{\floor{q_2(n-1)}+1}\to F_\mu^{[-1]}(q_2)=q_2$.
In particular, the limit expected Social Cost of the mechanism is then $\frac{1}{2}W_1(\mu,q_2\delta_{2q_2-1}+(1-q_2)\delta_1)+\frac{1}{2}W_1(\mu,(1-q_2)\delta_{0}+q_2\delta_{2-2q_2})$.
Since $\mu$ is symmetric with respect to $\frac{1}{2}$, we have that $W_1(\mu,q_2\delta_{2q_2-1}+(1-q_2)\delta_1)=W_1(\mu,(1-q_2)\delta_{0}+q_2\delta_{2-2q_2})=\frac{(1-q_2)^2}{2}+\frac{q_2(2-3q_2)}{2}=0.34$, hence the limit Bayesian approximation ratio of the EEM is $\sim 2.62>1.62$, that is the best ERM outperforms the EEM when we the agents are sampled from a uniform distribution.
\end{example}

\subsection{The optimal ERM for two facilities and symmetric distributions -- The Maximum Cost case}

First, we consider the case in which $p=\infty$.
In this case, given $\mu$ and $\vec q$, the solution to \eqref{eq:opt:ERM} can be built in a finite number of step.
In particular, to find a solution to \eqref{eq:opt:ERM} we need to consider the following problem
% is related to the solution to \eqref{eq:lp_1} when $p=\infty$, which, in this case, reads as follows
\begin{equation}
    \label{eq:optERM_maxcost}
    \min_{\pi\in \{Id,\sigma\}}\min_{\zeta\in\PP_{\pi,\vec q}(\erre)}W_\infty(\mu,\zeta),
\end{equation}
where $Id$ is the identity permutation and $\sigma$ is the permutation that swaps the two elements of $[2]$.
Every solution to \eqref{eq:optERM_maxcost} divides the support of $\mu$ into two separate and disjoint sets, namely $[a,z]$ and $(z,b]$.
Given the partition $[a,z]$ and $(z,b]$, the objective value of \eqref{eq:optERM_maxcost} is equal to $\max\{\frac{z-a}{2},\frac{b-z}{2}\}$.
Due to the capacity constraints, we must have that $\mu([a,z])\le q_{\pi(1)}$ and $\mu((z,b])\le q_{\pi(2)}$ for a suitable $\pi\in\mathcal{S}_2$.
% 
% Given the solution to problem \ref{}, it is possible to compute the optimal ERM associated to $\mu$ and $\vec q$ with respect to the MC.
% 
In order to detail how to retrieve the optimal solution to \eqref{eq:optERM_maxcost}, we introduce the projection operator.
Given a compact set $K\subset \erre$, we define the projection operator $pr_{K}:\erre\to K$ as $pr_K(x)=\argmin_{y\in K}|x-y|$.
% \begin{equation}
% \label{eq:projoperator}
%     pr_K(x)=\argmin_{y\in K}|x-y|.
% \end{equation}
Since $K$ is compact, $pr_K$ is well-defined.
Using the projection operator, we define a routine that computes the solution to problem \eqref{eq:optERM_maxcost}, which we report in Algorithm \ref{alg:optinfty}.

\begin{algorithm}[t]
    \caption{Routine to solve problem \eqref{eq:optERM_maxcost}.}
    \label{alg:optinfty}
    \begin{algorithmic}[1]
        \State \textbf{Input:} A probability measure $\mu$ whose support is $[a,b]$; a p.c.v. $\vec q$
        \State \textbf{Output:} A solution to problem \eqref{eq:optERM_maxcost}.
        
        \State $z_0=\frac{a+b}{2}$, $K_{Id}:=[F_\mu^{-1}(1-q_2),F_\mu^{-1}(q_1)]$, and $K_{\sigma}:=[F_\mu^{-1}(1-q_1),F_\mu^{-1}(q_2)]$;

        \State $z_{Id}=pr_{K_{Id}}(z_0)$ and $z_{\sigma}=pr_{K_{\sigma}}(z_0)$;

        \If{$|z_{Id}-z_0|\le |z_{\sigma}-z_0|$}
            \State $\alpha=\mu([a,z_{Id}])$;
            \State $y_1=\frac{z_{Id}}{2}$, $y_2=\frac{z_{Id}+b}{2}$, $\pi=Id$;     
        \Else
            \State $\alpha=\mu([a,z_\sigma])$;
            \State $y_1=\frac{z_{\sigma}}{2}$, $y_2=\frac{z_{\sigma}+b}{2}$, $\pi=\sigma$;
        \EndIf
        \State \textbf{Return:} $\nu=\alpha\delta_{y_1}+(1-\alpha)\delta_{y_2}$ and $\pi$
        
    \end{algorithmic}
\end{algorithm}

% First, we show that Algorithm \ref{alg:optinfty} does return a solution to problem \eqref{eq:optERM_maxcost}.

\begin{theorem}
    The output of Algorithm \ref{alg:optinfty} is a solution to problem \eqref{eq:optERM_maxcost}.
\end{theorem}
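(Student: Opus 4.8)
The plan is to reduce problem \eqref{eq:optERM_maxcost} to a one-dimensional search over the point $z$ that splits the support $[a,b]$ of $\mu$ into two pieces, and then to verify that Algorithm \ref{alg:optinfty} performs exactly this search. First I would recall that, since the target measure $\zeta$ is a probability measure supported on two points $y_1\le y_2$, the optimal transport plan between $\mu$ and $\zeta$ on the line is monotone \cite{auricchio2022structure}: there is a threshold $z$ with $F_\mu(z)=\zeta_1$ such that all the mass of $\mu$ on $[a,z]$ is sent to $y_1$ and all the mass on $(z,b]$ is sent to $y_2$. Consequently $W_\infty(\mu,\zeta)=\max\{\max_{x\in[a,z]}|x-y_1|,\max_{x\in(z,b]}|x-y_2|\}$, and for a fixed split $z$ this value is minimized by placing each facility at the midpoint of its interval, i.e. $y_1=\frac{a+z}{2}$ and $y_2=\frac{z+b}{2}$ (taking $a=0$ without loss of generality by Corollary \ref{crr:scale_invariance}, which recovers the formulas in the algorithm), yielding the objective $g(z):=\max\{\frac{z-a}{2},\frac{b-z}{2}\}$. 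The capacity constraints $\zeta_1\le q_{\pi(1)}$ and $\zeta_2\le q_{\pi(2)}$ translate, through $\zeta_1=F_\mu(z)$, into the condition $z\in K_{\pi}$, where $K_{Id}$ and $K_\sigma$ are the intervals defined in the algorithm. Thus \eqref{eq:optERM_maxcost} is equivalent to $\min_{\pi\in\{Id,\sigma\}}\min_{z\in K_\pi}g(z)$.

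Next I would exploit the shape of $g$. A direct computation gives $g(z)=\frac{b-a}{4}+\frac{|z-z_0|}{2}$ with $z_0=\frac{a+b}{2}$, so $g$ is strictly increasing in the distance $|z-z_0|$. Hence, over any closed interval $K_\pi$, the unique minimizer of $g$ is the point of $K_\pi$ closest to $z_0$, that is the projection $pr_{K_\pi}(z_0)$; this is exactly $z_{Id}$ (respectively $z_\sigma$) computed in the algorithm. Restricting to a single permutation, the facility measure built from the split $z_\pi$ with midpoint support therefore attains $\min_{z\in K_\pi}g(z)$.

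Finally I would compare the two permutations. Since $g(z_\pi)=\frac{b-a}{4}+\frac{|z_\pi-z_0|}{2}$, the permutation achieving the smaller objective is precisely the one whose projection is closer to $z_0$, which is the comparison $|z_{Id}-z_0|\le|z_\sigma-z_0|$ tested in the \textbf{if} branch. The returned measure $\nu=\alpha\delta_{y_1}+(1-\alpha)\delta_{y_2}$, with $\alpha=\mu([a,z^*])=F_\mu(z^*)$ and midpoint support, is feasible—because $z^*\in K_\pi$ guarantees $F_\mu(z^*)\le q_{\pi(1)}$ and $1-F_\mu(z^*)\le q_{\pi(2)}$—and attains the global minimum, which proves optimality. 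I would also note that nonemptiness of $K_{Id}$ and $K_\sigma$ follows from $q_1+q_2\ge 1$, i.e. $1-q_2\le q_1$, the condition ensuring a feasible two-point target exists.

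I expect the main obstacle to lie in the first step: rigorously justifying that an optimal $\zeta$ induces a clean interval partition with midpoint facilities. Concretely, one must argue that no non-monotone transport plan and no off-midpoint placement can lower the $W_\infty$ cost, and one must handle the degenerate cases where $\zeta$ collapses to a single atom or where a feasible set $K_\pi$ reduces to a singleton. Once the reduction to $g$ and the interval constraints $K_\pi$ is established, the remaining steps are the elementary convexity of $g$ and the correctness of the projection comparison.
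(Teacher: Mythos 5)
Your proof is correct and takes essentially the same route as the paper's: both reduce problem \eqref{eq:optERM_maxcost} to choosing a split point $z$ in $K_{Id}$ or $K_\sigma$ (via the monotone/interval structure of the optimal plan and midpoint placement, giving the objective $\max\{\frac{z-a}{2},\frac{b-z}{2}\}$) and conclude from the projection property $|z_\pi-z_0|\le|z'-z_0|$, the only difference being that the paper phrases this as a contradiction argument while you argue directly. Your extra observations (the implicit $a=0$ normalization in the algorithm's midpoint formulas, feasibility of the returned measure, and nonemptiness of $K_\pi$ when $q_1+q_2\ge 1$) are details the paper leaves implicit but do not change the approach.
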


\begin{proof}
    Toward a contradiction, let $\nu'$ be such that $W_\infty(\mu,\nu')<W_\infty(\mu,\nu)$, where $\nu$ is the probability measure returned by Algorithm \ref{alg:optinfty}.
    Without loss of generality, let us assume that $\pi=Id$ is the optimal permutation associated with $\nu'$.
    Since $\nu'$ is optimal, the optimal transportation plan between $\mu$ and $\nu'$ splits $[a,b]$ into two sets of the form $[a,z']$ and $(z',b]$, with $z'\in[F_\mu^{-1}(1-q_2),F_\mu^{[-1]}(q_1)]$.
    By the definition of $z$, we have that $|z-\frac{a+b}{2}|\le |z'-\frac{a+b}{2}|$, we then have that
    \[
    W_\infty(\mu,\nu')=\max\Big\{\frac{b-z'}{2},\frac{z'-a}{2}\Big\}\ge \max\Big\{\frac{b-z}{2},\frac{z-a}{2}\Big\}=W_\infty(\mu,\nu),
    \]
    which is a contradiction, thus Algorithm \ref{alg:optinfty} returns an optimal solution.
\end{proof}

Once we obtain the optimal solution to problem \eqref{eq:optERM_maxcost}, we can retrieve the optimal ERM associated with $\mu$ and $\vec q$ by performing another projection.

\begin{theorem}
\label{thm:besterm_max}
    Let $(y_1,y_2)$ be the support of $\nu$, solution to problem \eqref{eq:optERM_maxcost}.
    Let us define $K_{Id}:=[F_\mu^{[-1]}(1-q_2),F_\mu^{[-1]}(q_1)]$ and $K_\sigma=[F_\mu^{[-1]}(1-q_1),F_\mu^{[-1]}(q_2)]$.
    We then define $s^{(Id)}=(s^{(Id)}_1,s^{(Id)}_2)$ and $s^{(\sigma)}=(s_1^{(\sigma)},s_2^{(\sigma)})$, where $s_i^{(\pi)}=pr_{K_\pi}(y_i)$ for $i=1,2$ and $\pi=Id,\sigma$.
    Finally, the optimal ERM for the distribution $\mu$ and with respect to the Maximum Cost is induced by $(\vec p,\pi)$, where 
    \begin{enumerate*}[label=(\roman*)]
        \item $p_1=F_\mu(s^{(Id)}_1)$, $p_2=F_\mu(s^{(Id)}_2)$, and $\pi=Id$, if $\max\{|a-s^{(Id)}_1|,|b-s^{(Id)}_2|\}\le\max\{|a-s^{(\sigma)}_1|,|b-s^{(\sigma)}_2|\}$, or
        \item $p_1=F_\mu(s^{(\sigma)}_1)$, $p_2=F_\mu(s^{(\sigma)}_2)$, and $\pi=\sigma$ otherwise.
    \end{enumerate*}
\end{theorem}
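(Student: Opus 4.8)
The plan is to reduce the search for the optimal ERM to a two–point minimisation on the line and then verify that the coordinate–wise projection prescribed in the statement solves it. First I would invoke Lemma~\ref{lemma:equivalenceSC_lpmc}: for a feasible $\RM$ with facilities at $w_1\le w_2$, where $w_j=F_\mu^{[-1]}(v_j)\in K_\pi$, the limit Maximum Cost equals $W_\infty(\mu,\zeta_{w_1,w_2})$, with $\zeta_{w_1,w_2}$ supported on $\{w_1,w_2\}$ carrying the nearest–facility masses. I would check that this assignment respects the capacities — for $\pi=Id$ the midpoint split $\tfrac{w_1+w_2}{2}$ lands in $K_{Id}$ because $w_2\le F_\mu^{[-1]}(q_1)$ and $w_1\ge F_\mu^{[-1]}(1-q_2)$, and symmetrically for $\sigma$ — so the nearest–facility split is both feasible and $W_\infty$–optimal. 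Writing $a=\inf spt(\mu)$ and $b=\sup spt(\mu)$, this yields the explicit form
\[
g_\pi(w_1,w_2):=W_\infty(\mu,\zeta_{w_1,w_2})=\max\Big\{w_1-a,\;b-w_2,\;\tfrac{w_2-w_1}{2}\Big\},
\]
so the optimal ERM for a fixed $\pi$ minimises $g_\pi$ over $\{(w_1,w_2):w_1\le w_2,\ w_1,w_2\in K_\pi\}$, and the global optimum is the better of the two permutations.

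Two observations drive the optimality of the projection. The support $(y_1,y_2)$ of $\nu$ consists of the midpoints of $[a,z]$ and $[z,b]$ for the optimal split $z$, so $y_2-y_1=\tfrac{b-a}{2}$, $y_1-a=\tfrac{z-a}{2}$, $b-y_2=\tfrac{b-z}{2}$, and in particular $y_1\le\tfrac{a+b}{2}\le y_2$. Since projection onto the interval $K_\pi$ is monotone and non-expansive, $s_2-s_1\le y_2-y_1=\tfrac{b-a}{2}$, whence $\tfrac{s_2-s_1}{2}\le\tfrac{(b-s_2)+(s_1-a)}{2}\le\max\{s_1-a,\,b-s_2\}$; this shows $g_\pi(s_1,s_2)=\max\{s_1-a,\,b-s_2\}$ and explains why the selection rule compares only $\max\{|a-s_1^{(\pi)}|,|b-s_2^{(\pi)}|\}$. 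The second observation is a pair of lower bounds valid for every feasible $(w_1,w_2)$: on the one hand $g_\pi(w_1,w_2)=W_\infty(\mu,\zeta_{w_1,w_2})\ge W_\infty(\mu,\nu)=\max\{\tfrac{z-a}{2},\tfrac{b-z}{2}\}$ because $\zeta_{w_1,w_2}\in\PP_{\pi,\vec q}(\erre)$ and $\nu$ is optimal; on the other hand $g_\pi(w_1,w_2)\ge w_1-a\ge \min K_\pi-a$ and $g_\pi(w_1,w_2)\ge b-w_2\ge b-\max K_\pi$.

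Finally I would prove $g_\pi(s_1,s_2)\le g_\pi(w_1,w_2)$ for every feasible $(w_1,w_2)$ by bounding $s_1-a$ and $b-s_2$ separately. If $s_1=y_1$ (no clamping) then $s_1-a=\tfrac{z-a}{2}\le W_\infty(\mu,\nu)\le g_\pi(w_1,w_2)$, while if $s_1$ is clamped up to $\min K_\pi$ then $s_1-a=\min K_\pi-a\le w_1-a\le g_\pi(w_1,w_2)$; the dual statements bound $b-s_2$. The only delicate situation is clamping against the far endpoint, e.g.\ $y_1>\max K_\pi$, which forces $s_1=s_2=\max K_\pi$; here the inequality $y_1\le\tfrac{a+b}{2}$ guarantees $\max K_\pi<\tfrac{a+b}{2}$, so $b-s_2=b-\max K_\pi$ dominates $s_1-a$ and is itself a valid lower bound, and the mirror case uses $y_2\ge\tfrac{a+b}{2}$. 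Combining these, $g_\pi(s_1,s_2)=\max\{s_1-a,b-s_2\}\le g_\pi(w_1,w_2)$, so $(s_1^{(\pi)},s_2^{(\pi)})$ is optimal within permutation $\pi$, and comparing the two values $\max\{|a-s_1^{(\pi)}|,|b-s_2^{(\pi)}|\}$ returns the global optimum, exactly the stated rule. I expect the main obstacle to be precisely this last case distinction on how $(y_1,y_2)$ sits relative to $K_{Id}$ and $K_\sigma$, together with the bookkeeping required to confirm that projecting the single measure $\nu$ — produced by Algorithm~\ref{alg:optinfty} for one permutation — still yields the correct optimum once it is projected onto the feasible interval of the other permutation.
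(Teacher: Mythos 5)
Your proposal is correct and follows the same route as the paper's proof: restrict attention, via the truthfulness/feasibility constraints, to facility pairs lying in $K_{Id}$ or $K_\sigma$, argue that within each interval the optimum is the projection of the support of $\nu$, and then compare the two permutations. The difference is one of rigor rather than of approach: where the paper simply asserts that "it is easy to see" the projections are optimal, you derive the explicit cost $g_\pi(w_1,w_2)=\max\{w_1-a,\,b-w_2,\,\tfrac{w_2-w_1}{2}\}$, use non-expansiveness of the projection to show the middle term never dominates at $(s_1^{(\pi)},s_2^{(\pi)})$ (which is exactly what legitimizes the comparison rule $\max\{|a-s_1^{(\pi)}|,|b-s_2^{(\pi)}|\}$ in the statement), and treat the degenerate clamping case $s_1^{(\pi)}=s_2^{(\pi)}$ that genuinely occurs when $\nu$'s support lies entirely outside the other permutation's interval --- details the paper's proof leaves unaddressed.
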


\begin{proof}
    To prove the Theorem, we need to enforce the truthfulness constraints in problem \eqref{eq:optERM_maxcost}.
    Therefore, either the facilities are located in $K_{Id}$ or $K_{\sigma}$.
    Let us consider $K_{Id}$ first.
    It is easy to see that the best position to place the facilities are the points in $K_{Id}$ that are as close as possible to $y_1$ and $y_2$.
    Thus the facilities must be placed at the projection of $y_1$ and $y_2$ on $K_{Id}$.
    By a similar argument, we infer the same conclusion for $K_{\sigma}$.
    To conclude, it suffice to compare the two projections and select the one that has the lowest Maximum Cost.
\end{proof}

\subsection{Optimal Extended Ranking Mechanisms in the Generic Framework}

Lastly, we propose a sufficient and necessary condition that guarantees the existence of an optimal ERM whose limit Bayesian approximation ratio converges to $1$. 
Noticeably this criteria is applicable to every cost function, every $\mu\in\PP(\erre)$, and every $m\in\enne$.

\begin{theorem}
\label{thm:suff_condition_asymptoticopt}
    Given $\mu$ and $\vec q$, let $(\pi,\nu_{m})$ be a solution to Problem \eqref{eq:min_proj}.
    We denote with $\{y_j\}_{j\in[m]}$ the points in the support of $\nu_{m}$ ordered non-decreasingly.
    Moreover, we set $y_0=-\infty$ and $y_{m+1}=+\infty$.
    Let $\zeta_j=F_\mu(y_{j+1})-F_\mu(y_{j-1})$ and let $\hat \pi\in\mathcal{S}_m$ be a permutation that orders the values $\zeta_j$ in a non-increasing order.
    Then, the optimal ERM has limit Bayesian approximation ratio equal to $1$ if and only if $\zeta_{\hat \pi(j)}\le q_j$ for every $j\in [m]$.
\end{theorem}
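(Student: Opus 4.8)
The plan is to translate the statement ``the optimal ERM has limit Bayesian approximation ratio $1$'' into the feasibility of a single ERM placed at the support of $\nu_{m}$, and then dispatch the combinatorics with a sorting argument. First I would record the reduction that makes the condition meaningful. Since $\vec q$ is ordered non-increasingly and $\hat\pi$ orders the $\zeta_j$ non-increasingly, the condition $\zeta_{\hat\pi(j)}\le q_j$ for every $j$ says exactly that the sorted sequence $(\zeta_j)$ is dominated pointwise by $(q_j)$, and by a standard marriage/exchange argument this is equivalent to the existence of a permutation $\pi\in\mathcal{S}_m$ with $\zeta_j\le q_{\pi(j)}$ for all $j$: the sorted matching realises such a $\pi$, while if some $\zeta_{\hat\pi(k)}>q_k$ then the $k$ largest gaps each demand a capacity exceeding $q_k$, of which there are only $k-1$, a contradiction by pigeonhole. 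The point of this reformulation is that it is precisely ERM feasibility: writing $\{y_j\}$ for the support of $\nu_{m}$ and setting $v_j=F_\mu(y_j)$ (so that $v_0=0$ and $v_{m+1}=1$ under the conventions $y_0=-\infty$, $y_{m+1}=+\infty$), the system \eqref{eq:feas_system} of Theorem \ref{thm:RM_feasible} reads line by line $q_{\pi(1)}\ge v_2=\zeta_1$, $q_{\pi(j)}\ge v_{j+1}-v_{j-1}=\zeta_j$, and $q_{\pi(m)}\ge 1-v_{m-1}=\zeta_m$. Hence $\RM$ is feasible for some $\pi$ if and only if the condition holds.

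For the forward implication I would argue as follows. If the condition holds, the ERM with $v_j=F_\mu(y_j)$ and the matching $\pi$ above is feasible, and by Theorem \ref{thm:limitBAR} its numerator is $W_1(\mu,\nu_{\vec v})=\int_{\erre}\min_{j}|x-y_j|\,d\mu$. The nearest-point region assigned to $y_j$ is contained in $(y_{j-1},y_{j+1})$, so its mass is at most $\zeta_j\le q_{\pi(j)}$; therefore $\nu_{\vec v}$ is itself admissible in \eqref{eq:min_proj} and $W_1(\mu,\nu_{\vec v})\ge W_1(\mu,\nu_{m})$. On the other hand $\nu_{m}$ is supported on the very points $\{y_j\}$, so $W_1(\mu,\nu_{m})\ge\int_{\erre}\min_j|x-y_j|\,d\mu=W_1(\mu,\nu_{\vec v})$. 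The two inequalities force equality, so the limit ratio \eqref{eq:bayesianlimitRM} equals $1$, and a fortiori so does that of the optimal ERM.

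The backward implication is where the real work lies, and running the forward reasoning in reverse exposes the obstacle. If the optimal ERM attains ratio $1$ then (it being feasible by Theorem \ref{thm:existance_problem}) some feasible $\texttt{ERM}^{(\pi',\vec v')}$ attains $W_1(\mu,\nu_{\vec v'})=W_1(\mu,\nu_{m})$; its nearest-point measure is then an optimiser of \eqref{eq:min_proj} whose support $\{y'_j\}$ satisfies the feasibility condition. Thus ratio $1$ is equivalent to the existence of \emph{some} optimal support that is ERM-feasible, whereas the statement is phrased in terms of the \emph{given} $\nu_{m}$. The main difficulty is precisely that \eqref{eq:min_proj} need not have a unique optimal support (already for $\mu=\mathcal{U}[0,1]$ and $\vec q=(0.8,0.4)$ there are two). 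I would close this gap by proving that the sorted profile of the gaps $\zeta_j$ is an invariant of the optimal set: because each optimal position is the $\mu$-median of its transport interval, one has $F_\mu(y_j)=\tfrac12\bigl(F_\mu(w_{j-1})+F_\mu(w_j)\bigr)$ for the interval boundaries $w_j$, which renders each $\zeta_j$ an explicit affine function of the cumulative boundary masses; an exchange argument comparing two optimisers along the monotone one-dimensional transport then shows that the multiset $\{\zeta_j\}$, and hence the truth value of the condition, is the same for every optimiser. Granting this invariance, feasibility of one optimal support transfers to the given $\{y_j\}$, which is exactly the stated condition, completing the equivalence.

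Finally, I would remark that the whole argument is cost-agnostic: Theorem \ref{thm:RM_feasible} and the bound ``nearest-point mass $\le\zeta_j$'' do not involve $p$, and the nearest-facility assignment is simultaneously optimal for every $W_p$, so replacing $W_1$ by $W_p$ and invoking Theorem \ref{thm:limitBAR_p_MC} reproduces the same chain of inequalities with $\{y_j\}$ now the support of the $p$-optimiser. This is what yields the claimed applicability to every cost function, every $\mu\in\PP(\erre)$, and every $m\in\enne$.
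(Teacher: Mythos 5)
Your sufficiency direction is correct and is essentially the paper's own proof: the same substitution $v_j=F_\mu(y_j)$, the same use of the sorted matching (the paper takes $\gamma=\hat\pi^{-1}$, which is exactly the permutation your rearrangement argument produces), feasibility via Theorem \ref{thm:RM_feasible}, and the same two-sided comparison showing that the nearest-point (midpoint-cutoff) measure supported on $\{y_j\}$ is simultaneously optimal for the weight-constrained problem \eqref{eq:min_proj2} and admissible for \eqref{eq:min_proj}, which forces $W_1(\mu,\nu_{\vec v})=W_1(\mu,\nu_m)$ and hence limit ratio $1$. Your closing remark that the chain of inequalities is cost-agnostic is also in line with how the paper claims validity for all $W_p$.

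The gap is in the necessity direction, and it sits exactly where you located it — but your proposed fix does not work as stated. Ratio $1$ only yields that \emph{some} optimizer of \eqref{eq:min_proj} (namely the limit measure of the ratio-$1$ ERM, whose ERM-feasibility gives its gap bounds) satisfies the condition; to transfer this to the \emph{given} $\nu_m$ you invoke an invariance lemma — that the multiset $\{\zeta_j\}$, or at least the truth value of the condition, is the same for every optimizer — and this lemma is asserted, not proved. The median identity $F_\mu(y_j)=\frac{1}{2}\left(F_\mu(w_{j-1})+F_\mu(w_j)\right)$ only expresses each $\zeta_j$ through the cell-boundary masses $F_\mu(w_j)$, and those boundary masses are precisely what differs between distinct optimizers; the promised ``exchange argument'' is never carried out, and it is not a routine verification. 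Already for $m=2$ an optimal configuration with split point $w$ and $t=F_\mu(w)$ has gap multiset $\{(1+t)/2,\,1-t/2\}$, so two optimal splits with $t_1\neq t_2$ and $t_1+t_2\neq 1$ produce different multisets; for measures concentrated near three well-separated clusters, two inequivalent groupings can tie for the optimum without any symmetry of $\mu$ relating them, and nothing in the first-order conditions forces their gap profiles to agree. So the invariance claim is a genuine missing piece (and, if it fails, the ``only if'' statement itself becomes delicate, since the condition would depend on which optimizer one picks). For what it is worth, the paper's own proof never addresses the ``only if'' half at all — it assumes the condition and derives ratio $1$, then stops — so your proposal is no less complete than the published argument, but as written the backward implication remains unestablished.
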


\begin{proof}
% [Proof of Theorem \ref{thm:suff_condition_asymptoticopt}]
    Let $(\sigma,\nu_{m})$ be a solution to \eqref{eq:min_proj}.
    % 
    % Since $\nu_{m}\in\PP_m(\erre)$, we have $spt(\nu_{m})=\{y_j\}_{j\in[m]}$.
    % 
    Let us define $v_j=F_\mu(y_j)$ and let $\gamma=\hat \pi^{-1}\in\mathcal{S}_m$.
    By definition of $v_j$, we have that $v_{j+1}-v_{j-1}=F_\mu(y_{j+1})-F_\mu(y_{j-1})\le q_{\gamma(j)}$ for every $j\in[m]$, hence $\texttt{ERM}^{(\gamma,\vec v)}$ is feasible (Theorem \ref{thm:RM_feasible}).
    Owing to Theorem \ref{thm:limitBAR}, the asymptotic expected Social Cost of $\texttt{ERM}^{(\gamma,\vec v)}$ converges to $W_1(\mu,\nu_{\vec v})$, where $\nu_{\vec v}$ is a solution to 
    \begin{equation}
    \label{eq:probl_+contrst}
      \min_{\lambda_j\le q_{\gamma(j)},\sum_{j\in[m]\lambda_j=1}}W_1\Big(\mu,\sum_{j\in[m]}\lambda_j\delta_{y_j}\Big)  
    \end{equation}
    since $v_j=F_\mu(y_j)$.
    To conclude, we need to show that $(\gamma,\nu_{\vec v})$ is a solution to \eqref{eq:min_proj}.
    Since $\sigma\in\mathcal{S}_m$ is optimal, it suffices to show that $\min_{\zeta\in\PP_{\sigma,m}(\erre)}W_1(\mu,\zeta)\ge W_1(\mu,\nu_{\vec v})$.
    % \[
    % \min_{\zeta\in\PP_{\sigma,m}(\erre)}W_1(\mu,\zeta)\ge W_1(\mu,\nu_{\vec v}).
    % \]
    % 
    First, let us consider the following minimization problem 
    \begin{equation}
        \label{eq:unconstr_pr}
        \min_{\sum_{j\in [m]}\lambda_j=1}W_1\Big(\mu,\sum_{j\in[m]}\lambda_j\delta_{y_j}\Big),
    \end{equation} 
    % 
    % Notice that the only difference between problem \eqref{eq:unconstr_pr} and problem \eqref{eq:probl_+contrst} are the constraints $\lambda_j\le q_{\sigma(j)}$.
    % 
    and let us set $\nu=\sum_{j\in[m]}(F_\mu(z_j)-F_\mu(z_{j-1}))\delta_{y_j}$, where $z_j=\frac{y_{j}+y_{j+1}}{2}$ for every $j\in[m-1]$, $z_0=-\infty$, and $z_m=+\infty$.
    Since the optimal transportation plan between two measures on a line is monotone \cite{villani2009optimal}, we infer $W_1(\mu,\nu)=\int_{\erre\times \erre}\min_{j\in[m]}|x-y_j|\,d\mu$,
    % \[
    %     W_1(\mu,\nu)=\int_{\erre\times \erre}\min_{j\in[m]}|x-y_j|\,d\mu,
    % \]
    thus $\nu$ is a minimizer of problem \eqref{eq:unconstr_pr}.
    Finally, since $F_\mu$ is monotone increasing, it holds $F_\mu(z_j)-F_\mu(z_{j-1})\le F_\mu(y_{j+1})-F_\mu(y_{j-1})\le q_{\gamma(j)}$.
    In particular, $\nu$ is feasible for problem \eqref{eq:probl_+contrst}, thus $W_1(\mu,\nu)=W_1(\mu,\nu_{\vec v})$.
    Likewise, $\nu_m$ is feasible for problem \eqref{eq:unconstr_pr}, thus $W_1(\mu,\nu_{\vec v})=W_1(\mu,\nu)\le W_1(\mu,\nu_{m})$, which concludes the proof.
\end{proof}

\section{Numerical Experiments}
\label{sec:experiment}

In this section, we complement our theoretical study of the ERMs by running several numerical experiments.
Indeed, most of our results pertain to the limit analysis of the mechanism. For this reason, we want to test two aspects of the ERMs.
First, we want to compare the Bayesian approximation ratio of the ERMs with the Bayesian approximation ratios of other truthful mechanisms when the number of agents $n$ is small.
Since the Ranking Mechanisms are a subset of the ERMs, we only consider the EEM and, when possible, the IG.
Second, we want to evaluate the convergence speed of the Bayesian approximation ratio of $\RM$.
More specifically, we want to assess how close the Bayesian approximation ratio of an ERM and the limit detected in Theorem \ref{thm:limitBAR} are when the number of agents is small. 
We run our experiments for different distributions $\mu$ and percentage capacity vector $\vec{q}$. 
All the experiments are performed in Matlab 2023a on macOS Monterey system with Apple M1 Pro CPU and 16GB RAM.
The code is available on \url{https://anonymous.4open.science/r/Bayesian-CFLP-F585/}. 

\paragraph{Experiment setup}
We consider the Social, $l_2$, and Maximum Cost. 
For social and $\ell_2$ costs, we sample the agents' positions from three probability distributions: the uniform distribution $\mathcal{U}[0,1]$, the standard normal distribution $\mathcal{N}(0,1)$, and the exponential distribution $\text{Exp}(1)$.
For Maximum Cost, we must test the mechanisms on measures with compact support (see Theorem \ref{thm:limitBAR_p_MC}).
We therefore consider the uniform distribution and an asymmetric one $\text{Beta}(3,1)$.
Owing to Corollary \ref{crr:scale_invariance}, we do not consider other parameter choices since testing the mechanisms over the standard Gaussian distribution is the same as testing over $\mathcal{N}(\mathfrak{m},\sigma^2)$. 
To compare the ERMs to other mechanisms, we limit our tests to cases in which $m=2$, as all the known truthful mechanisms different from the Ranking Mechanisms operate only under this restriction.
We consider different percentage capacity vectors $\vec q\in (0,1)^2$. 
Specifically, we consider balanced capacities $\vec{q}=(q,q)$ and unbalanced capacities $\vec{q}=(q_1, q_2), q_1 \neq q_2$. For the case of balanced capacities, we set $q=0.7,0.8$ and $0.9$. 
For the case of unbalanced capacities, we consider the slightly unbalanced capacities i.e. $\vec{q}=(0.85,0.75)$, and extremely unbalanced capacities i.e. $\vec{q}=(0.8,0.4)$, $(0.85,0.35)$.
As benchmark mechanisms, we consider the EEM \cite{aziz2020facility} and, when possible, the IG \cite{ijcai2022p75}.

\subsection{Experiment results -- The Social and $l_2$ costs}

In this section, we report our numerical results for the Social and $l_2$ costs.
As we will see, the problems entailed to the Social and $l_2$ costs behave similarly for all the cases under study.
Since this is not the case for the Maximum Cost, we report and comment the results for this class of problems into a dedicated section.

\paragraph{Comparison with the EEM and the IG.} 
We first consider the case of balanced capacities $\vec q=(q,q)$, in which we compare the Bayesian approximation ratio of three different mechanisms: the EEM, the IG, and the ERMs.
Regardless of the distribution, we consider the optimal ERM with respect to the uniform distribution, i.e. $\RM$, with $\pi=Id$ and $\vec v=(\max\{0.25,1-q\},\min\{0.75,q\})$.
Figure~\ref{fig:EQ} (and Table~\ref{table:equicapacitated_test} in Appendix~\ref{app:social}) show the average and the $95\%$ confidence interval (CI) of Bayesian approximation ratio for $n=10,20,30,40,50$.
Each average is computed over $500$ instances.
We observe that, in most cases, the ERM achieves the lowest Bayesian approximation ratio comparing to the other two mechanisms. 
When $q=0.7$, the ERM is better than the EEM but slightly worse than the IG. 
However, the empirical Bayesian approximation ratio of the ERM and IG converges to the same value.
Next, we consider the case of unbalanced capacities, where $q_1\neq q_2$, specifically $\vec q=(0.85,0.75),(0.8,0.4),(0.85,0.35)$.
Since the IG requires the two capacities to be identical, we compare only the ERM and the EEM.
Amongst the possible ERMs, we select the one optimal with respect to the Uniform distribution, obtained via Theorem \ref{thm:bestuniform}.
Thus the parameters of $\RM$ are
\begin{enumerate*}[label=(\roman*)]
    \item $\pi=Id$ for every $\vec q$ and
    \item $\vec v=(0.25,0.75)$ for $\vec q=(0.85,0.75)$, $\vec v=(0.6,0.8)$ for $\vec q=(0.8,0.4)$, and $\vec v=(0.65,0.85)$ for $\vec q=(0.85,0.35)$.
\end{enumerate*}
In this case, we consider only symmetric probability distributions, i.e. the Gaussian and the Uniform distribution.
Figure~\ref{fig:NE} (and Table~\ref{table:nonequitest} in Appendix~\ref{app:social}) shows the average and the $95\%$ CI of Bayesian approximation ratio computed over 500 instances.
Whenever $n\ge 20$, the ERM has a much lower Bayesian approximation ratio.
Notice that when $\vec q=(0.75,0.85)$, the ERM is optimal in both cases, and its limit Bayesian approximation ratio is $1$.
Indeed, the Bayesian approximation ratio of the ERM is almost equal to $1$ for every $n\ge  10$ and gets closer as $n$ increases, however the Bayesian approximation ratio of the EEM is always $\ge 1.79$ and gets worse as $n$ increases. 

\begin{figure}[t!]
     \centering
     \includegraphics[width=0.85\columnwidth]{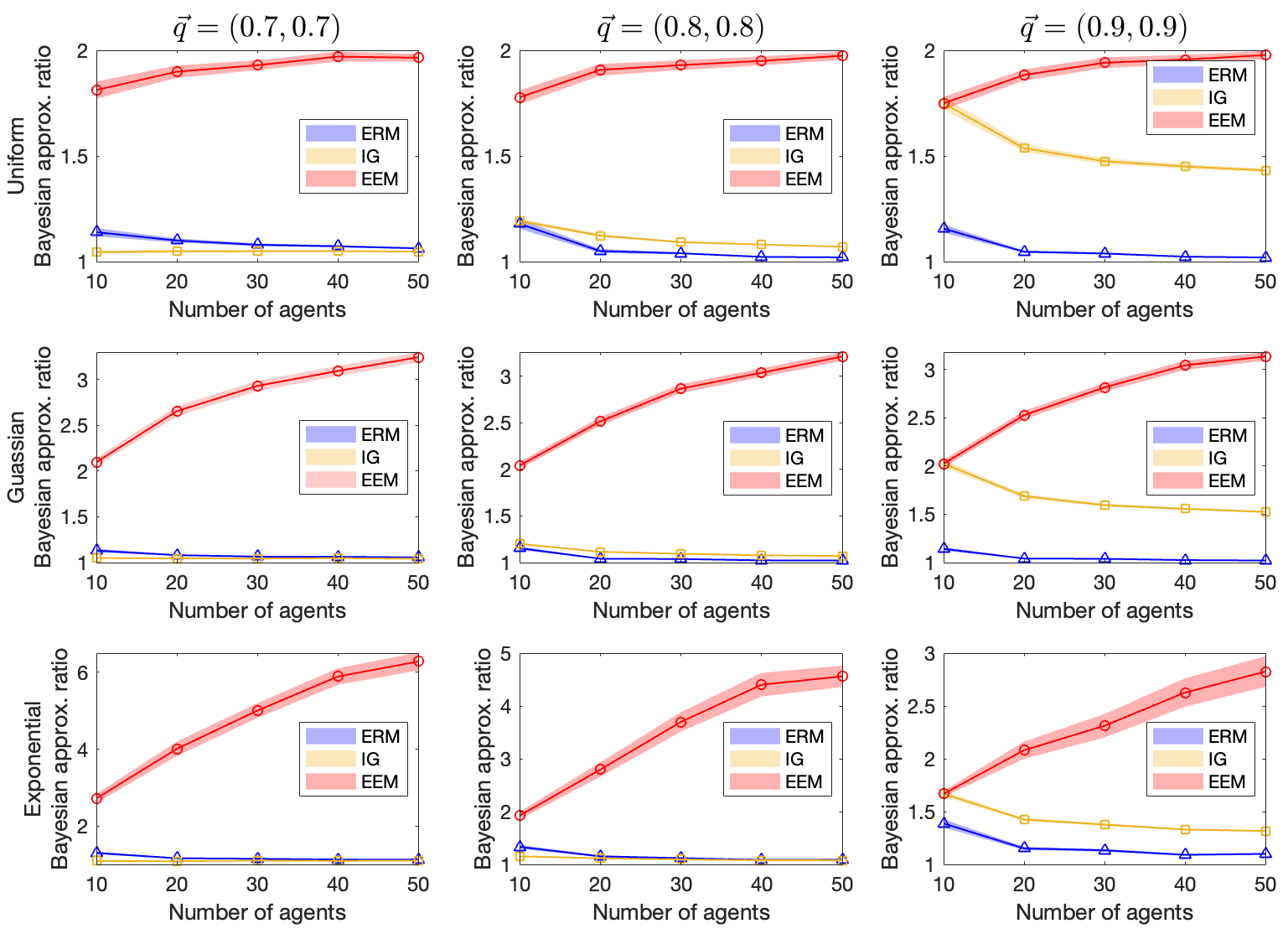}
     \caption{The Bayesian approximation ratio of ERM, IG, and EEM in the balanced case, i.e. $q_1=q_2$ for $n=10,20,\dots,50$ with respect to the Social Cost.
    Every column contains the results for different vector $\vec q$, while every row contains the results for a different probability distribution $\mu$.
    %
    % The first row contains the results for the Uniform distribution, the second row the results for the Gaussian distribution, and the third one the results for the Exponential distribution. 
    % 
    % Every value is computed as the average of $500$ runs.
    }
    \label{fig:EQ}
\end{figure}
\vspace{-0.1cm}

\begin{figure}[t!]
     \centering
     \includegraphics[width=0.8\columnwidth]{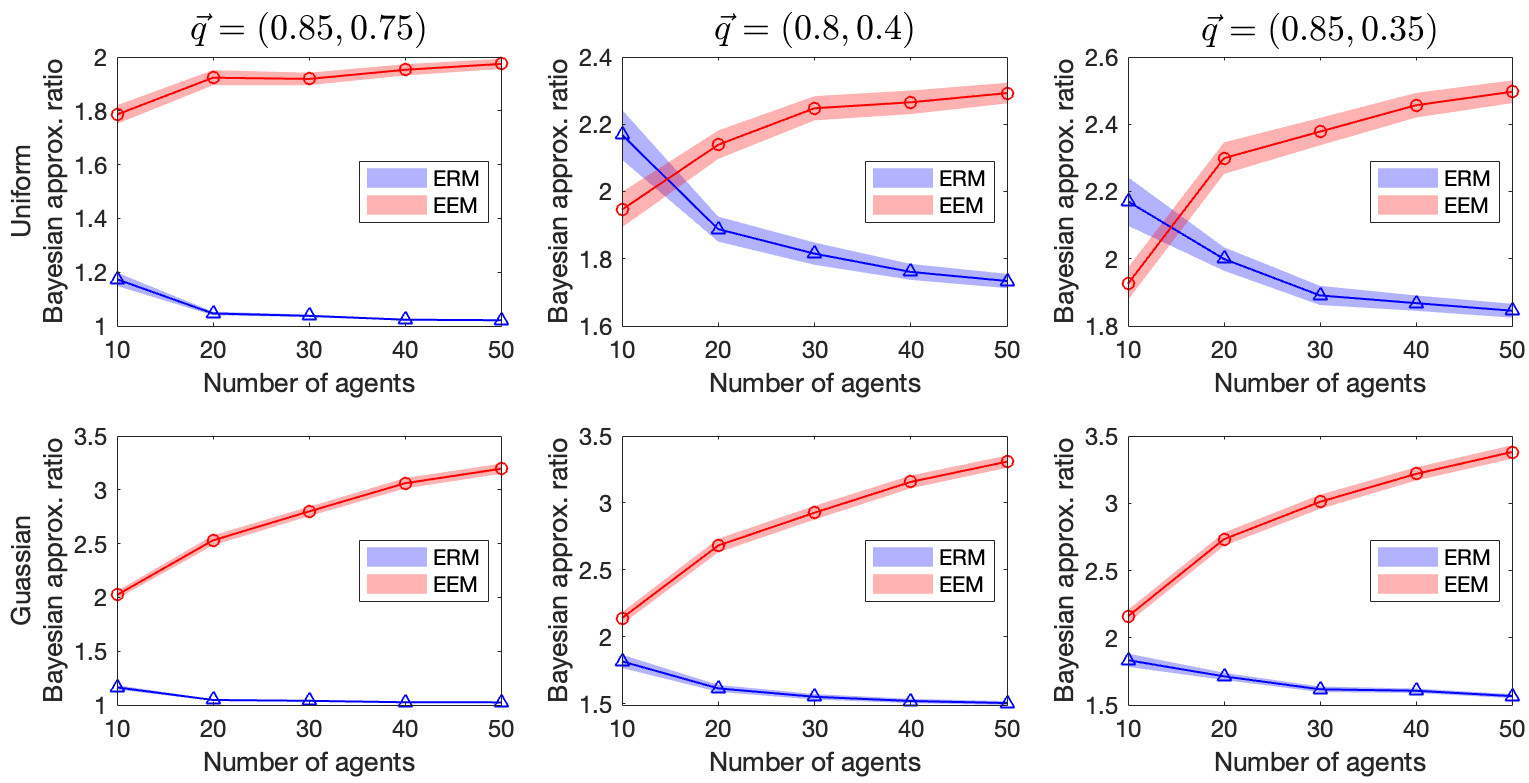}
     \caption{The Bayesian approximation ratio of ERM and EEM in the unbalanced case, i.e. $q_1\neq q_2$ for $n=10,20,\dots,50$ with respect to the Social Cost. 
Every column contains the results for a different vector $\vec q$, while every row contains the result for a different probability measure $\mu$.
% 
% The first row contains the results for the Uniform distribution.
% % 
% The second row contains the results for the Gaussian distribution.
%
% Every value is computed as the average of $500$ runs.
}
     \label{fig:NE}
\end{figure}

The results for $l_2$ cost are very similar to those for Social Cost. 
Figure~\ref{fig:lp_EQ} and Table~\ref{tab:lp_EQ} in Appendix~\ref{app:l2} shows the average and the $95\%$ CI of Bayesian approximation ratio computed over $500$ instances under balanced capacities.
In most cases, the ERM mechanism achieves the lowest Bayesian approximation ratio comparing to the EEM and the IG. 
In Figure~\ref{fig:lp_NE} and Table~\ref{tab:lp_NE} in Appendix~\ref{app:l2} show the results under unbalanced capacities. 
We notice that, when $n\geq 30$, the ERM achives the lowest Bayesian approximation ratio. 

\paragraph{Convergence speed of the limit Bayesian approximation ratio.} 

\begin{figure}[h]
    \centering
    \includegraphics[width=0.8\columnwidth]{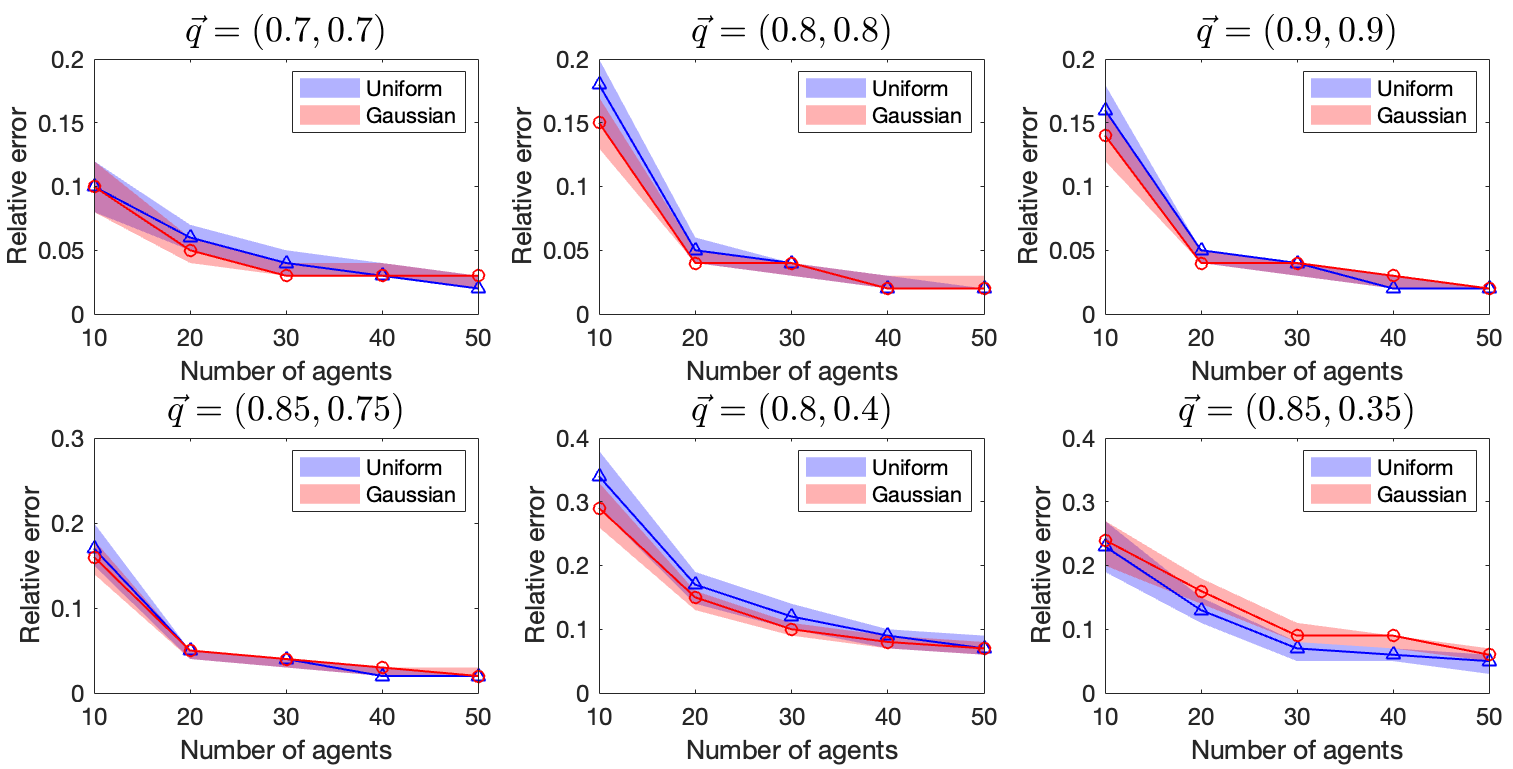}
    \caption{The relative error of ERM for $n=10,20,\dots,50$ for the Uniform and Gaussian distributions, with respect to the Social Cost. The first row shows the results for the balanced case while the second row shows the results for unbalanced case. }
    \label{fig:RE}
\end{figure}

Lastly, we test how close the Bayesian approximation ratio of the ERM is to the limit detected in Theorem \ref{thm:limitBAR}. 
That is, we calculate the relative error as $err_{rel}=\frac{\text{empirical } B_{ar}-\text{limit of } B_{ar}}{\text{limit of } B_{ar}}.$
Figure~\ref{fig:RE} (and Table~\ref{table:percenterrtest} in Appendix~\ref{app:social}) shows the relative error for the six cases.
Each average is computed over $500$ instances.
We observe that the relative error decreases as the number of agents increases, regardless of the distribution or the p.c.v. $\vec q$.
Moreover, in all cases but $\vec q=(0.8,0.4),(0.85,0.35)$, the relative errors of the ERM are less than $0.05$ as long as $n \ge 20$, which validates Theorem~\ref{thm:limitBAR} as a tool to predict the Bayesian approximation ratio for small values of $n$.
In other cases, we observe a slower convergence, as the percentage error is slightly larger and its highest value is $0.33$. 
Lastly, we point out that the results for the $l_2$ cost show no substantial differences from what we have observed for the Social cost.
For the sake of completeness, we report in Figure~\ref{fig:lp_RE} and Table~\ref{tab:lp_RE} in Appendix~\ref{app:l2} the relative error under $l_2$ cost.

\subsection{Experiment results -- The Maximum Cost}

In this section, we report the results of our experiments for the Maximum Cost.

\paragraph{Comparison with the EEM and the IG.} 

We first consider the case of balanced capacities, i.e. $\vec q=(q,q)$, in which we are able to compare the ERM to the IG and to the EEM.
We consider the best ERM obtained through Theorem \ref{thm:besterm_max}.
Figure~\ref{fig:max_EQ} (and Table~\ref{tab:max_EQ} in Appendix \ref{app:max}) shows the average and the CI of the Bayesian approximation ratio of all the three mechanisms.
For the balanced case, we observe that the results for the Maximum Cost are in line with what observed for the Social and $l_2$ costs: the best ERM has the lowest (or tied for lowest) Bayesian approximation ratio most of the cases. 
It is worth to notice that the IG and the best ERM have again the same limit.
We then consider the unbalanced case, where we can compare the best ERM only to the EEM.
Figure~\ref{fig:max_NE} (and Table~\ref{tab:max_NE} in Appendix \ref{app:max}) shows the results on unbalanced cases. 
In this case, we observe that the best ERM is outperformed by the EEM only when the distribution is symmetric and $q_2$ is much lower than $q_1$. 
Indeed, for such instances, the Bayesian approximation ratio of the EEM is lower than the one achieved by the best ERM.
Moreover the CI of the EEM is tighter than the CI of the best ERM. 
In all other cases, the best ERM outperforms the EEM, in line with what we observed in all other cases.

\begin{figure}[h!]
     \centering
     \includegraphics[width=0.8\columnwidth]{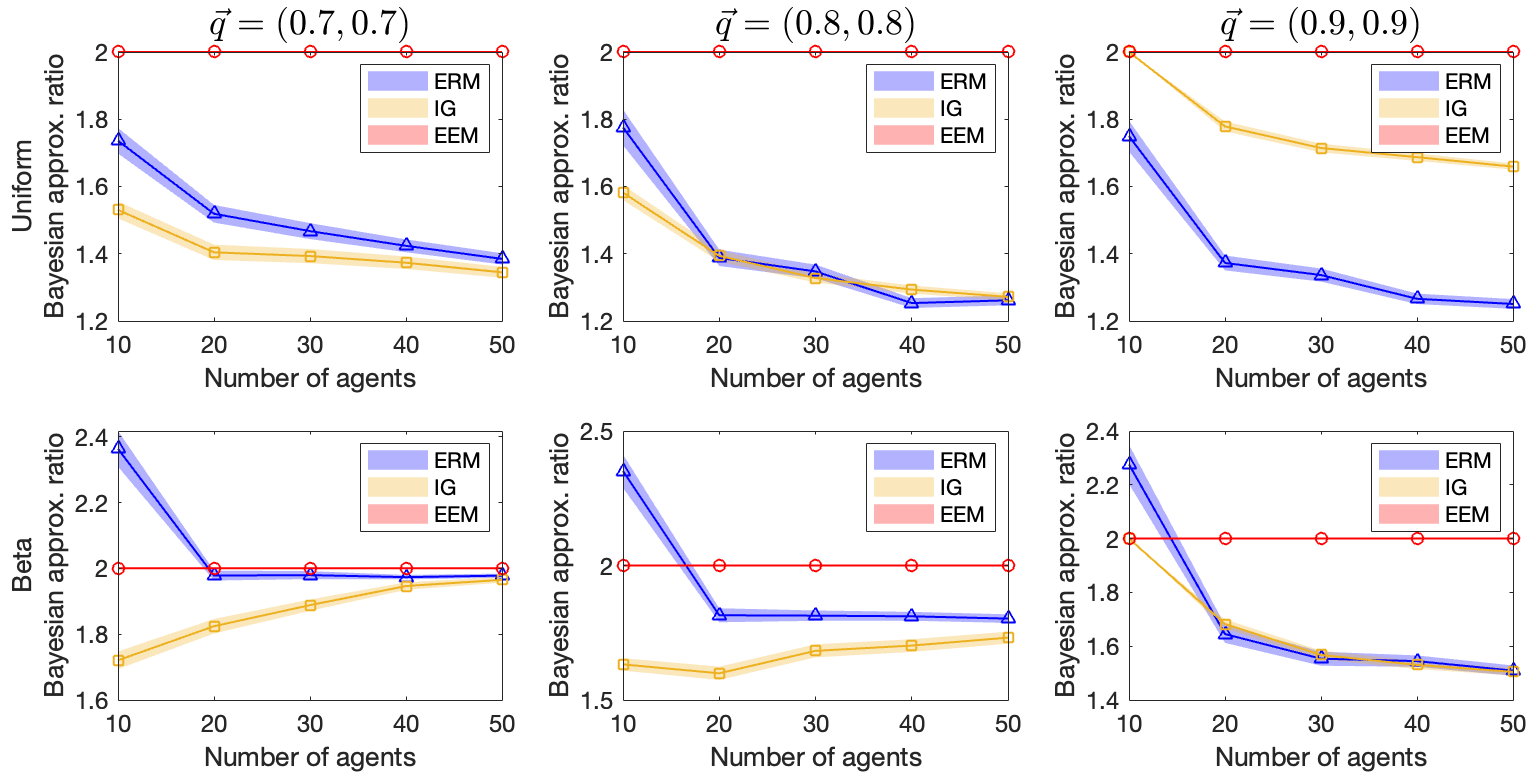}
     \caption{The Bayesian approximation ratio of ERM, IG, and EEM in the balanced case, i.e. $q_1=q_2$ for $n=10,20,\dots,50$ with respect to the Maximum Cost.
     Every column contains the results for different vector $\vec q$, while every row contains the results for a different probability distribution $\mu$.
    }
    \label{fig:max_EQ}
\end{figure}

\begin{figure}[h!]
     \centering
     \includegraphics[width=0.8\columnwidth]{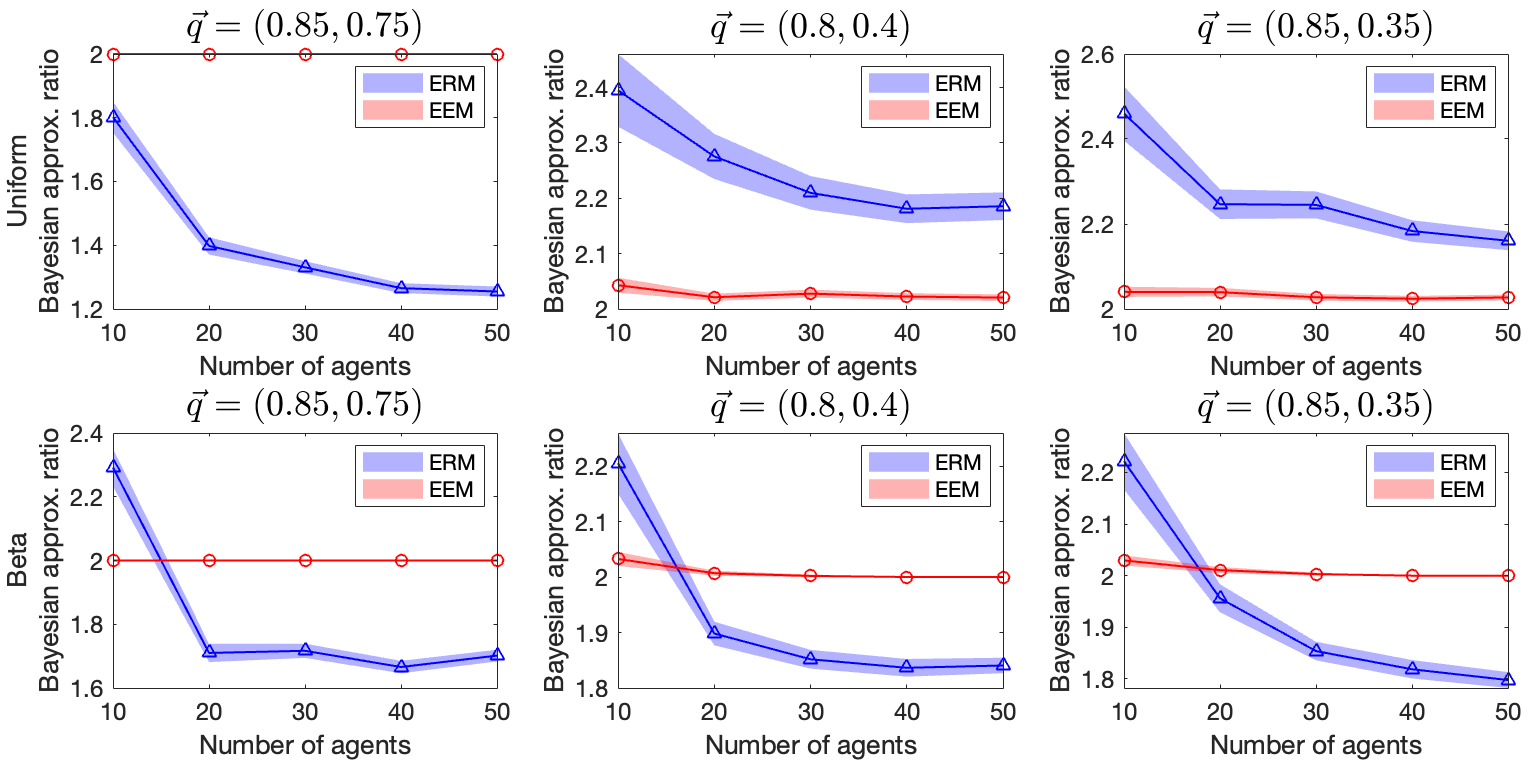}
     \caption{The Bayesian approximation ratio of ERM and EEM in the unbalanced case, i.e. $q_1\neq q_2$ for $n=10,20,\dots,50$ with respect to the Maximum Cost. 
     Every column contains the results for a different vector $\vec q$, while every row contains the result for a different probability measure $\mu$.}
     \label{fig:max_NE}
\end{figure}

\paragraph{Convergence speed of the limit Bayesian approximation ratio.} 

Lastly, we assess how close the empirical Bayesian approximation ratio is to the limit devised in Theorem \ref{thm:besterm_max} using the relative error.
We report in Figure~\ref{fig:max_RE} (and Table~\ref{tab:max_RE} in Appendix \ref{app:max}) our numerical results.
Again we observe that the relative error is small and that its absolute value becomes negligible as the number of agents increases.
Interestingly, in some cases, the relative error is negative, showing that the theoretical limit is a pessimistic (although reliable) approximation of the performances of the best ERM.

\begin{figure}[h!]
    \centering
    \includegraphics[width=0.8\columnwidth]{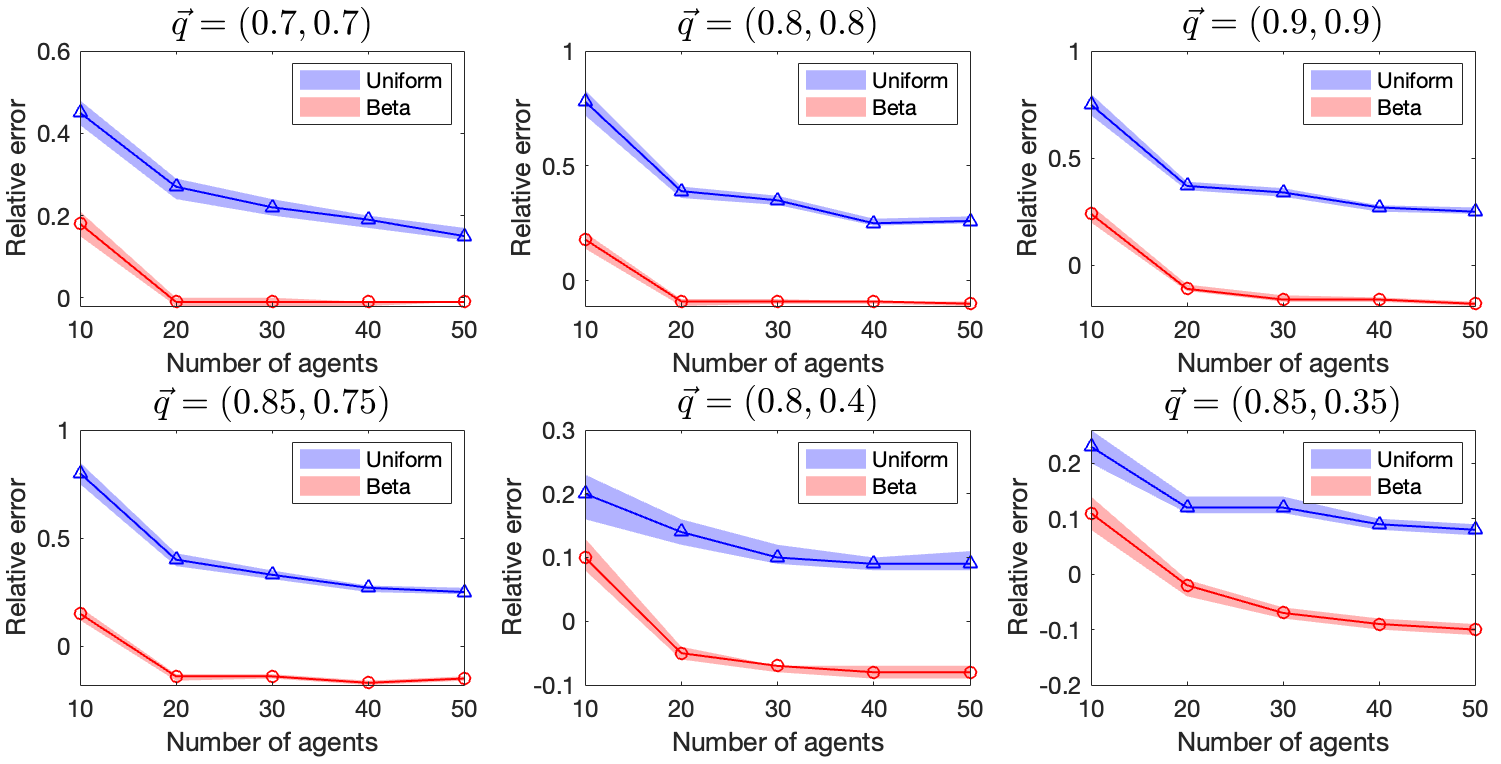}
    \caption{The relative error of ERM for $n=10,20,\dots,50$ for the Uniform and Beta distributions, with respect to the Maximum Cost. The first row shows the results for the balanced case while the second row shows the results for unbalanced case. }
    \label{fig:max_RE}
\end{figure}

\section{Conclusion and Future Works}

In this paper, we introduced the Extended Ranking Mechanisms, a generalization of Ranking Mechanisms introduced in \cite{aziz2020facility}.
After establishing the conditions under which ERMs remain truthful, we characterized the limit Bayesian approximation ratio of truthful ERM in terms of the probability distribution $\mu$, the p.c.v. $\vec q$, and the mechanism's parameters, namely $\pi$ and $\vec v$.
We have shown that, given $\mu$ and $\vec q$, there exists an optimal ERM and characterized it via a minimization problem, which we solved in two relevant frameworks.
Lastly, we conducted extensive numerical experiments to validate our findings, from which we inferred that a well-tuned ERM consistently outperforms all other known mechanisms.
An interesting extension would be to extend the ERMs to handle problems in higher dimensions using the decomposition proposed in \cite{auricchio2018computing}.
Lastly, we plan to study the connections between the Optimal Transportation problem and other Mechanism Design problems.

\acks{Gennaro Auricchio and Jie Zhang are partially supported by a Leverhulme Trust Research Project Grant (2021 – 2024). Jie Zhang is also supported by the EPSRC grant (EP/W014912/1). Mengxiao Zhang is supported by EPSRC Grant (EP/X01357X/1).}

%\vskip 0.2in
\bibliography{sample}
\bibliographystyle{theapa}

\newpage

\appendix
% \section*{Appendix A. Probability Distributions for N-Queens}

\section{Experiment results -- The Social Cost}
\label{app:social}

\begin{table*}[ht]
\centering
\resizebox{\textwidth}{!}{
\begin{tabular}{c|c|c|ccc|ccc|ccc|ccc|ccc}
\toprule
\multirow{2}{*}{p.c.v. $\vec{q}$} & \multirow{2}{*}{Measure $\mu$} & \multirow{2}{*}{Mech.} & \multicolumn{3}{c|}{$n=10$} & \multicolumn{3}{c|}{20} & \multicolumn{3}{c|}{30} & \multicolumn{3}{c|}{40} & \multicolumn{3}{c}{50} \\ 
                                  &                                &                        & lb      & mean    & ub     & lb     & mean  & ub    & lb     & mean  & ub    & lb     & mean  & ub    & lb     & mean  & ub    \\ \hline
\multirow{9}{*}{$(0.7,0.7)$}      & \multirow{3}{*}{Uniform}                        & ERM                    & 1.12    & 1.14    & 1.16   & 1.09   & 1.10  & 1.11  & 1.07   & 1.08  & 1.09  & 1.07   & 1.07  & 1.08  & 1.06   & 1.06  & 1.07  \\
                                  &                                & IG                     & 1.04    & 1.04    & 1.05   & 1.04   & 1.05  & 1.05  & 1.04   & 1.05  & 1.05  & 1.05   & 1.05  & 1.05  & 1.04   & 1.05  & 1.05  \\
                                  &                                & EEM                    & 1.77    & 1.81    & 1.85   & 1.87   & 1.90  & 1.93  & 1.91   & 1.93  & 1.95  & 1.95   & 1.97  & 1.99  & 1.95   & 1.96  & 1.98  \\ \cline{2-18}
                                  & \multirow{3}{*}{Gaussian}                       & ERM                    & 1.11    & 1.13    & 1.15   & 1.07   & 1.08  & 1.09  & 1.06   & 1.06  & 1.07  & 1.06   & 1.06  & 1.07  & 1.05   & 1.06  & 1.06  \\
                                  &                                & IG                     & 1.04    & 1.05    & 1.06   & 1.04   & 1.05  & 1.05  & 1.04   & 1.04  & 1.05  & 1.04   & 1.05  & 1.05  & 1.04   & 1.04  & 1.05  \\
                                  &                                & EEM                    & 2.05    & 2.10    & 2.14   & 2.60   & 2.65  & 2.71  & 2.87   & 2.93  & 2.98  & 3.04   & 3.09  & 3.15  & 3.19   & 3.24  & 3.30  \\ \cline{2-18}
                                  & \multirow{3}{*}{Exp. }                          & ERM                    & 1.26    & 1.30    & 1.34   & 1.15   & 1.16  & 1.17  & 1.14   & 1.15  & 1.16  & 1.12   & 1.13  & 1.14  & 1.12   & 1.13  & 1.14  \\
                                  &                                & IG                     & 1.08    & 1.09    & 1.10   & 1.07   & 1.08  & 1.09  & 1.09   & 1.10  & 1.11  & 1.09   & 1.09  & 1.10  & 1.09   & 1.10  & 1.11  \\ 
                                  &                                & EEM                    & 2.60    & 2.72    & 2.84   & 3.82   & 4.01  & 4.20  & 4.81   & 5.00  & 5.20  & 5.67   & 5.89  & 6.11  & 6.05   & 6.28  & 6.52  \\ \hline \hline
\multirow{9}{*}{$(0.8,0.8)$}      & \multirow{3}{*}{Uniform}                        & ERM                    & 1.15    & 1.18    & 1.20   & 1.04   & 1.05  & 1.06  & 1.03   & 1.04  & 1.04  & 1.02   & 1.02  & 1.03  & 1.02   & 1.02  & 1.02  \\
                                  &                                & IG                     & 1.18    & 1.19    & 1.20   & 1.12   & 1.12  & 1.13  & 1.09   & 1.09  & 1.10  & 1.08   & 1.08  & 1.08  & 1.07   & 1.07  & 1.07  \\
                                  &                                & EEM                    & 1.74    & 1.78    & 1.81   & 1.88   & 1.91  & 1.94  & 1.91   & 1.93  & 1.95  & 1.93   & 1.95  & 1.97  & 1.95   & 1.97  & 1.99  \\ \cline{2-18}
                                  & \multirow{3}{*}{Gaussian}                       & ERM                    & 1.13    & 1.15    & 1.17   & 1.04   & 1.04  & 1.05  & 1.03   & 1.04  & 1.04  & 1.02   & 1.02  & 1.03  & 1.02   & 1.02  & 1.03  \\
                                  &                                & IG                     & 1.19    & 1.20    & 1.21   & 1.11   & 1.12  & 1.12  & 1.09   & 1.09  & 1.10  & 1.07   & 1.08  & 1.08  & 1.07   & 1.07  & 1.07  \\
                                  &                                & EEM                    & 2.00    & 2.04    & 2.08   & 2.47   & 2.51  & 2.56  & 2.82   & 2.86  & 2.91  & 2.99   & 3.04  & 3.08  & 3.16   & 3.21  & 3.26  \\ \cline{2-18}
                                  & \multirow{3}{*}{Exp.}                           & ERM                    & 1.29    & 1.33    & 1.37   & 1.13   & 1.15  & 1.17  & 1.11   & 1.12  & 1.13  & 1.08   & 1.09  & 1.09  & 1.08   & 1.09  & 1.10  \\
                                  &                                & IG                     & 1.14    & 1.15    & 1.16   & 1.11   & 1.12  & 1.12  & 1.09   & 1.10  & 1.10  & 1.07   & 1.08  & 1.08  & 1.07   & 1.08  & 1.08  \\
                                  &                                & EEM                    & 1.85    & 1.92    & 2.00   & 2.66   & 2.79  & 2.93  & 3.51   & 3.70  & 3.88  & 4.18   & 4.40  & 4.63  & 4.36   & 4.56  & 4.76  \\ \hline \hline
\multirow{9}{*}{$(0.9,0.9)$}      & \multirow{3}{*}{Uniform}                        & ERM                    & 1.14    & 1.16    & 1.18   & 1.04   & 1.05  & 1.05  & 1.03   & 1.04  & 1.04  & 1.02   & 1.02  & 1.03  & 1.02   & 1.02  & 1.02  \\
                                  &                                & IG                     & 1.72    & 1.75    & 1.78   & 1.52   & 1.54  & 1.55  & 1.46   & 1.48  & 1.49  & 1.44   & 1.45  & 1.46  & 1.42   & 1.43  & 1.44  \\
                                  &                                & EEM                    & 1.72    & 1.75    & 1.78   & 1.86   & 1.88  & 1.91  & 1.92   & 1.94  & 1.97  & 1.93   & 1.96  & 1.98  & 1.96   & 1.98  & 2.00  \\ \cline{2-18}
                                  & \multirow{3}{*}{Gaussian}                       & ERM                    & 1.12    & 1.14    & 1.16   & 1.04   & 1.04  & 1.05  & 1.03   & 1.04  & 1.04  & 1.02   & 1.03  & 1.03  & 1.02   & 1.02  & 1.02  \\
                                  &                                & IG                     & 1.99    & 2.03    & 2.06   & 1.67   & 1.69  & 1.71  & 1.58   & 1.60  & 1.61  & 1.54   & 1.56  & 1.57  & 1.51   & 1.53  & 1.54  \\
                                  &                                & EEM                    & 1.99    & 2.03    & 2.06   & 2.48   & 2.53  & 2.57  & 2.77   & 2.82  & 2.86  & 3.00   & 3.05  & 3.09  & 3.09   & 3.14  & 3.18  \\ \cline{2-18}
                                  & \multirow{3}{*}{Exp.}                           & ERM                    & 1.35    & 1.39    & 1.43   & 1.14   & 1.15  & 1.17  & 1.12   & 1.13  & 1.15  & 1.08   & 1.09  & 1.10  & 1.09   & 1.10  & 1.11  \\
                                  &                                & IG                     & 1.64    & 1.67    & 1.70   & 1.41   & 1.42  & 1.44  & 1.36   & 1.38  & 1.39  & 1.32   & 1.33  & 1.34  & 1.31   & 1.32  & 1.32  \\
                                  &                                & EEM                    & 1.64    & 1.67    & 1.70   & 2.00   & 2.08  & 2.16  & 2.20   & 2.31  & 2.42  & 2.49   & 2.63  & 2.76  & 2.68   & 2.83  & 2.97  \\
\bottomrule
\end{tabular} }
\caption{The Bayesian approximation ratio of ERM, IG, and EEM in the balanced case, i.e. $q_1=q_2$, with respect to the Social Cost. 
Every row contains the results for different mechanism, $\mu$, and $\vec q$. 
Every column contains the results for different $n\in \{10,\ldots,50\}$.
Every value is computed as the average of $500$ runs.}
\label{table:equicapacitated_test}
\end{table*}

\begin{table*}[ht]
\centering
\resizebox{\textwidth}{!}{
\begin{tabular}{c|c|c|ccc|ccc|ccc|ccc|ccc}
\toprule
\multirow{2}{*}{p.c.v. $\vec{q}$} & \multirow{2}{*}{Measure $\mu$} & \multirow{2}{*}{Mech.} & \multicolumn{3}{c|}{$n=10$} & \multicolumn{3}{c|}{20} & \multicolumn{3}{c|}{30} & \multicolumn{3}{c|}{40} & \multicolumn{3}{c}{50} \\
                                  &                                &                        & lb      & mean    & ub     & lb     & mean  & ub    & lb     & mean  & ub    & lb     & mean  & ub    & lb     & mean  & ub    \\
\hline                                 
\multirow{4}{*}{$(0.85,0.75)$}    & \multirow{2}{*}{Uniform}       & ERM                    & 1.15    & 1.17    & 1.20   & 1.04   & 1.05  & 1.05  & 1.03   & 1.04  & 1.04  & 1.02   & 1.02  & 1.03  & 1.02   & 1.02  & 1.02  \\
                                  &                                & EEM                    & 1.75    & 1.79    & 1.82   & 1.89   & 1.92  & 1.95  & 1.90   & 1.92  & 1.94  & 1.93   & 1.95  & 1.97  & 1.96   & 1.97  & 1.99  \\ \cline{2-18} 
                                  & \multirow{2}{*}{Gaussian}      & ERM                    & 1.14    & 1.16    & 1.18   & 1.04   & 1.05  & 1.05  & 1.03   & 1.04  & 1.04  & 1.02   & 1.03  & 1.03  & 1.02   & 1.02  & 1.03  \\
                                  &                                & EEM                    & 1.99    & 2.02    & 2.06   & 2.48   & 2.53  & 2.58  & 2.75   & 2.80  & 2.84  & 3.01   & 3.06  & 3.11  & 3.15   & 3.20  & 3.24  \\ \hline \hline
\multirow{4}{*}{$(0.8,0.4)$}      & \multirow{2}{*}{Uniform}       & ERM                    & 2.09    & 2.17    & 2.24   & 1.85   & 1.89  & 1.93  & 1.78   & 1.82  & 1.85  & 1.74   & 1.76  & 1.78  & 1.71   & 1.73  & 1.76  \\
                                  &                                & EEM                    & 1.89    & 1.95    & 2.00   & 2.10   & 2.14  & 2.18  & 2.21   & 2.25  & 2.28  & 2.23   & 2.27  & 2.30  & 2.26   & 2.29  & 2.32  \\ \cline{2-18}
                                  & \multirow{2}{*}{Gaussian}      & ERM                    & 1.77    & 1.82    & 1.87   & 1.59   & 1.62  & 1.64  & 1.53   & 1.55  & 1.57  & 1.51   & 1.52  & 1.54  & 1.49   & 1.51  & 1.52  \\
                                  &                                & EEM                    & 2.10    & 2.14    & 2.19   & 2.63   & 2.68  & 2.73  & 2.88   & 2.93  & 2.98  & 3.11   & 3.16  & 3.21  & 3.26   & 3.31  & 3.35  \\ \hline \hline
\multirow{4}{*}{$(0.85,0.35)$}    & \multirow{2}{*}{Uniform}       & ERM                    & 2.10    & 2.17    & 2.24   & 1.96   & 2.00  & 2.03  & 1.86   & 1.89  & 1.92  & 1.85   & 1.87  & 1.89  & 1.82   & 1.85  & 1.87  \\
                                  &                                & EEM                    & 1.88    & 1.93    & 1.97   & 2.25   & 2.30  & 2.35  & 2.34   & 2.38  & 2.42  & 2.42   & 2.46  & 2.49  & 2.46   & 2.50  & 2.53  \\ \cline{2-18}
                                  & \multirow{2}{*}{Gaussian}      & ERM                    & 1.78    & 1.83    & 1.88   & 1.69   & 1.71  & 1.74  & 1.60   & 1.62  & 1.64  & 1.59   & 1.61  & 1.62  & 1.55   & 1.57  & 1.58  \\
                                  &                                & EEM                    & 2.11    & 2.16    & 2.21   & 2.68   & 2.73  & 2.78  & 2.96   & 3.01  & 3.06  & 3.17   & 3.22  & 3.27  & 3.33   & 3.38  & 3.43 \\
\bottomrule                                 
\end{tabular}
}
\caption{The Bayesian approximation ratio of ERM and EEM in the unbalanced case, i.e. $q_1\neq q_2$, with respect to the Social Cost.
Every row contains the results for different mechanism, $\mu$, and $\vec q$.
Every column contains the results for different $n\in \{10,\ldots,50\}$.
Every value is computed as the average of $500$ runs.}
\label{table:nonequitest}
\end{table*}

\begin{table}[]
\resizebox{\textwidth}{!}{
\begin{tabular}{c|c|ccc|ccc|ccc|ccc|ccc}
\toprule
\multirow{2}{*}{p.c.v. $\vec{q}$} & \multirow{2}{*}{Measure $\mu$} & \multicolumn{3}{c|}{$n=10$} & \multicolumn{3}{c|}{$20$} & \multicolumn{3}{c|}{$30$} & \multicolumn{3}{c|}{$40$} & \multicolumn{3}{c}{$50$} \\
                                  &                                & lb      & mean    & ub     & lb     & mean   & ub     & lb     & mean   & ub     & lb     & mean   & ub     & lb     & mean   & ub     \\ \hline
\multirow{2}{*}{$(0.7,0.7)$}                & Uniform                        & 0.08    & 0.10    & 0.12   & 0.05   & 0.06   & 0.07   & 0.03   & 0.04   & 0.05   & 0.03   & 0.03   & 0.04   & 0.02   & 0.02   & 0.03   \\
                                  & Gaussian                       & 0.08    & 0.10    & 0.12   & 0.04   & 0.05   & 0.06   & 0.03   & 0.03   & 0.04   & 0.03   & 0.03   & 0.04   & 0.02   & 0.03   & 0.03   \\ \hline
\multirow{2}{*}{$(0.8,0.8)$}      & Uniform                        & 0.15    & 0.18    & 0.20   & 0.04   & 0.05   & 0.06   & 0.03   & 0.04   & 0.04   & 0.02   & 0.02   & 0.03   & 0.02   & 0.02   & 0.02   \\
                                  & Gaussian                       & 0.13    & 0.15    & 0.17   & 0.04   & 0.04   & 0.05   & 0.03   & 0.04   & 0.04   & 0.02   & 0.02   & 0.03   & 0.02   & 0.02   & 0.03   \\ \hline
\multirow{2}{*}{$(0.9,0.9)$}      & Uniform                        & 0.14    & 0.16    & 0.18   & 0.04   & 0.05   & 0.05   & 0.03   & 0.04   & 0.04   & 0.02   & 0.02   & 0.03   & 0.02   & 0.02   & 0.02   \\
                                  & Gaussian                       & 0.12    & 0.14    & 0.16   & 0.04   & 0.04   & 0.05   & 0.03   & 0.04   & 0.04   & 0.02   & 0.03   & 0.03   & 0.02   & 0.02   & 0.02   \\ \hline \hline
\multirow{2}{*}{$(0.85,0.75)$}    & Uniform                        & 0.15    & 0.17    & 0.20   & 0.04   & 0.05   & 0.05   & 0.03   & 0.04   & 0.04   & 0.02   & 0.02   & 0.03   & 0.02   & 0.02   & 0.02   \\
                                  & Gaussian                       & 0.14    & 0.16    & 0.18   & 0.04   & 0.05   & 0.05   & 0.03   & 0.04   & 0.04   & 0.02   & 0.03   & 0.03   & 0.02   & 0.02   & 0.03   \\ \hline
\multirow{2}{*}{$(0.8,0.4)$}      & Uniform                        & 0.29    & 0.34    & 0.38   & 0.14   & 0.17   & 0.19   & 0.10   & 0.12   & 0.14   & 0.07   & 0.09   & 0.10   & 0.06   & 0.07   & 0.09   \\
                                  & Gaussian                       & 0.26    & 0.29    & 0.33   & 0.13   & 0.15   & 0.16   & 0.09   & 0.10   & 0.11   & 0.07   & 0.08   & 0.09   & 0.06   & 0.07   & 0.08   \\ \hline
\multirow{2}{*}{$(0.85,0.35)$}    & Uniform           &0.19	 &0.23	  &0.27	   &0.11	&0.13	  &0.15    &	0.05   &0.07	& 0.08	 & 0.05	  &0.06	  & 0.07	& 0.03	 & 0.05	 & 0.06   \\
                                  & Gaussian                       & 0.20    & 0.24    & 0.27   & 0.14   & 0.16   & 0.18   & 0.08   & 0.09   & 0.11   & 0.07   & 0.09   & 0.09   & 0.05   & 0.06   & 0.07  \\
\bottomrule
\end{tabular}}
\caption{The relative error of ERM with respect to the Social Cost. For every row, we compute the relative error of the ERM for different $\vec q$, different probability distributions, while for every column we compute the relative error for different $n\in \{10,\dots,50\}$.}
\label{table:percenterrtest}
\end{table}

\section{Experiment results -- The $l_2$ Cost}
\label{app:l2}
%\paragraph{Comparison with the EEM and the IG.} 

\begin{figure}[H]
     \centering
     \includegraphics[width=0.8\columnwidth]{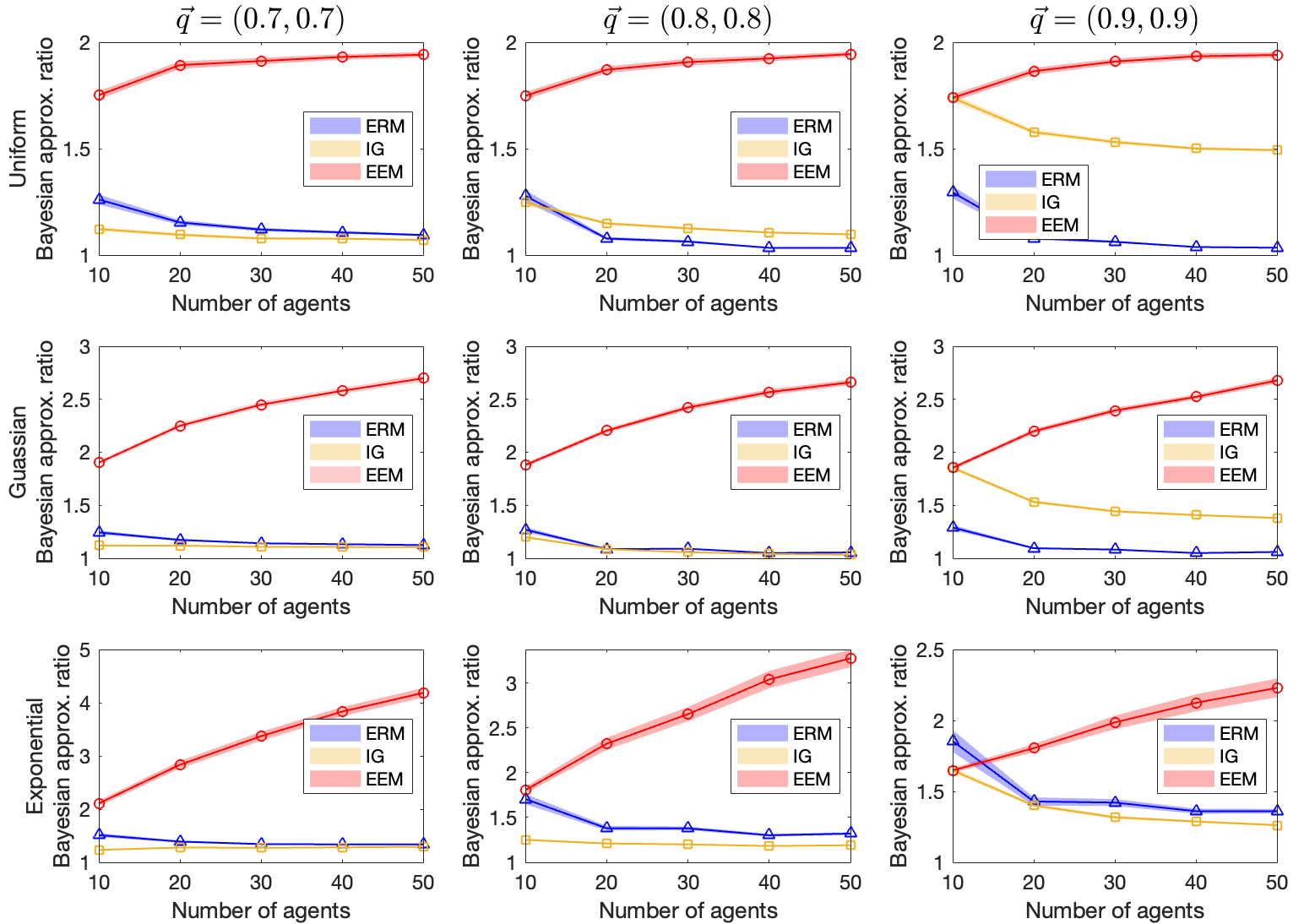}
     \caption{The Bayesian approximation ratio of ERM, IG, and EEM in the balanced case, i.e. $q_1=q_2$ for $n=10,20,\dots,50$, with respect to the $l_2$ cost. Every column contains the results for different vector $\vec q$, while every row contains the results for a different probability distribution $\mu$.
     }
     % 
     % The first row contains the results for the Uniform distribution, the second row the results for the Gaussian distribution, and the third one the results for the Exponential distribution. }
     \label{fig:lp_EQ}
\end{figure}

\begin{table}[H]
\centering
\resizebox{\textwidth}{!}{
\begin{tabular}{c|c|c|ccc|ccc|ccc|ccc|ccc}
\toprule
\multirow{2}{*}{p.c.v. $\vec{q}$} & \multirow{2}{*}{Measure $\mu$} & \multirow{2}{*}{Mech.} & \multicolumn{3}{c|}{$n=10$} & \multicolumn{3}{c|}{20} & \multicolumn{3}{c|}{30} & \multicolumn{3}{c|}{40} & \multicolumn{3}{c}{50} \\
                                  &                                & \multicolumn{1}{c}{}                       & \multicolumn{1}{|c}{lb} & \multicolumn{1}{c}{mean} & \multicolumn{1}{c|}{ub} & \multicolumn{1}{c}{lb} & \multicolumn{1}{c}{mean} & \multicolumn{1}{c|}{ub} & \multicolumn{1}{c}{lb} & \multicolumn{1}{c}{mean} & \multicolumn{1}{c|}{ub} & \multicolumn{1}{c}{lb} & \multicolumn{1}{c}{mean} & \multicolumn{1}{c|}{ub} & \multicolumn{1}{c}{lb} & \multicolumn{1}{c}{mean} & \multicolumn{1}{c}{ub} \\ \hline
\multirow{9}{*}{$(0.7,0.7)$}      & \multirow{3}{*}{Uniform}       & ERM                                        & 1.24                   & 1.26                     & 1.28                   & 1.14                   & 1.15                     & 1.16                   & 1.11                   & 1.12                     & 1.13                   & 1.10                   & 1.11                     & 1.11                   & 1.09                   & 1.09                     & 1.10                   \\ 
                                  &                                & IG                                         & 1.11                   & 1.12                     & 1.13                   & 1.09                   & 1.10                     & 1.10                   & 1.07                   & 1.08                     & 1.08                   & 1.07                   & 1.08                     & 1.08                   & 1.07                   & 1.07                     & 1.08                   \\ 
                                  &                                & EEM                                        & 1.73                   & 1.75                     & 1.77                   & 1.88                   & 1.89                     & 1.91                   & 1.90                   & 1.91                     & 1.93                   & 1.92                   & 1.93                     & 1.94                   & 1.93                   & 1.94                     & 1.95                   \\ \cline{2-18}
                                  & \multirow{3}{*}{Gaussian}      & ERM                                        & 1.23                   & 1.24                     & 1.26                   & 1.16                   & 1.17                     & 1.18                   & 1.13                   & 1.14                     & 1.15                   & 1.13                   & 1.13                     & 1.14                   & 1.12                   & 1.12                     & 1.13                   \\
                                  &                                & IG                                         & 1.11                   & 1.12                     & 1.13                   & 1.11                   & 1.12                     & 1.13                   & 1.10                   & 1.11                     & 1.11                   & 1.10                   & 1.11                     & 1.11                   & 1.10                   & 1.11                     & 1.11                   \\ 
                                  &                                & EEM                                        & 1.88                   & 1.90                     & 1.92                   & 2.23                   & 2.25                     & 2.28                   & 2.42                   & 2.45                     & 2.48                   & 2.55                   & 2.58                     & 2.61                   & 2.67                   & 2.70                     & 2.73                   \\  \cline{2-18}
                                  & \multirow{3}{*}{Exp.}          & ERM                                        & 1.47                   & 1.51                     & 1.55                   & 1.37                   & 1.39                     & 1.41                   & 1.33                   & 1.34                     & 1.36                   & 1.32                   & 1.34                     & 1.35                   & 1.32                   & 1.34                     & 1.35                   \\ 
                                  &                                & IG                                         & 1.22                   & 1.23                     & 1.25                   & 1.26                   & 1.28                     & 1.29                   & 1.26                   & 1.27                     & 1.28                   & 1.27                   & 1.28                     & 1.29                   & 1.28                   & 1.29                     & 1.30                   \\ 
                                  &                                & EEM                                        & 2.06                   & 2.11                     & 2.16                   & 2.76                   & 2.83                     & 2.90                   & 3.29                   & 3.37                     & 3.45                   & 3.75                   & 3.83                     & 3.92                   & 4.10                   & 4.19                     & 4.28                   \\ \hline \hline
\multirow{9}{*}{$(0.8,0.8)$}      & \multirow{3}{*}{Uniform}       & ERM                                        & 1.25                   & 1.28                     & 1.31                   & 1.07                   & 1.08                     & 1.09                   & 1.06                   & 1.06                     & 1.07                   & 1.03                   & 1.03                     & 1.04                   & 1.03                   & 1.03                     & 1.04                   \\ 
                                  &                                & IG                                         & 1.23                   & 1.25                     & 1.26                   & 1.14                   & 1.15                     & 1.16                   & 1.12                   & 1.13                     & 1.13                   & 1.10                   & 1.11                     & 1.11                   & 1.09                   & 1.10                     & 1.10                   \\ 
                                  &                                & EEM                                        & 1.73                    & 1.75                     & 1.77                   & 1.85                   & 1.87                     & 1.89                   & 1.89                   & 1.91                     & 1.92                   & 1.91                   & 1.92                     & 1.94                   & 1.93                   & 1.94                     & 1.96                   \\ \cline{2-18}
                                  & \multirow{3}{*}{Gaussian}      & ERM                                        & 1.25                   & 1.27                     & 1.29                   & 1.08                   & 1.09                     & 1.09                   & 1.08                   & 1.09                     & 1.10                   & 1.05                   & 1.05                     & 1.05                   & 1.05                   & 1.06                     & 1.06                   \\ 
                                  &                                & IG                                         & 1.19                   & 1.20                     & 1.21                   & 1.08                   & 1.09                     & 1.09                   & 1.05                   & 1.06                     & 1.06                   & 1.04                   & 1.04                     & 1.04                   & 1.03                   & 1.03                     & 1.04                   \\  
                                  &                                & EEM                                        & 1.86                   & 1.88                     & 1.90                   & 2.18                   & 2.20                     & 2.22                   & 2.40                   & 2.42                     & 2.45                   & 2.54                   & 2.57                     & 2.59                   & 2.63                   & 2.66                     & 2.69                   \\ \cline{2-18}
                                  & \multirow{3}{*}{Exp.}          & ERM                                        & 1.65                   & 1.71                     & 1.76                   & 1.35                   & 1.38                     & 1.41                   & 1.36                   & 1.38                     & 1.40                   & 1.29                   & 1.30                     & 1.32                   & 1.31                   & 1.32                     & 1.33                   \\
                                  &                                & IG                                         & 1.24                   & 1.25                     & 1.26                   & 1.20                   & 1.21                     & 1.22                   & 1.19                   & 1.20                     & 1.21                   & 1.17                   & 1.18                     & 1.19                   & 1.18                   & 1.19                     & 1.20                   \\
                                  &                                & EEM                                        & 1.77                   & 1.81                     & 1.84                   & 2.26                   & 2.32                     & 2.39                   & 2.57                   & 2.65                     & 2.73                   & 2.94                   & 3.04                     & 3.14                   & 3.18                   & 3.28                     & 3.37                   \\ \hline \hline
\multirow{9}{*}{$(0.9,0.9)$}      & \multirow{3}{*}{Uniform}       & ERM                                        & 1.26                   & 1.29                     & 1.32                   & 1.07                   & 1.08                     & 1.09                   & 1.06                   & 1.06                     & 1.07                   & 1.03                   & 1.04                     & 1.04                   & 1.03                   & 1.03                     & 1.04                   \\
                                  &                                & IG                                         & 1.72                   & 1.74                     & 1.76                   & 1.57                   & 1.58                     & 1.59                   & 1.52                   & 1.53                     & 1.54                   & 1.49                   & 1.50                     & 1.51                   & 1.49                   & 1.49                     & 1.50                   \\
                                  &                                & EEM                                        & 1.72                   & 1.74                     & 1.76                   & 1.85                   & 1.86                     & 1.88                   & 1.90                   & 1.91                     & 1.92                   & 1.92                   & 1.93                     & 1.95                   & 1.93                   & 1.94                     & 1.95                   \\ \cline{2-18}
                                  & \multirow{3}{*}{Gaussian}      & ERM                                        & 1.27                   & 1.29                     & 1.31                   & 1.09                   & 1.09                     & 1.10                   & 1.08                   & 1.08                     & 1.09                   & 1.05                   & 1.05                     & 1.05                   & 1.06                   & 1.06                     & 1.06                   \\
                                  &                                & IG                                         & 1.84                   & 1.85                     & 1.87                   & 1.52                   & 1.53                     & 1.55                   & 1.43                   & 1.44                     & 1.46                   & 1.40                   & 1.41                     & 1.42                   & 1.37                   & 1.38                     & 1.39                   \\
                                  &                                & EEM                                        & 1.84                   & 1.85                     & 1.87                   & 2.18                   & 2.20                     & 2.22                   & 2.37                   & 2.39                     & 2.42                   & 2.50                   & 2.52                     & 2.55                   & 2.65                   & 2.68                     & 2.71                   \\ \cline{2-18}
                                  & \multirow{3}{*}{Exp.}          & ERM                                        & 1.78                   & 1.86                     & 1.93                   & 1.40                   & 1.43                     & 1.46                   & 1.40                   & 1.42                     & 1.45                   & 1.34                   & 1.36                     & 1.38                   & 1.34                   & 1.36                     & 1.38                   \\
                                  &                                & IG                                         & 1.63                   & 1.65                     & 1.66                   & 1.39                   & 1.40                     & 1.41                   & 1.31                   & 1.32                     & 1.33                   & 1.28                   & 1.29                     & 1.30                   & 1.25                   & 1.26                     & 1.27                   \\
                                  &                                & EEM                                        & 1.63                   & 1.65                     & 1.66                   & 1.78                   & 1.80                     & 1.83                   & 1.94                   & 1.99                     & 2.03                   & 2.06                   & 2.12                     & 2.18                   & 2.16                   & 2.23                     & 2.30       \\ \bottomrule           
\end{tabular}}
\caption{The Bayesian approximation ratio of ERM, IG, and EEM in the balanced case, i.e. $q_1=q_2$, under $l_2$ cost. Every row contains the results for different mechanism, $\mu$, and $\vec q$. Every column contains the results for different $n\in \{10,\ldots,50\}$. Every value is computed as the average of $500$ runs.}
\label{tab:lp_EQ}
\end{table}

\begin{figure}[H]
     \centering
     \includegraphics[width=0.8\columnwidth]{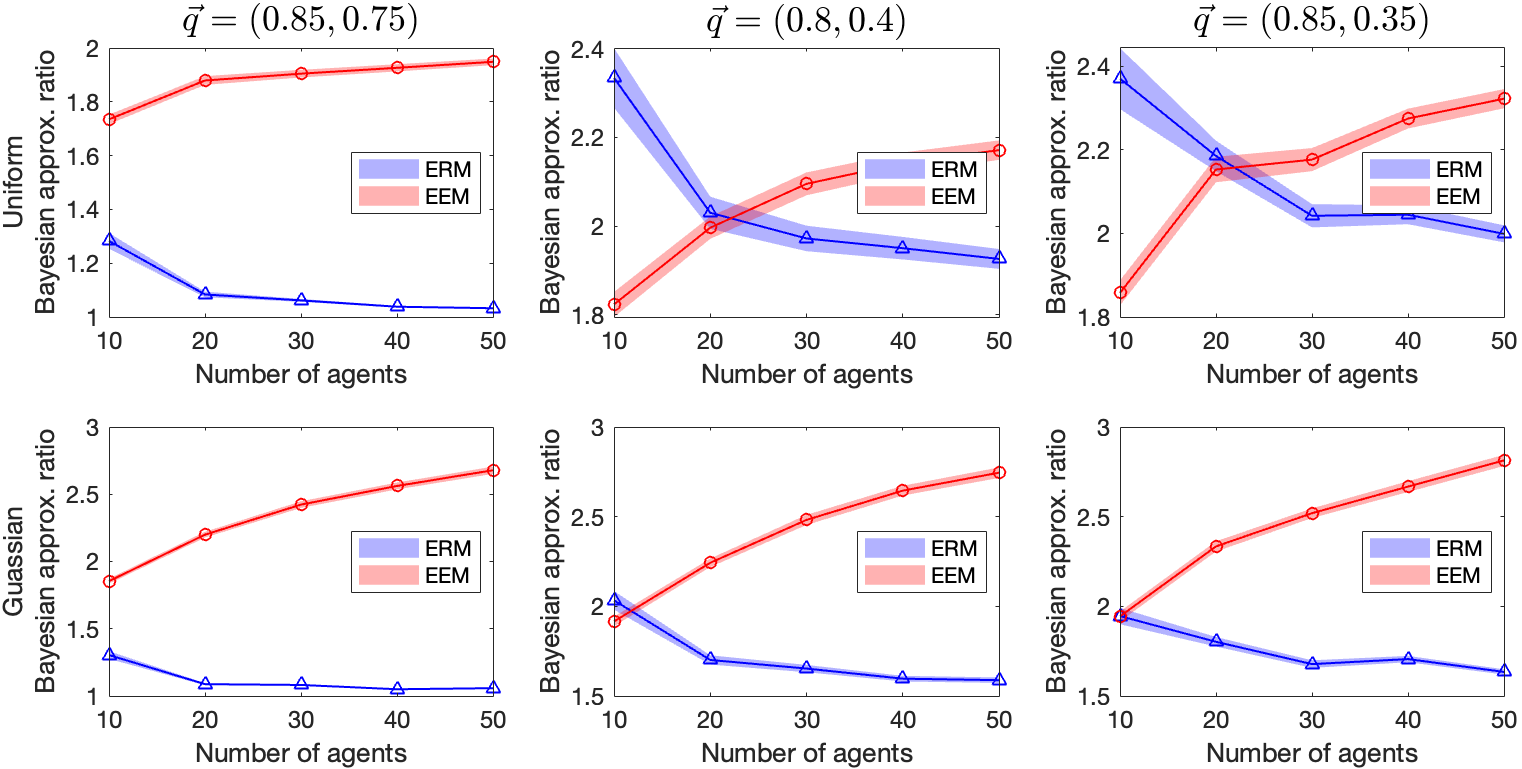}
     \caption{The Bayesian approximation ratio of ERM and EEM in the unbalanced case, i.e. $q_1\neq q_2$ for $n=10,20,\dots,50$ with respect to the $l_2$ cost. Every column contains the results for a different vector $\vec q$, while every row contains the results for a different probability measure $\mu$.
     }
     % 
     % The first row contains the results for the Uniform distribution. The second row contains the results for the Gaussian distribution. Every value is computed as the average of $500$ runs.}
     \label{fig:lp_NE}
\end{figure}

\begin{table}[H]
\centering
\resizebox{\textwidth}{!}{
\begin{tabular}{c|c|c|ccc|ccc|ccc|ccc|ccc}
\toprule
\multirow{2}{*}{p.c.v. $\vec{q}$} & \multirow{2}{*}{Measure $\mu$} & \multirow{2}{*}{Mech.} & \multicolumn{3}{c|}{$n=10$} & \multicolumn{3}{c|}{20} & \multicolumn{3}{c|}{30} & \multicolumn{3}{c|}{40} & \multicolumn{3}{c}{50} \\
                                  &                                & \multicolumn{1}{c}{}                       & \multicolumn{1}{|c}{lb} & \multicolumn{1}{c}{mean} & \multicolumn{1}{c|}{ub} & \multicolumn{1}{c}{lb} & \multicolumn{1}{c}{mean} & \multicolumn{1}{c|}{ub} & \multicolumn{1}{c}{lb} & \multicolumn{1}{c}{mean} & \multicolumn{1}{c|}{ub} & \multicolumn{1}{c}{lb} & \multicolumn{1}{c}{mean} & \multicolumn{1}{c|}{ub} & \multicolumn{1}{c}{lb} & \multicolumn{1}{c}{mean} & \multicolumn{1}{c}{ub} \\ \hline
\multirow{4}{*}{$(0.85,0.75)$}    & \multirow{2}{*}{Uniform}       & ERM                                        & 1.25                   & 1.28                     & 1.31                   & 1.07                   & 1.08                     & 1.09                   & 1.06                   & 1.06                     & 1.07                   & 1.03                   & 1.04                     & 1.04                   & 1.03                   & 1.03                     & 1.04                   \\
                                  &                                & EEM                                        & 1.72                   & 1.73                     & 1.75                   & 1.86                   & 1.88                     & 1.90                   & 1.89                   & 1.90                     & 1.92                   & 1.91                   & 1.93                     & 1.94                   & 1.94                   & 1.95                     & 1.96                   \\ \cline{2-18}
                                  & \multirow{2}{*}{Gaussian}      & ERM                                        & 1.28                   & 1.30                     & 1.33                   & 1.08                   & 1.09                     & 1.10                   & 1.08                   & 1.08                     & 1.09                   & 1.05                   & 1.05                     & 1.05                   & 1.05                   & 1.06                     & 1.06                   \\
                                  &                                & EEM                                        & 1.84                   & 1.85                     & 1.87                   & 2.18                   & 2.20                     & 2.22                   & 2.40                   & 2.42                     & 2.45                   & 2.54                   & 2.56                     & 2.59                   & 2.65                   & 2.68                     & 2.71                   \\ \hline \hline
\multirow{4}{*}{$(0.8,0.4)$}      & \multirow{2}{*}{Uniform}       & ERM                                        & 2.27                   & 2.33                     & 2.40                   & 1.99                   & 2.03                     & 2.07                   & 1.94                   & 1.97                     & 2.00                   & 1.92                   & 1.95                     & 1.98                   & 1.90                   & 1.93                     & 1.95                   \\
                                  &                                & EEM                                        & 1.79                   & 1.82                     & 1.85                   & 1.97                   & 2.00                     & 2.02                   & 2.07                   & 2.10                     & 2.12                   & 2.12                   & 2.14                     & 2.17                   & 2.15                   & 2.17                     & 2.19                   \\ \cline{2-18}
                                  & \multirow{2}{*}{Gaussian}      & ERM                                        & 1.98                   & 2.03                     & 2.09                   & 1.68                   & 1.70                     & 1.73                   & 1.63                   & 1.65                     & 1.67                   & 1.58                   & 1.60                     & 1.61                   & 1.57                   & 1.59                     & 1.60                   \\
                                  &                                & EEM                                        & 1.89                   & 1.92                     & 1.94                   & 2.22                   & 2.24                     & 2.27                   & 2.45                   & 2.48                     & 2.51                   & 2.62                   & 2.65                     & 2.67                   & 2.72                   & 2.75                     & 2.77                   \\ \hline \hline
\multirow{4}{*}{$(0.85,0.35)$}    & \multirow{2}{*}{Uniform}       & ERM                                        & 2.30                   & 2.37                     & 2.44                   & 2.15                   & 2.19                     & 2.22                   & 2.01                   & 2.04                     & 2.07                   & 2.02                   & 2.04                     & 2.07                   & 1.98                   & 2.00                     & 2.02                   \\
                                  &                                & EEM                                        & 1.83                   & 1.86                     & 1.89                   & 2.12                   & 2.15                     & 2.18                   & 2.15                   & 2.18                     & 2.20                   & 2.25                   & 2.27                     & 2.30                   & 2.30                   & 2.32                     & 2.35                   \\ \cline{2-18}
                                  & \multirow{2}{*}{Gaussian}      & ERM                                        & 1.90                   & 1.95                     & 1.99                   & 1.78                   & 1.80                     & 1.83                   & 1.66                   & 1.68                     & 1.70                   & 1.69                   & 1.71                     & 1.72                   & 1.62                   & 1.63                     & 1.65                   \\
                                  &                                & EEM                                        & 1.92                   & 1.94                     & 1.97                   & 2.31                   & 2.33                     & 2.36                   & 2.49                   & 2.52                     & 2.55                   & 2.64                   & 2.67                     & 2.70                   & 2.78                   & 2.81                     & 2.84          \\ \bottomrule         
\end{tabular}}
\caption{The Bayesian approximation ratio of ERM and EEM in the unbalanced case, i.e. $q_1\neq q_2$, under $l_2$ cost. Every row contains the results for different mechanism, $\mu$, and $\vec q$. Every column contains the results for different $n\in \{10,\ldots,50\}$. Every value is computed as the average of $500$ runs.}
\label{tab:lp_NE}
\end{table}

%{\color{red} (Todo: Figure and Table about the convergence speed under $l_2$ cost.)}

\begin{figure}[h]
    \centering
    \includegraphics[width=0.8\columnwidth]{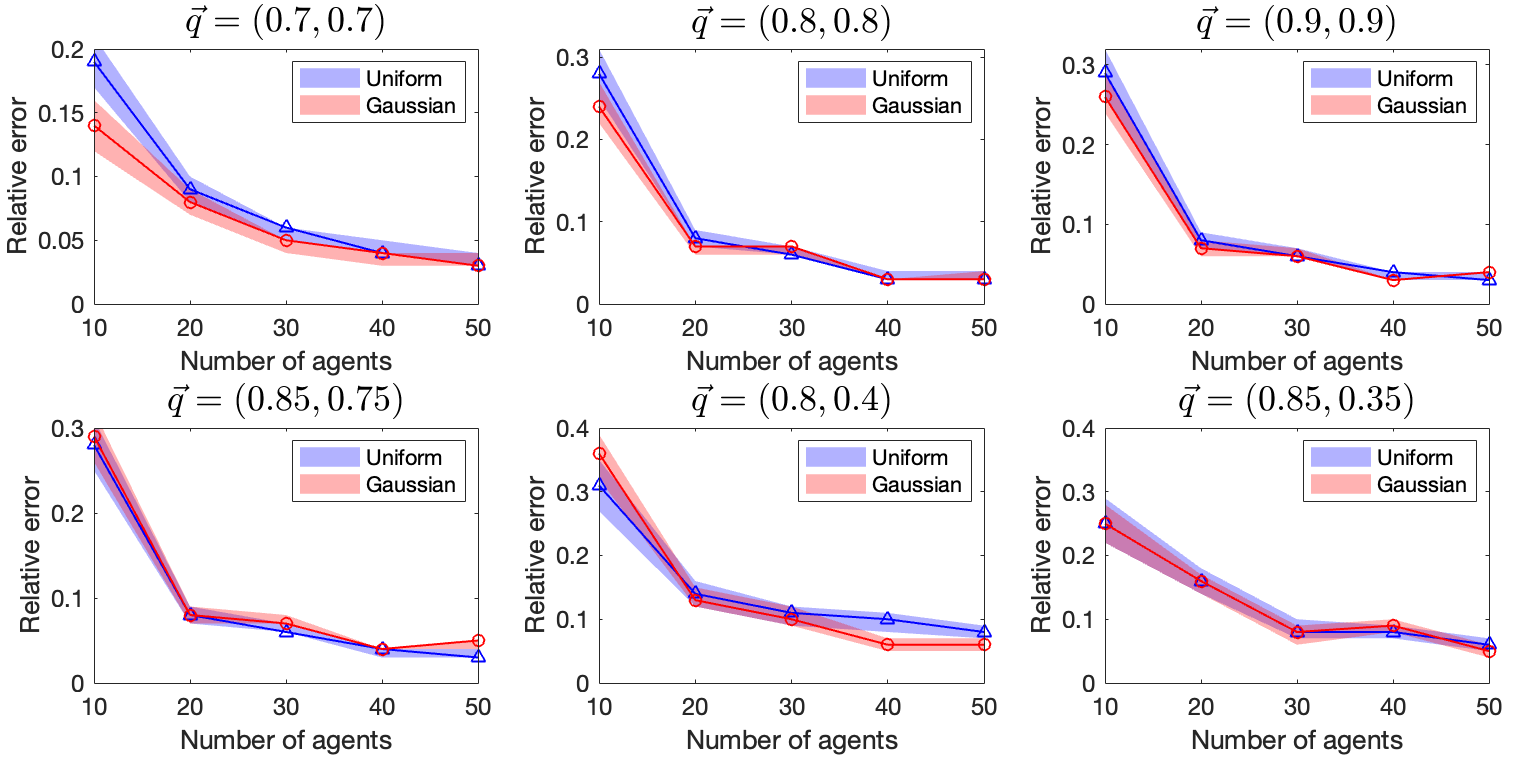}
    \caption{The relative error of ERM for $n=10,20,\dots,50$ for the Uniform and Gaussian distributions, with respect to the $l_2$ cost. The first row shows the results for the balanced case while the second row shows the results for unbalanced case. }
    \label{fig:lp_RE}
\end{figure}

\begin{table}[]
\resizebox{\textwidth}{!}{
\begin{tabular}{c|c|ccc|ccc|ccc|ccc|ccc}
\toprule
\multirow{2}{*}{p.c.v. $\vec{q}$} & \multirow{2}{*}{Measure $\mu$} & \multicolumn{3}{c|}{$n=10$} & \multicolumn{3}{c|}{$20$} & \multicolumn{3}{c|}{$30$} & \multicolumn{3}{c|}{$40$} & \multicolumn{3}{c}{$50$} \\
                                  &                                & lb      & mean    & ub     & lb     & mean   & ub     & lb     & mean   & ub     & lb     & mean   & ub     & lb     & mean   & ub     \\ \hline
\multirow{2}{*}{(0.7, 0.7)}       & Uniform                                            & 0.17   & 0.19   & 0.21   & 0.08   & 0.09   & 0.10   & 0.05   & 0.06   & 0.06   & 0.04   & 0.04   & 0.05   & 0.03   & 0.03   & 0.04   \\
                                  & Guassian                                           & 0.12   & 0.14   & 0.16   & 0.07   & 0.08   & 0.09   & 0.04   & 0.05   & 0.05   & 0.03   & 0.04   & 0.04   & 0.03   & 0.03   & 0.04   \\ \hline
\multirow{2}{*}{(0.8, 0.8)}       & Uniform                                            & 0.25   & 0.28   & 0.31   & 0.07   & 0.08   & 0.09   & 0.06   & 0.06   & 0.07   & 0.03   & 0.03   & 0.04   & 0.03   & 0.03   & 0.04   \\
                                  & Guassian                                           & 0.22   & 0.24   & 0.27   & 0.06   & 0.07   & 0.07   & 0.06   & 0.07   & 0.07   & 0.03   & 0.03   & 0.03   & 0.03   & 0.03   & 0.04   \\ \hline
\multirow{2}{*}{(0.9, 0.9)}       & Uniform                                            & 0.26   & 0.29   & 0.32   & 0.07   & 0.08   & 0.09   & 0.06   & 0.06   & 0.07   & 0.03   & 0.04   & 0.04   & 0.03   & 0.03   & 0.04   \\
                                  & Guassian                                           & 0.24   & 0.26   & 0.29   & 0.06   & 0.07   & 0.08   & 0.06   & 0.06   & 0.07   & 0.03   & 0.03   & 0.03   & 0.04   & 0.04   & 0.04   \\ \hline \hline
\multirow{2}{*}{(0.85, 0.75)}     & Uniform                                            & 0.25   & 0.28   & 0.31   & 0.07   & 0.08   & 0.09   & 0.06   & 0.06   & 0.07   & 0.03   & 0.04   & 0.04   & 0.03   & 0.03   & 0.04   \\
                                  & Guassian                                           & 0.26   & 0.29   & 0.32   & 0.07   & 0.08   & 0.09   & 0.07   & 0.07   & 0.08   & 0.04   & 0.04   & 0.04   & 0.04   & 0.05   & 0.05   \\ \hline
\multirow{2}{*}{(0.8, 0.4)}       & Uniform                                            & 0.27   & 0.31   & 0.35   & 0.12   & 0.14   & 0.16   & 0.09   & 0.11   & 0.12   & 0.08   & 0.10   & 0.11   & 0.07   & 0.08   & 0.09   \\
                                  & Guassian                                           & 0.32   & 0.36   & 0.39   & 0.12   & 0.13   & 0.15   & 0.09   & 0.10   & 0.12   & 0.05   & 0.06   & 0.07   & 0.05   & 0.06   & 0.07   \\ \hline
\multirow{2}{*}{(0.85, 0.35)}     & Uniform                                            & 0.22   & 0.25   & 0.29   & 0.14   & 0.16   & 0.18   & 0.07   & 0.08   & 0.10   & 0.07   & 0.08   & 0.09   & 0.05   & 0.06   & 0.07   \\
                                  & Guassian                                           & 0.22   & 0.25   & 0.28   & 0.14   & 0.16   & 0.17   & 0.06   & 0.08   & 0.09   & 0.08   & 0.09   & 0.10   & 0.04   & 0.05   & 0.06  \\\bottomrule
\end{tabular} }
\caption{The relative error of ERM with respect to the $l_2$ cost. For every row, we compute the relative error of the ERM for different $\vec q$, different probability distributions, while for every column we compute the relative error for different $n\in \{10,\dots,50\}$.}
\label{tab:lp_RE}
\end{table}

\clearpage 

%\newpage

\section{Experiment results -- Maximum Cost}
\label{app:max}

\begin{table}[h]
\centering
\resizebox{\textwidth}{!}{
\begin{tabular}{c|c|c|ccc|ccc|ccc|ccc|ccc}
\toprule
\multirow{2}{*}{p.c.v. $\vec{q}$} & \multirow{2}{*}{Measure $\mu$} & \multirow{2}{*}{Mech.} & \multicolumn{3}{c|}{$n=10$} & \multicolumn{3}{c|}{20} & \multicolumn{3}{c|}{30} & \multicolumn{3}{c|}{40} & \multicolumn{3}{c}{50} \\ 
                                  &                                &                        & lb      & mean    & ub     & lb     & mean  & ub    & lb     & mean  & ub    & lb     & mean  & ub    & lb     & mean  & ub    \\ \hline
\multirow{6}{*}{$(0.7,0.7)$}                          & \multirow{3}{*}{Uniform}       & ERM                    & 1.70    & 1.74    & 1.77   & 1.49   & 1.52   & 1.55   & 1.44   & 1.47   & 1.49   & 1.40   & 1.42   & 1.44   & 1.37   & 1.38   & 1.40   \\
                                                      &                                & IG                     & 1.51    & 1.53    & 1.56   & 1.38   & 1.40   & 1.43   & 1.37   & 1.39   & 1.41   & 1.35   & 1.37   & 1.39   & 1.33   & 1.34   & 1.36   \\
                                                      &                                & EEM                    & 2.00    & 2.00    & 2.00   & 2.00   & 2.00   & 2.00   & 2.00   & 2.00   & 2.00   & 2.00   & 2.00   & 2.00   & 2.00   & 2.00   & 2.00   \\\cline{2-18}
                                                      & \multirow{3}{*}{Beta}     & ERM                    & 2.31    & 2.36    & 2.42   & 1.96   & 1.98   & 1.99   & 1.97   & 1.98   & 1.99   & 1.97   & 1.97   & 1.98   & 1.97   & 1.98   & 1.98   \\
                                                      &                                & IG                     & 1.69    & 1.72    & 1.75   & 1.80   & 1.82   & 1.85   & 1.87   & 1.89   & 1.91   & 1.93   & 1.95   & 1.96   & 1.96   & 1.97   & 1.97   \\
                                                      &                                & EEM                    & 2.00    & 2.00    & 2.00   & 2.00   & 2.00   & 2.00   & 2.00   & 2.00   & 2.00   & 2.00   & 2.00   & 2.00   & 2.00   & 2.00   & 2.00   \\ \hline \hline
\multirow{6}{*}{$(0.8,0.8)$}                          & \multirow{3}{*}{Uniform}       & ERM                    & 1.72    & 1.78    & 1.83   & 1.36   & 1.39   & 1.41   & 1.33   & 1.35   & 1.37   & 1.24   & 1.25   & 1.27   & 1.25   & 1.26   & 1.28   \\
                                                      &                                & IG                     & 1.56    & 1.58    & 1.60   & 1.38   & 1.39   & 1.41   & 1.31   & 1.33   & 1.34   & 1.28   & 1.29   & 1.31   & 1.26   & 1.27   & 1.28   \\
                                                      &                                & EEM                    & 2.00    & 2.00    & 2.00   & 2.00   & 2.00   & 2.00   & 2.00   & 2.00   & 2.00   & 2.00   & 2.00   & 2.00   & 2.00   & 2.00   & 2.00   \\ \cline{2-18}
                                                      & \multirow{3}{*}{Beta}     & ERM                    & 2.29    & 2.35    & 2.41   & 1.79   & 1.82   & 1.84   & 1.79   & 1.81   & 1.83   & 1.79   & 1.81   & 1.83   & 1.79   & 1.80   & 1.82   \\
                                                      &                                & IG                     & 1.61    & 1.63    & 1.66   & 1.57   & 1.60   & 1.62   & 1.66   & 1.68   & 1.71   & 1.68   & 1.70   & 1.73   & 1.71   & 1.73   & 1.75   \\
                                                      &                                & EEM                    & 2.00    & 2.00    & 2.00   & 2.00   & 2.00   & 2.00   & 2.00   & 2.00   & 2.00   & 2.00   & 2.00   & 2.00   & 2.00   & 2.00   & 2.00   \\ \hline \hline
\multirow{6}{*}{$(0.9,0.9)$}                          & \multirow{3}{*}{Uniform}       & ERM                    & 1.70    & 1.75    & 1.80   & 1.35   & 1.37   & 1.39   & 1.32   & 1.34   & 1.36   & 1.25   & 1.27   & 1.28   & 1.24   & 1.25   & 1.27   \\
                                                      &                                & IG                     & 2.00    & 2.00    & 2.00   & 1.76   & 1.78   & 1.79   & 1.70   & 1.71   & 1.73   & 1.68   & 1.69   & 1.70   & 1.65   & 1.66   & 1.67   \\
                                                      &                                & EEM                    & 2.00    & 2.00    & 2.00   & 2.00   & 2.00   & 2.00   & 2.00   & 2.00   & 2.00   & 2.00   & 2.00   & 2.00   & 2.00   & 2.00   & 2.00   \\ \cline{2-18}
                                                      & \multirow{3}{*}{Beta}     & ERM                    & 2.20    & 2.28    & 2.35   & 1.61   & 1.65   & 1.68   & 1.53   & 1.55   & 1.58   & 1.52   & 1.54   & 1.57   & 1.49   & 1.51   & 1.53   \\
                                                      &                                & IG                     & 2.00    & 2.00    & 2.00   & 1.67   & 1.68   & 1.70   & 1.55   & 1.57   & 1.58   & 1.52   & 1.53   & 1.55   & 1.49   & 1.50   & 1.52   \\
                                                      &                                & EEM                    & 2.00    & 2.00    & 2.00   & 2.00   & 2.00   & 2.00   & 2.00   & 2.00   & 2.00   & 2.00   & 2.00   & 2.00   & 2.00   & 2.00   & 2.00  \\ \bottomrule
\end{tabular}}
\caption{The Bayesian approximation ratio of ERM, IG, and EEM in the balanced case, i.e. $q_1=q_2$ with respect to the Maximum Cost.
Every row contains the results for different mechanism, $\mu$, and $\vec q$. 
Every column contains the results for different $n\in \{10,\ldots,50\}$.
Every value is computed as the average of $500$ runs.}
\label{tab:max_EQ}
\end{table}

\begin{table}[h]
\centering
\resizebox{\textwidth}{!}{
\begin{tabular}{c|c|c|ccc|ccc|ccc|ccc|ccc}
\toprule
\multirow{2}{*}{p.c.v. $\vec{q}$} & \multirow{2}{*}{Measure $\mu$} & \multirow{2}{*}{Mech.} & \multicolumn{3}{c|}{$n=10$} & \multicolumn{3}{c|}{20} & \multicolumn{3}{c|}{30} & \multicolumn{3}{c|}{40} & \multicolumn{3}{c}{50} \\
                                  &                                &                        & lb      & mean    & ub     & lb     & mean  & ub    & lb     & mean  & ub    & lb     & mean  & ub    & lb     & mean  & ub    \\ \hline  
\multirow{4}{*}{$(0.85,0.75)$}                          & \multirow{2}{*}{Uniform}       & ERM                    & 1.75    & 1.80    & 1.85   & 1.37   & 1.40   & 1.43   & 1.31   & 1.33   & 1.35   & 1.25   & 1.27   & 1.28   & 1.24   & 1.25   & 1.27   \\
                                                      &                                & EEM                    & 2.00    & 2.00    & 2.00   & 2.00   & 2.00   & 2.00   & 2.00   & 2.00   & 2.00   & 2.00   & 2.00   & 2.00   & 2.00   & 2.00   & 2.00   \\ \cline{2-18}
                                                      & \multirow{2}{*}{Beta}     & ERM                    & 2.23    & 2.29    & 2.35   & 1.68   & 1.71   & 1.74   & 1.69   & 1.72   & 1.74   & 1.65   & 1.67   & 1.69   & 1.68   & 1.70   & 1.72   \\
                                                      &                                & EEM                    & 2.00    & 2.00    & 2.00   & 2.00   & 2.00   & 2.00   & 2.00   & 2.00   & 2.00   & 2.00   & 2.00   & 2.00   & 2.00   & 2.00   & 2.00   \\ \hline \hline
\multirow{4}{*}{$(0.8,0.4)$}                        & \multirow{2}{*}{Uniform}       & ERM                    & 2.33    & 2.40    & 2.46   & 2.23   & 2.28   & 2.32   & 2.18   & 2.21   & 2.24   & 2.16   & 2.18   & 2.21   & 2.16   & 2.19   & 2.21   \\ 
                                                      &                                & EEM                    & 2.03    & 2.04    & 2.06   & 2.01   & 2.02   & 2.03   & 2.02   & 2.03   & 2.03   & 2.02   & 2.02   & 2.03   & 2.01   & 2.02   & 2.03   \\ \cline{2-18}
                                                      & \multirow{2}{*}{Beta}     & ERM                    & 2.15    & 2.20    & 2.26   & 1.88   & 1.90   & 1.92   & 1.83   & 1.85   & 1.87   & 1.82   & 1.84   & 1.85   & 1.83   & 1.84   & 1.85   \\
                                                      &                                & EEM                    & 2.02    & 2.03    & 2.05   & 2.00   & 2.01   & 2.01   & 2.00   & 2.00   & 2.00   & 2.00   & 2.00   & 2.00   & 2.00   & 2.00   & 2.00   \\ \hline \hline
\multirow{4}{*}{$(0.85,0.35)$}                        & \multirow{2}{*}{Uniform}       & ERM                    & 2.40    & 2.46    & 2.52   & 2.21   & 2.25   & 2.28   & 2.21   & 2.24   & 2.28   & 2.16   & 2.18   & 2.21   & 2.14   & 2.16   & 2.18   \\
                                                      &                                & EEM                    & 2.03    & 2.04    & 2.05   & 2.03   & 2.04   & 2.05   & 2.02   & 2.03   & 2.04   & 2.02   & 2.02   & 2.03   & 2.02   & 2.03   & 2.03   \\ \cline{2-18}
                                                      & \multirow{2}{*}{Beta}     & ERM                    & 2.17    & 2.22    & 2.28   & 1.93   & 1.96   & 1.98   & 1.84   & 1.85   & 1.87   & 1.80   & 1.82   & 1.84   & 1.78   & 1.80   & 1.81   \\
                                                      &                                & EEM                    & 2.02    & 2.03    & 2.04   & 2.00   & 2.01   & 2.02   & 2.00   & 2.00   & 2.01   & 2.00   & 2.00   & 2.00   & 2.00   & 2.00   & 2.00  \\ \bottomrule
\end{tabular}}
\caption{The Bayesian approximation ratio of ERM and EEM in the unbalanced case, i.e. $q_1\neq q_2$, with respect to the Maximum Cost. 
Every row contains the results for different mechanism, $\mu$, and $\vec q$.
Every column contains the results for different $n\in \{10,\ldots,50\}$.
Every value is computed as the average of $500$ runs.}
\label{tab:max_NE}
\end{table}

\begin{table}[]
\resizebox{\textwidth}{!}{
\begin{tabular}{c|c|ccc|ccc|ccc|ccc|ccc}
\toprule
\multirow{2}{*}{p.c.v. $\vec{q}$} & \multirow{2}{*}{Measure $\mu$} & \multicolumn{3}{c|}{$n=10$} & \multicolumn{3}{c|}{$20$} & \multicolumn{3}{c|}{$30$} & \multicolumn{3}{c|}{$40$} & \multicolumn{3}{c}{$50$} \\
                                  &                                & lb      & mean    & ub     & lb     & mean   & ub     & lb     & mean   & ub     & lb     & mean   & ub     & lb     & mean   & ub     \\ \hline
\multirow{2}{*}{(0.7, 0.7)}                           & Uniform                                            & 0.42   & 0.45   & 0.48   & 0.24   & 0.27   & 0.29   & 0.20   & 0.22   & 0.24   & 0.17   & 0.19   & 0.20   & 0.14   & 0.15   & 0.17   \\
                                                      & Beta                                               & 0.15   & 0.18   & 0.21   & -0.02  & -0.01  & 0.00   & -0.02  & -0.01  & 0.00   & -0.02  & -0.01  & -0.01  & -0.01  & -0.01  & -0.01  \\ \hline
\multirow{2}{*}{(0.8, 0.8)}                           & Uniform                                            & 0.72   & 0.78   & 0.83   & 0.36   & 0.39   & 0.41   & 0.33   & 0.35   & 0.37   & 0.24   & 0.25   & 0.27   & 0.25   & 0.26   & 0.28   \\
                                                      & Beta                                               & 0.14   & 0.18   & 0.21   & -0.11  & -0.09  & -0.08  & -0.10  & -0.09  & -0.08  & -0.10  & -0.09  & -0.09  & -0.11  & -0.10  & -0.09  \\ \hline
\multirow{2}{*}{(0.9, 0.9)}                           & Uniform                                            & 0.70   & 0.75   & 0.80   & 0.35   & 0.37   & 0.39   & 0.32   & 0.34   & 0.36   & 0.25   & 0.27   & 0.28   & 0.24   & 0.25   & 0.27   \\
                                                      & Beta                                               & 0.20   & 0.24   & 0.28   & -0.12  & -0.11  & -0.09  & -0.17  & -0.16  & -0.14  & -0.17  & -0.16  & -0.15  & -0.19  & -0.18  & -0.17  \\ \hline \hline
\multirow{2}{*}{(0.85, 0.75)}                         & Uniform                                            & 0.75   & 0.80   & 0.85   & 0.37   & 0.40   & 0.43   & 0.31   & 0.33   & 0.35   & 0.25   & 0.27   & 0.28   & 0.24   & 0.25   & 0.27   \\
                                                      & Beta                                               & 0.12   & 0.15   & 0.18   & -0.16  & -0.14  & -0.13  & -0.15  & -0.14  & -0.13  & -0.18  & -0.17  & -0.16  & -0.16  & -0.15  & -0.14  \\ \hline
\multirow{2}{*}{(0.8, 0.4)}                           & Uniform                                            & 0.16   & 0.20   & 0.23   & 0.12   & 0.14   & 0.16   & 0.09   & 0.10   & 0.12   & 0.08   & 0.09   & 0.10   & 0.08   & 0.09   & 0.11   \\
                                                      & Beta                                               & 0.08   & 0.10   & 0.13   & -0.06  & -0.05  & -0.04  & -0.08  & -0.07  & -0.07  & -0.09  & -0.08  & -0.07  & -0.09  & -0.08  & -0.07  \\ \hline
\multirow{2}{*}{(0.85, 0.35)}                         & Uniform                                            & 0.20   & 0.23   & 0.26   & 0.11   & 0.12   & 0.14   & 0.11   & 0.12   & 0.14   & 0.08   & 0.09   & 0.10   & 0.07   & 0.08   & 0.09   \\
                                                      & Beta                                               & 0.08   & 0.11   & 0.14   & -0.04  & -0.02  & -0.01  & -0.08  & -0.07  & -0.06  & -0.10  & -0.09  & -0.08  & -0.11  & -0.10  & -0.09   \\ \bottomrule
\end{tabular} }
\caption{The relative error of ERM with respect to the Maximum Cost. For every row, we compute the relative error of the ERM for different $\vec q$, different probability distributions, while for every column we compute the relative error for different $n\in \{10,\dots,50\}$.}
\label{tab:max_RE}
\end{table}

\end{document}